\providecommand{\U}[1]{\protect \rule{.1in}{.1in}}
\newtheorem{theorem}{Theorem}[section]
\newtheorem{assumption}{Assumption}[section]
\newtheorem{corollary}{Corollary}[section]
\newtheorem{lemma}{Lemma}[section]
\newtheorem{proposition}{Proposition}[section]
\numberwithin{equation}{section}
\theoremstyle{definition}
\newtheorem{definition}{Definition}[section]
\newtheorem{example}{Example}[section]
\newtheorem{remark}{Remark}[section]
\newcommand{\be}{\begin{eqnarray}}
	\newcommand{\ee}{\end{eqnarray}}
\newcommand{\by}{\begin{eqnarray*}}
	\newcommand{\ey}{\end{eqnarray*}}
\newcommand{\bn}{\begin{enumerate}}
	\newcommand{\en}{\end{enumerate}}
\newcommand{\bi}{\begin{itemize}}
	\newcommand{\ei}{\end{itemize}}
\newcommand{\one}{\mathbf{1}}
\renewcommand{\geq}{\geqslant}
\renewcommand{\leq}{\leqslant}
\renewcommand{\epsilon}{\varepsilon}
\def \@biblabel#1{\hspace*{-\labelsep}}
\begin{document}
\title{Asymptotic Properties of Generalized Shortfall Risk Measures for Heavy-tailed Risks}

\author{ Tiantian Mao$^{[a]}$~ \thinspace \
	Gilles Stupfler$^{[b]}$~ \thinspace \
	Fan Yang$^{[c]}$ \\
	$[a]$ {\small Department of Statistics and Finance, School of Management,
		School of Data Science}\\
	{\small \ University of Science and Technology of China, Hefei 230026, P. R.
		China}\\
	$[b]$ {\small Univ Angers, CNRS, LAREMA, SFR MATHSTIC, F-49000 Angers, France}\\
	$[c]${\small \ Department of Statistics and Actuarial Science, University of
		Waterloo }\\
	{\small Waterloo, ON N2L 3G1, Canada }}
\date{{\small May 4, 2023}}
\maketitle

\begin{abstract}
We study a general risk measure called the generalized shortfall risk measure,
which was first introduced in \cite{mao2018risk}. It is proposed under the
rank-dependent expected utility framework, or equivalently induced from the cumulative
prospect theory. This risk measure can be flexibly designed to capture the
decision maker's {behavior} toward risks and wealth when measuring risk. In this paper, we derive the first- and second-order asymptotic
expansions for the generalized shortfall risk measure. Our asymptotic results can be viewed as unifying theory for, among others, distortion risk measures and utility-based shortfall risk measures.
 {They also provide a blueprint for the estimation of these measures at extreme levels, and we illustrate this principle by constructing and studying a quantile-based estimator in a special case. The accuracy of the asymptotic expansions and of the estimator is assessed on several numerical examples.}

	\bigskip
		
		\textbf{Keywords}: Generalized shortfall risk measure, Asymptotic expansions, Heavy tails,  {Estimation}
\end{abstract}

\section{Introduction}

In this paper we study extreme value properties of a general risk measure, called the generalized shortfall risk measure and defined as follows. Let $u_{1}$, $u_{2}$ be
{(strictly)} increasing functions on $\mathbb{R}_+$ with $u_{1}(0)=u_{2}(0)=0$, and $h_{1}$
and $h_{2}$ be two distortion functions on {$[0,1]$, supposed to be right-continuous and increasing throughout,}
such that $h_{i}(0)=0$ and $h_{i}(1)=1$ with no jumps at $0$ and $1$. For
a random variable $X$ with distribution function $F$, the generalized
shortfall risk measure, denoted by $x_{\tau}=x_{\tau}(X,u_1,h_1,u_2,h_2)$, is defined as the solution to the following equation:
\begin{align}
\label{eq-eqREDU-2}
    \tau \, \mathrm{H}_{u_{1},\,h_{1}}((X-x)_{+})&=(1-\tau)\, \mathrm{H}%
	_{u_{2},\,h_{2}}((X-x)_{-}), \\ %
\nonumber
	\mbox{where } \mathrm{H}_{u_{1},\,h_{1}}((X-x)_{+})&=\int_{x}^{\infty}u_{1}(y-x)\mathrm{\,d}%
h_{1}(F(y)), \\
\nonumber
    \mbox{and } \mathrm{H}_{u_{2},\,h_{2}}((X-x)_{-})&=\int_{-\infty}^{x}u_{2}(x-y)\mathrm{\,d}%
h_{2}(F(y)).
\end{align}
This problem is written under the appropriate regularity and integrability assumptions making both sides in~\eqref{eq-eqREDU-2} finite and ensuring that the solution is indeed unique; see Section~\ref{first-order} below for a discussion.
The generalized shortfall risk measure was 
first introduced in
\cite{mao2018risk} 
as an extension of the generalized quantile risk measure. The
quantile of random variable $X$, or generalized (left-continuous) inverse function at level
$\tau \in(0,1)$, or Value-at-Risk (VaR), is defined as $F^{\leftarrow}(\tau)=\inf \{x\in \mathbb{R}, F(x)\geq \tau \}$.
It is well known that $F^{\leftarrow}(\tau)$ can also be represented as%
\[
F^{\leftarrow}(\tau)=\underset{x\in \mathbb{R}}{\arg \min}\left \{
\tau \mathbb{E}[(X-x)_{+}]+(1-\tau)\mathbb{E}[(X-x)_{-}]\right \}  ,
\]
where $x_{+}=\max \{x,0\}$ and $x_{-}=\max \{-x,0\}=-\min \{x,0\}$, provided $\mathbb{E}|X|<\infty$. By transforming the
shortfall risk $(X-x)_{+}$ to $\phi_{1}(  (X-x)_{+})  $ and the (in the terminology of~\citealp{mao2018risk}) over-required capital risk $(X-x)_{-}$ to $\phi_{2}(  (X-x)_{-})
$, where $\phi_{1}$, $\phi_{2}$ are increasing convex functions, the
generalized quantile was proposed in \cite{bellini2014generalized} as%
\begin{equation}
	\underset{x\in \mathbb{R}}{\arg \min}\left \{  \tau \mathbb{E}[\phi_{1}(
	(X-x)_{+})  ]+(1-\tau)\mathbb{E}[\phi_{2}(  (X-x)_{-})
	]\right \}  .\label{GI}%
\end{equation}
When $\phi_{1}(x)=\phi_{2}(x)=x^{2}$, the generalized quantile (\ref{GI})
reduces to 
the well-known expectile, which was proposed in
\cite{newey1987asymmetric}. Since both the shortfall risk and over-required capital risk are evaluated under the original probability measure, the
generalized quantile is defined in the sense of the classical expected
utility. \cite{mao2018risk} further generalized it by using the rank-dependent
expected utility (RDEU) to evaluate risks and wealth (see e.g.
\citealp{alma9911678653505162}). To be more specific, letting $\phi_{1}$ and
$\phi_{2}$ be two nondegenerate increasing convex functions on 
$[0,\infty)$, the so-called generalized quantile based on RDEU theory is defined as
\begin{equation}
	\underset{x\in \mathbb{R}}{\arg \min}\left \{  \tau \mathrm{H}_{\phi_{1},h_{1}%
	}((X-x)_{+})+(1-\tau)\mathrm{H}_{\phi_{2},h_{2}}((X-x)_{-})\right \}
	.\label{rdeu}%
\end{equation}
In \cite{mao2018risk}, Proposition 2.2 (ii) shows that if $u_{1}(x)=\phi
_{1}^{\prime}(x)$ and $u_{2}(x)=\phi_{2}^{\prime}(x)$, then the generalized
quantile based on RDEU theory defined in (\ref{rdeu}) coincides with the
generalized shortfall risk measure in (\ref{eq-eqREDU-2}). This shows that
besides the utility functions {that} can be selected, the generalized shortfall risk
measure allows decision makers to choose the appropriate distorted
probability measure to describe their  {behavior} towards risks and
wealth. This brings great flexibility in measuring the risk. 

Further, Theorem 3.1 of \cite{mao2018risk} showed that the generalized shortfall risk measure
is equivalent to the so-called generalized shortfall induced by cumulative prospect theory
(CPT) defined in \eqref{cpt} below when $v$, $h_1$ and $h_2$ are chosen properly.  CPT was
proposed by \cite{TverskyAmos1992AiPT} and has been applied in various areas
such as portfolio selection and pricing insurance contracts; see {\it e.g.}
\cite{schmidt2007linear}, \cite{kaluszka2012mean}, 
\cite{kaluszka2012pricing} and \cite{jin2013greed}. For an increasing continuous function $v$ on
$\mathbb{R}$, the generalized shortfall induced by CPT is defined as
\begin{equation}
	\inf \{x\in \mathbb{R},H_{v,h_{1},h_{2}}(X-x)\leq0\},\label{cpt}%
\end{equation}
where
\[
H_{v,h_{1},h_{2}}(X)=\int_{-\infty}^{0}v(y)\mathrm{\,d}h_{1}(F(y))+\int
_{0}^{\infty}v(y)\mathrm{\,d}h_{2}(F(y)).
\]
The generalized shortfall risk measure understood in the form of
(\ref{cpt}) contains utility-based shortfall risk measures
(\citealp{FollmerSchied+2016}) as special cases.

 {
The study of tail risks and their disastrous consequences in finance has attracted substantial attention and many empirical studies have shown that asset returns in finance and large losses in insurance exhibit heavy tails: see, for example, \cite{loretan1994testing}, \cite{gabaix2003theory}, and \cite{gabaix2009power}. Moreover, regulators such as Basel III have recommended to estimate VaR with a confidence level very close to $1$. In the same spirit, we are interested in the behavior of the generalized shortfall risk measure for heavy-tailed risks when the confidence level $\tau$ is close to $1$. However, since the generalized shortfall risk measure extends in particular the expectile, for which no closed form is available in general, no simple explicit expression of $x_{\tau}$ is available, which makes the study of the risk measure for heavy-tailed risks difficult. Asymptotic expansions of risk measures, in terms of the quantile of the random variable of interest (viewed as a well-understood risk measure), provide an intuitive way to study extreme risk measures for heavy-tailed risks; see for example, the asymptotic expansions of the Haezendonck--Goovaerts risk measure in \cite{tang2012haezendonck} and \cite{mao2012second}, the conditional tail expectation in \cite{hua2011second} and \cite{hua2014strength}, the expectiles in \cite{bellini2014generalized} and \cite{mao2015asymptotic}, the risk concentration based on expectiles in \cite{mao2015risk}.} 

 {It is precisely the objective of this paper to study the first- and second-order asymptotic expansions of $x_{\tau}$ for a heavy-tailed random variable $X$ as the confidence level $\tau$ converges to $1$. From the technical point of view, the methodology used in this paper to derive the expansions is very general, in the sense that it can be applied to derive the asymptotic expansions of other quantile-based risk measures. A potential statistical benefit of such results is that, while the lack of a simple explicit expression of $x_{\tau}$ makes the estimation and practical use of $x_{\tau}$ difficult, an asymptotic expansion in terms of extreme quantiles is helpful in studying the asymptotic behavior of simple plug-in estimators of $x_{\tau}$ at extreme levels (see the so-called indirect estimator of~\citealp{daouia2018estimation}). Such expansions also allow to quantify bias terms and are fundamental in the derivation of asymptotic normality results for estimators at extreme levels. This estimation approach of an extreme risk measure has been adopted for, among others, the estimation of the marginal expected shortfall in \cite{cai2015estimation}, expectiles in \cite{daouia2018estimation}, M-quantiles in \cite{daouia2019Lpestimation} and the Haezendonck–Goovaerts risk measure in \cite{zhao2021estimation}. We illustrate that in this article by studying the asymptotic properties of an estimator of $x_{\tau}$ (with $\tau=\tau_n\uparrow 1$ as the size $n$ of the available sample of data tends to infinity) based on extreme quantiles of a distorted version of the underlying distribution. Our high-level result may be valid even when serial dependence is present in the data, as long as the observations come from a strictly stationary sequence.}

 {The rest of the paper is organized as follows. Section \ref{prl} 
provides necessary technical background on regular variation. In Sections \ref{first-order} and \ref{second-order}, we derive the
first- and second-order expansions of the generalized shortfall risk measure,
respectively. Section~\ref{sec:estim} discusses the estimation of the generalized shortfall risk measure at extreme levels. In Section \ref{example}, we 
give a couple of examples where our theory applies and we discuss a small-scale simulation study illustrating the performance of our estimator. All the proofs are relegated to Section \ref{proof}.}

\section{Regular variation}\label{prl}

We start by introducing regular variation conditions that will be the backbone of our model on risk variables. 

%
\begin{definition}
An eventually nonnegative measurable function $f(\cdot)$ is said to be
	\emph{regularly varying} at $\infty$ with index $\alpha\in\mathbb{R}$, if for all
	$x>0$,
	\begin{equation}
	\lim_{t\rightarrow \infty}\frac{f(tx)}{f(t)}=x^{\alpha}.
	\label{regular varying}%
	\end{equation}
	We write $f(\cdot)\in\mathrm{RV}_{\alpha}$.
\end{definition}
%
For a random variable $X$ with distribution function $F$, we 
say that $X$ is regularly varying with extreme value index $\gamma>0$ 
if its survival function $\overline{F}=1-F$ is regularly varying with index 
$-1/\gamma$. This is also denoted by $X\in{\rm RV}_{-1/\gamma}$. 

 For a distribution function $F$, its (left-continuous inverse) quantile function is defined as
$F^{\leftarrow}(p)=\inf\{x\in\mathbb{R}:F(x)\geq p\}$ for $p\in(0,1)$. The
\emph{tail quantile function} $U(\cdot)$ of $F$ is defined as $$U(t
)=\left(\frac1{\overline{F}}\right)^{\leftarrow}(t)=F^{\leftarrow}
\left(1-\frac1t\right),~~{t>1}.$$ The RV
definition of a survival function $\overline{F}=1-F$ can be equivalently presented in terms of the tail quantile function $U$ 
by requiring that $U\in\mathrm{RV}_{\gamma}$, that is, 
\[
\forall x>0, \ \lim_{t\rightarrow \infty}\frac{U(tx)}{U(t)}=x^{\gamma}.
\]
This assumption connects tail quantiles to arbitrarily extreme quantiles further away in the right tail through the approximation $F^{\leftarrow}(p')\approx \lbrack(1-p')/(1-p)]^{-\gamma}F^{\leftarrow}(p)$, for
$0<p<p'<1$ both close to 1.%

In practice it is necessary to quantify the bias incurred through the use of this approximation. This is typically done thanks to a second-order regular variation condition, itself most easily written using the concept of extended regular variation, which we recall below.


\begin{definition}
	A measurable function $f(\cdot)$ on $(0,\infty)$ is said to be
	\emph{extended regularly varying} at $\infty$, with an index $\gamma
	\in\mathbb{R}$ and an auxiliary function $a(\cdot)$ having constant sign, if for all $x>0$,
	\begin{equation}
	\lim_{t\rightarrow\infty}\frac{f(tx)-f(t)}{a(t)}=\frac{x^{\gamma}-1}{\gamma}.
	\label{ERV}%
	\end{equation}
When $\gamma=0$, the limit $(x^{\gamma}-1)/{\gamma}$ is understood as $\log x$.	Denote this by $f(\cdot)\in\mathrm{ERV}_{\gamma}$.
\end{definition}

Compared to the definition of extended regular variation in Section~B.2 of~\cite{de2006extreme}, we absorb the potential multiplicative constant appearing in the right-hand side into the function $a$. This results in a simpler limit, but the auxiliary function is allowed to be negative.

This allows us to introduce second-order regular variation as a special case of extended regular variation through the following definition.



\begin{definition}\label{2rvdef}
	A regularly varying function $f(\cdot)$ is said to be \emph{second-order
		regularly varying} at $\infty$ with first-order index
	$\gamma\in\mathbb{R}$ and second-order index $\rho\leq 0$, if there exists a
	measurable function $A(\cdot)$, which does not change sign eventually and
	converges to $0$, such that $t\mapsto t^{-\gamma} f(t)\in \mathrm{ERV}_{\rho}$ with auxiliary function $a:t\mapsto t^{-\gamma} f(t) A(t)$. In other words, 
	\begin{equation}
	\lim_{t\to \infty}\frac{f(tx)/f(t)-x^{\gamma}}{A(t)}=x^{\gamma}%
	\frac{x^{\rho}-1}{\rho}=:J_{\gamma,\rho}(x). \label{eq-d2}%
	\end{equation}
	When $\rho=0$, $J_{\gamma,\rho}(x)$ is understood as $x^{\gamma}\log x$.
	We write $f(\cdot
	)\in\mathrm{2RV}_{\gamma,\rho}$ and $A$ is called (second-order) auxiliary function.
\end{definition}

It is worth noting that each of the convergences in \eqref{regular varying},
(\ref{ERV}) and \eqref{eq-d2} is uniform with respect to $x$ in any compact
subset of {$(0,\infty)$:} 
see, for example, Theorem B.1.4 and Theorem B.2.9
of \cite{de2006extreme}. In our specific context of heavy-tailed distributions, Theorem 2.3.9 of \cite{de2006extreme} shows that for $\gamma>0$ and $\rho \leq0$,
$U(\cdot)\in2\mathrm{RV}_{\gamma,\rho}$ with an auxiliary function $A(\cdot)$
if and only if $\overline{F}(\cdot)\in2\mathrm{RV}_{-1/\gamma,\rho/\gamma}$
with an auxiliary function $A(1/\overline{F}(\cdot))$. In this case necessarily $A(\cdot)\in\mathrm{RV}_{\rho}$.

Lastly, we present two useful expansions of a 2RV function and its inverse
function, when the second-order parameter is negative; see {the proof of Lemma 3 in \cite{hua2011second} for Lemma \ref{lm-2RV-rep-inv} (i) and Proposition 2.5 of \cite{mao2012second} for Lemma \ref{lm-2RV-rep-inv} (ii).}
{
\begin{lemma} \label{lm-2RV-rep-inv} Let 
$\gamma\in \mathbb{R}$, $\rho<0$ and $A$ be a measurable function having constant sign. 
\begin{enumerate}[label=(\roman*)] 
\item Then $h \in 2 \mathrm{RV}_{\gamma, \rho}$ with auxiliary function $A(\cdot)$ if and only if there exists a constant $c>0$ such that
\[
h(t)=c t^{\gamma}\left[1+\frac{1}{\rho} A(t)+o(A(t))\right], \ t \rightarrow \infty.
\]
\end{enumerate}

\begin{enumerate}[label=(\roman*),resume] 
\item {Then, when $\gamma>0$ and with the notation of $(i)$,} $h^{\leftarrow}$ has the following representation:
\[
h^{\leftarrow}(t)=c^{-1 / \gamma} t^{1 / \gamma}\left[1-\frac{1}{\gamma \rho} A(h^{\leftarrow}(t))+o(A(h^{\leftarrow}(t)))\right], \ t \rightarrow \infty.
\]
 {In particular $h^{\leftarrow} \in 2 \mathrm{RV}_{1/\gamma, \rho/\gamma}$.}
\end{enumerate}
\end{lemma}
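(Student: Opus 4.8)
The two parts should be handled separately. For part (i), the plan is to start from the defining limit \eqref{eq-d2} of the class $2\mathrm{RV}_{\gamma,\rho}$. Since $h\in\mathrm{RV}_\gamma$, the function $g(t):=t^{-\gamma}h(t)$ is slowly varying; by Definition \ref{2rvdef}, $g\in\mathrm{ERV}_\rho$ with auxiliary function $a(t)=g(t)A(t)$, so that $a(t)/g(t)=A(t)\to 0$. The key observation is that for $\rho<0$ the limit $g(\infty):=\lim_{t\to\infty}g(t)$ exists, is finite and strictly positive: this follows because $(g(tx)-g(t))/a(t)\to (x^\rho-1)/\rho$ and, $a$ being regularly varying with negative index $\rho$ (a standard consequence of the ERV representation, cf.\ Theorem B.2.2 in \cite{de2006extreme}), the increments $g(tx)-g(t)$ are integrable at infinity. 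Setting $c:=g(\infty)$ and writing $g(t)=c-(g(\infty)-g(t))$, one uses the integral representation of an ERV function to identify $g(\infty)-g(t)\sim -a(t)/\rho = -g(t)A(t)/\rho$, which after dividing by $c$ and using $g(t)\to c$ gives $g(t)=c[1+\rho^{-1}A(t)+o(A(t))]$, i.e.\ the claimed expansion for $h(t)=t^\gamma g(t)$. The converse direction is a direct verification: plugging the stated representation into the ratio $h(tx)/h(t)$ and expanding to first order in $A$ recovers \eqref{eq-d2}, using that $A\in\mathrm{RV}_\rho$ so $A(tx)/A(t)\to x^\rho$.

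For part (ii), assume now $\gamma>0$, so $h$ is eventually strictly increasing to $\infty$ and $h^\leftarrow$ is well defined near $\infty$. The plan is to invert the asymptotic identity $h(s)=cs^\gamma[1+\rho^{-1}A(s)+o(A(s))]$ from part (i). Substituting $s=h^\leftarrow(t)$ gives $t=c\,h^\leftarrow(t)^\gamma[1+\rho^{-1}A(h^\leftarrow(t))+o(A(h^\leftarrow(t)))]$; solving for $h^\leftarrow(t)^\gamma$ yields $h^\leftarrow(t)^\gamma = c^{-1}t[1+\rho^{-1}A(h^\leftarrow(t))+o(A(h^\leftarrow(t)))]^{-1} = c^{-1}t[1-\rho^{-1}A(h^\leftarrow(t))+o(A(h^\leftarrow(t)))]$. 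Raising to the power $1/\gamma$ and using the first-order expansion $(1+\delta)^{1/\gamma}=1+\delta/\gamma+o(\delta)$ with $\delta = -\rho^{-1}A(h^\leftarrow(t))\to 0$ gives exactly
\[
h^\leftarrow(t)=c^{-1/\gamma}t^{1/\gamma}\left[1-\frac{1}{\gamma\rho}A(h^\leftarrow(t))+o(A(h^\leftarrow(t)))\right].
\]
Finally, to conclude $h^\leftarrow\in 2\mathrm{RV}_{1/\gamma,\rho/\gamma}$, note that the leading term shows $h^\leftarrow\in\mathrm{RV}_{1/\gamma}$, hence $h^\leftarrow(t)\sim c^{-1/\gamma}t^{1/\gamma}$ and $A(h^\leftarrow(t))\sim A(c^{-1/\gamma}t^{1/\gamma})$, which is regularly varying in $t$ with index $\rho/\gamma$ because $A\in\mathrm{RV}_\rho$; one then checks the ratio $t\mapsto (t^{-1/\gamma}h^\leftarrow(t))$ against the definition of $\mathrm{ERV}_{\rho/\gamma}$ using the just-derived expansion, or simply invokes the equivalence between the representation in (i) and membership in the $2\mathrm{RV}$ class applied to $h^\leftarrow$ with auxiliary function $\widetilde A(t):= -\gamma^{-1} A(h^\leftarrow(t))\cdot$(constant), which is $\mathrm{RV}_{\rho/\gamma}$.

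The main obstacle is the first, analytic step of part (i): rigorously justifying that $g(t)=t^{-\gamma}h(t)$ converges to a finite positive constant and that the remainder $g(\infty)-g(t)$ is asymptotically equivalent to $-a(t)/\rho$. This is where the sign condition $\rho<0$ is essential (for $\rho=0$ the limit need not exist and the statement fails), and it requires the Karamata-type integral representation of ERV functions together with a uniform convergence argument to control the tail integral $\int_t^\infty a(s)\,\frac{\d s}{s}$; everything after that is elementary Taylor expansion. Since the excerpt already cites \cite{hua2011second} and \cite{mao2012second} for these two parts, I would in practice reproduce their arguments, but the above is the self-contained route.
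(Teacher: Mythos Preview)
Your proposal is correct and follows the standard route; the paper itself does not prove this lemma, instead citing Lemma~3 of \cite{hua2011second} for part~(i) and Proposition~2.5 of \cite{mao2012second} for part~(ii), and your sketch is precisely the argument those references carry out. One small imprecision: the assertion that ``$\gamma>0$ implies $h$ is eventually strictly increasing'' is not true for a general $2\mathrm{RV}$ function, but your argument does not actually need it. What you need is $h(h^{\leftarrow}(t))=t\,[1+o(A(h^{\leftarrow}(t)))]$, and this follows directly from the representation in~(i): writing $s_t=h^{\leftarrow}(t)$, one has $h(s)<t$ for $s<s_t$ and $h(s)\ge t$ for some $s$ arbitrarily close to $s_t$ from above, so both $h(s_t^{-})$ and $h(s_t^{+})$ (in the $\limsup/\liminf$ sense) equal $cs_t^{\gamma}[1+\rho^{-1}A(s_t)+o(A(s_t))]$ by the continuity of the leading term $ct^{\gamma}$, forcing $t=cs_t^{\gamma}[1+\rho^{-1}A(s_t)+o(A(s_t))]$ as you wrote. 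Everything else in your plan is exactly right.
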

}
\section{First-order expansions of  generalized shortfall risk measures} \label{first-order}

In this section, we study the first-order asymptotics of generalized shortfall risk measures. All the proofs are relegated to Section \ref{proof}.


{The first key observation is that, in the heavy-tailed setting, the risk measure $x_{\tau}$ is an increasing function of $\tau$ and diverges to $+\infty$. Along with mild conditions for existence and uniqueness of $x_{\tau}$ as a solution of~\eqref{eq-eqREDU-2}, this is the essential message of the following result, in which we say that a random variable $X$ is nondegenerate if it is not constant.
\begin{proposition}
\label{prop-xtau}
Let $x_{\star}=\inf\{ x\in \mathbb{R}, F(x)>0 \}$ and $x^{\star}=\inf\{ x\in \mathbb{R}, F(x)\geq 1 \}$ denote the left and right endpoints of $X$, assumed to be a nondegenerate random variable, so that $x_{\star}<x^{\star}$.
\begin{enumerate}[label=(\roman*)] 
    \item If the quantities $\mathrm{H}_{u_{1},\,h_{1}}((X-x)_{+})$ and $\mathrm{H}_{u_{2},\,h_{2}}((X-x)_{-})$ define continuous finite functions of $x\in (x_{\star},x^{\star})$ then, for any $\tau\in (0,1)$, Equation~\eqref{eq-eqREDU-2} has a unique and finite solution $x_{\tau}$.
    \item If the quantities $\mathrm{H}_{u_{1},\,h_{1}}((X-x)_{+})$ and $\mathrm{H}_{u_{2},\,h_{2}}((X-x)_{-})$ are finite when $x\in (x_{\star},x^{\star})$ and Equation~\eqref{eq-eqREDU-2} has a unique and finite solution $x_{\tau}$, then $x_{\tau}<x^{\star}$ when $\tau<1$, $\tau\in (0,1) \mapsto x_{\tau} \in \mathbb{R}$ is nondecreasing, and $\lim_{\tau\uparrow 1} x_{\tau} = x^{\star}$.
    \item Suppose that: 
    \begin{itemize}
        \item $\overline{F}\in \mathrm{RV}_{-1/\gamma}$ with $\gamma>0$, 
        \item $u_{1}$ is continuous on $[0,\infty)$ and $u_{1}\in \mathrm{RV}_{\alpha_{1}}$ with $\alpha_{1}>0$, 
        \item $1-h_{1}( 1-1/\cdot )\in \mathrm{RV}_{-\beta_{1}}$ with $\beta_{1}>0$, 
        \item $u_{2}$ is continuous on $[0,\infty)$ and $u_{2}\in \mathrm{RV}_{\alpha_{2}}$ with $\alpha_{2}>0$. 
    \end{itemize}
    Assume further that $\beta_{1}/\gamma>\alpha_{1}$ and $\int_{-\infty}^{\infty}|z|^{\alpha_{2}+\delta}\mathrm{\,d}h_{2}(F(z))<\infty$ for some $\delta>0$. Then $x_{\tau}$ exists and is unique for any $\tau\in (0,1)$, $\tau\in (0,1) \mapsto x_{\tau} \in \mathbb{R}$ is nondecreasing, and $\lim_{\tau\uparrow 1} x_{\tau} = +\infty$.
\end{enumerate}
\end{proposition}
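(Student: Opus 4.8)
The plan is to recast the defining equation~\eqref{eq-eqREDU-2} through the single function
\[
\Phi_{\tau}(x):=(1-\tau)\,\mathrm{H}_{u_{2},\,h_{2}}((X-x)_{-})-\tau\,\mathrm{H}_{u_{1},\,h_{1}}((X-x)_{+}),\qquad x\in(x_{\star},x^{\star}),
\]
whose zeros are exactly the solutions $x_{\tau}$ of~\eqref{eq-eqREDU-2}. Since $u_{1}$ and $u_{2}$ are increasing, the integrand $u_{1}((y-x)_{+})$ is nonincreasing in $x$ and $u_{2}((x-y)_{+})$ is nondecreasing in $x$, so $x\mapsto\mathrm{H}_{u_{1},\,h_{1}}((X-x)_{+})$ is nonincreasing, $x\mapsto\mathrm{H}_{u_{2},\,h_{2}}((X-x)_{-})$ is nondecreasing, and $\Phi_{\tau}$ is nondecreasing. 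Using that $u_{1},u_{2},h_{1},h_{2}$ are (strictly) increasing and that $X$ is nondegenerate, one checks that $h_{1}\circ F$ charges $(x,\infty)$ and $h_{2}\circ F$ charges $(-\infty,x)$ for every $x\in(x_{\star},x^{\star})$, which upgrades these monotonicities to \emph{strict} monotonicity on $(x_{\star},x^{\star})$; in particular $\Phi_{\tau}$ is strictly increasing there. For (i), the assumed continuity then makes $\Phi_{\tau}$ continuous and strictly increasing on $(x_{\star},x^{\star})$, so it suffices to exhibit a sign change: as $x\downarrow x_{\star}$, monotone convergence gives $\mathrm{H}_{u_{2},\,h_{2}}((X-x)_{-})\to 0$ while $\mathrm{H}_{u_{1},\,h_{1}}((X-x)_{+})$ stays positive, hence $\Phi_{\tau}<0$ near $x_{\star}$; symmetrically $\mathrm{H}_{u_{1},\,h_{1}}((X-x)_{+})\to 0$ and $\mathrm{H}_{u_{2},\,h_{2}}((X-x)_{-})$ stays positive as $x\uparrow x^{\star}$, hence $\Phi_{\tau}>0$ near $x^{\star}$. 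The intermediate value theorem then yields a zero, unique by strict monotonicity.

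For (ii), I would keep using $\Phi_{\tau}$ and add its monotonicity in $\tau$: for $\tau<\tau'$ the pointwise identity $\Phi_{\tau'}(x)-\Phi_{\tau}(x)=-(\tau'-\tau)\big(\mathrm{H}_{u_{1},\,h_{1}}((X-x)_{+})+\mathrm{H}_{u_{2},\,h_{2}}((X-x)_{-})\big)\leq 0$ gives $\Phi_{\tau'}(x_{\tau})\leq\Phi_{\tau}(x_{\tau})=0$; since $\Phi_{\tau'}$ is nondecreasing with $x_{\tau'}$ as its unique zero (hence negative strictly to the left and positive strictly to the right of $x_{\tau'}$), this forces $x_{\tau}\leq x_{\tau'}$. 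The bound $x_{\tau}<x^{\star}$ is immediate because for $x\geq x^{\star}$ we have $(X-x)_{+}=0$ a.s., so $\Phi_{\tau}(x)=(1-\tau)\,\mathrm{H}_{u_{2},\,h_{2}}((X-x)_{-})>0$ when $\tau<1$ (nondegeneracy of $X$), and no zero can sit at or beyond $x^{\star}$. Finally, $x_{\tau}$ being nondecreasing and bounded above by $x^{\star}$, its limit $x_{\infty}\leq x^{\star}$ exists; if $x_{\infty}<x^{\star}$ then $x_{\infty}\in(x_{\star},x^{\star})$ and $\Phi_{\tau}(x_{\infty})\geq\Phi_{\tau}(x_{\tau})=0$ (monotonicity of $\Phi_{\tau}$ and $x_{\tau}\leq x_{\infty}$), that is, $(1-\tau)\,\mathrm{H}_{u_{2},\,h_{2}}((X-x_{\infty})_{-})\geq\tau\,\mathrm{H}_{u_{1},\,h_{1}}((X-x_{\infty})_{+})>0$, so $\tfrac{1-\tau}{\tau}\geq\mathrm{H}_{u_{1},\,h_{1}}((X-x_{\infty})_{+})/\mathrm{H}_{u_{2},\,h_{2}}((X-x_{\infty})_{-})$, a fixed positive constant, contradicting $\tfrac{1-\tau}{\tau}\to 0$. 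Hence $x_{\infty}=x^{\star}$.

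For (iii), the plan is to verify the hypotheses of (i) and (ii) and then invoke them. First, $\overline{F}\in\mathrm{RV}_{-1/\gamma}$ with $\gamma>0$ forces $x^{\star}=+\infty$ and $X$ nondegenerate, so it is enough to show that $\mathrm{H}_{u_{1},\,h_{1}}((X-x)_{+})$ and $\mathrm{H}_{u_{2},\,h_{2}}((X-x)_{-})$ are finite and continuous on $(x_{\star},\infty)$ and that~\eqref{eq-eqREDU-2} has a unique solution; (i) then gives existence and uniqueness, and (ii) gives monotonicity in $\tau$ and $x_{\tau}\to x^{\star}=+\infty$. The negative side is controlled directly: $u_{2}\in\mathrm{RV}_{\alpha_{2}}$ continuous on $[0,\infty)$ gives, by Potter's inequality, $u_{2}(t)\leq C(1+t^{\alpha_{2}+\epsilon})$ for any small $\epsilon$, so taking $\epsilon<\delta$ and using $\int|z|^{\alpha_{2}+\delta}\,\mathrm{d}h_{2}(F(z))<\infty$ makes $\mathrm{H}_{u_{2},\,h_{2}}((X-x)_{-})$ finite, and dominated convergence makes it continuous in $x$. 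The positive side is the heart of the matter: I would first identify the tail of the distorted distribution $h_{1}\circ F$. Writing $1-h_{1}(F(y))=g\big(1/\overline{F}(y)\big)$ with $g(t):=1-h_{1}(1-1/t)\in\mathrm{RV}_{-\beta_{1}}$ and $1/\overline{F}\in\mathrm{RV}_{1/\gamma}$ (diverging to $\infty$), the composition theorem for regularly varying functions gives $1-h_{1}(F(\cdot))\in\mathrm{RV}_{-\beta_{1}/\gamma}$; consequently $\E[(Y_{+})^{p}]<\infty$ for every $p<\beta_{1}/\gamma$, where $Y\sim h_{1}\circ F$. Since $u_{1}\in\mathrm{RV}_{\alpha_{1}}$ with $\alpha_{1}<\beta_{1}/\gamma$, Potter's inequality again yields $u_{1}(t)\leq C(1+t^{\alpha_{1}+\epsilon})$ for some $\epsilon>0$ with $\alpha_{1}+\epsilon<\beta_{1}/\gamma$, so $\mathrm{H}_{u_{1},\,h_{1}}((X-x)_{+})=\E\big[u_{1}((Y-x)_{+})\big]$ is finite and, by dominated convergence, continuous in $x$. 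This is exactly where the hypothesis $\beta_{1}/\gamma>\alpha_{1}$ is used, as the integrability threshold for the distorted positive part.

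It remains to establish uniqueness in (iii). Since $x^{\star}=+\infty$ and $h_{1}$ is strictly increasing, $\mathrm{H}_{u_{1},\,h_{1}}((X-x)_{+})>0$ for every $x$, is strictly decreasing, and tends to $0$ as $x\to\infty$ by dominated convergence; meanwhile $\mathrm{H}_{u_{2},\,h_{2}}((X-x)_{-})$ is nondecreasing, vanishes on an initial segment $(x_{\star},b]$ (with $b$ the left endpoint of the support of $h_{2}\circ F$), is strictly increasing on $(b,\infty)$, and tends to $+\infty$ as $x\to\infty$ by Fatou's lemma (as $u_{2}\in\mathrm{RV}_{\alpha_{2}}$ with $\alpha_{2}>0$). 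Hence the ratio $\mathrm{H}_{u_{2},\,h_{2}}((X-x)_{-})/\mathrm{H}_{u_{1},\,h_{1}}((X-x)_{+})$ is $0$ on $(x_{\star},b]$ and increases continuously and strictly from $0$ to $+\infty$ on $(b,\infty)$, so it equals $\tau/(1-\tau)$ at exactly one point, namely $x_{\tau}$. I expect the main obstacle to be concentrated in this paragraph and the previous one: correctly pinning down the regular-variation index of the distorted survival function $1-h_{1}(F(\cdot))$ and bookkeeping the Potter bounds so that the single condition $\beta_{1}/\gamma>\alpha_{1}$ suffices for finiteness of the positive-side functional. Once these are in place, the monotonicity and divergence statements, being purely about the sign structure of $\Phi_{\tau}$, follow softly.
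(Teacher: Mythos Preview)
Your proof is correct and follows essentially the same route as the paper: strict monotonicity of both $\mathrm{H}$-functionals on $(x_{\star},x^{\star})$, a sign-change plus intermediate-value argument for (i), monotonicity in $\tau$ for (ii) (the paper argues this by contradiction rather than your direct inequality $\Phi_{\tau'}\leq\Phi_{\tau}$, but the content is identical), and Potter bounds together with the composition rule $1-h_{1}(F(\cdot))\in\mathrm{RV}_{-\beta_{1}/\gamma}$ to verify finiteness and continuity in (iii). Your final paragraph re-deriving uniqueness in (iii) via the ratio $\mathrm{H}_{u_{2},\,h_{2}}/\mathrm{H}_{u_{1},\,h_{1}}$ is redundant---once finiteness and continuity are in hand, (i) already delivers uniqueness---but it is not wrong.
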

\begin{remark}
	\label{rmk-existence} Under the regular variation conditions in (iii) above, $\beta_{1}/\gamma\geq \alpha_{1}$ is a necessary condition for the existence of $\mathrm{H}_{u_{1},\,h_{1}}((X-x)_{+})$. Indeed 
	\[
	\mathrm{H}_{u_{1},\,h_{1}}((X-x)_{+})=-\int_{z=0}^{\infty}u_{1}(z)\mathrm{\,d}%
(1-h_{1}(1-1/(1/\overline{F}(x+z)))). 
	\]
	Since $u_{1}\in \mathrm{RV}_{\alpha_{1}}$ and $ 1-h_{1}(1-1/(1/\overline{F}(x+\cdot)))\in \mathrm{RV}_{-\beta_{1}/\gamma}$, it follows that  $\mathrm{H}_{u_{1},\,h_{1}}	((X-x)_{+})$ can only be finite if $\alpha_1-\beta_1/\gamma\leq 0$, that is, $\beta_{1}/\gamma\geq \alpha_{1}$.
	\vskip1ex
	\noindent
    Besides, the assumption that $\int_{-\infty}^{\infty}|z|^{\alpha_{2}+\delta}\mathrm{\,d}h_{2}(F(z))<\infty$ is required to control the left tail behavior of the function $F$. An inspection of the proof of Lemma~\ref{lm-sides}(ii) reveals that this assumption can be weakened to $\int_{-\infty}^{\infty}|z|^{\alpha_2}\mathrm{\,d}h_{2}(F(z))<\infty$ if $u_2(y)$ is asymptotically equivalent to a multiple of $y^{\alpha_2}$ as $y\to\infty$. For $u_2(y)=y$ and $h_2(x)=x$, we find the condition  {$\mathbb{E}(|X|)<\infty$}, which is exactly the condition necessary and sufficient for the existence of expectiles.
\end{remark}
In the remainder of this paper we implicitly assume that the problem of which $x_{\tau}$ is solution is indeed well-defined and has a unique finite solution; by Proposition~\ref{prop-xtau}, if appropriate regular variation conditions on the functions involved are met, this will be guaranteed by simply assuming that $u_1$ and $u_2$ are continuous on $[0,\infty)$. We proceed by deriving, under this regularity assumption,} the asymptotic expansions of each side of \eqref{eq-eqREDU-2}, as $x\rightarrow\infty$. 
\begin{lemma}
\label{lm-sides}
Assume that $\overline{F}\in \mathrm{RV}_{-1/\gamma}$ with $\gamma>0$ and denote by $\mathrm{B}(\cdot,\cdot)$ the Beta function, that is, $\mathrm{B}(a,b)=\int_0^1 z^{a-1} (1-z)^{b-1}  \, \mathrm{d}z$ for any $a,b>0$. 
\begin{enumerate}[label=(\roman*)] 
    \item Assume $u_{1}\in \mathrm{RV}_{\alpha_{1}}$ for $\alpha_{1}>0$, $1-h_{1}( 1-1/\cdot )\in \mathrm{RV}_{-\beta_{1}}$ with $\beta_{1}>0$, and $\beta_{1}/\gamma>\alpha_{1}$. Then 
    \[
    \lim_{x\rightarrow \infty}\frac{\mathrm{H}_{u_{1},\,h_{1}}((X-x)_{+})}{\left( 1-h_{1}(F(x))\right)  u_{1}(x)}= \frac{\beta_{1}}{\gamma} \mathrm{B}(\beta_{1}/\gamma-\alpha_{1},\alpha_{1}+1) =: \Delta_{0}.
    \]
    \item Assume $u_{2}\in \mathrm{RV}_{\alpha_{2}}$ with $\alpha_{2}>0$, and
$\int_{-\infty}^{\infty}|z|^{\alpha_{2}+\delta}\mathrm{\,d}h_{2}(F(z))<\infty$
for some $\delta>0$. Then
	\[
	\lim_{x\to\infty} \frac{\mathrm{H}_{u_{2},\,h_{2}}((X-x)_{-})}{u_{2}(x)} = 1.
	\]
\end{enumerate}
\end{lemma}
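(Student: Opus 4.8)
The plan is to handle both parts through the same mechanism: rewrite the integral of interest as the integral of a ratio of regularly varying functions against a regularly varying (respectively, a probability) measure, identify the pointwise limit of the integrand via Karamata's uniform convergence theorem, and pass to the limit with the help of a dominating function provided by Potter's inequalities.

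For part~(i), I would first substitute $y=x+z$, so that $\mathrm{H}_{u_{1},h_{1}}((X-x)_{+})=\int_{0}^{\infty}u_{1}(z)\,\mathrm{d}_{z}h_{1}(F(x+z))$. Writing $1-h_{1}(F(\cdot))=g(1/\overline{F}(\cdot))$ with $g=1-h_{1}(1-1/\cdot)\in\mathrm{RV}_{-\beta_{1}}$ and $1/\overline{F}\in\mathrm{RV}_{1/\gamma}$, the composition rule for regular variation gives $1-h_{1}(F(\cdot))\in\mathrm{RV}_{-\beta_{1}/\gamma}$; since $u_{1}(0)=0$ and, by $\beta_{1}/\gamma>\alpha_{1}$, the product $u_{1}(z)(1-h_{1}(F(x+z)))$ tends to $0$ as $z\to\infty$, an integration by parts (Fubini's theorem) yields
\[
\mathrm{H}_{u_{1},h_{1}}((X-x)_{+})=\int_{0}^{\infty}\bigl(1-h_{1}(F(x+z))\bigr)\,\mathrm{d}u_{1}(z).
\]
Dividing by $(1-h_{1}(F(x)))\,u_{1}(x)$ and substituting $z=xt$ then gives
\[
\frac{\mathrm{H}_{u_{1},h_{1}}((X-x)_{+})}{(1-h_{1}(F(x)))\,u_{1}(x)}=\int_{0}^{\infty}\frac{1-h_{1}(F(x(1+t)))}{1-h_{1}(F(x))}\,\mathrm{d}_{t}\!\left(\frac{u_{1}(xt)}{u_{1}(x)}\right).
\]
By the uniform convergence theorem the first factor tends to $(1+t)^{-\beta_{1}/\gamma}$ locally uniformly in $t\geq 0$, while the monotone maps $t\mapsto u_{1}(xt)/u_{1}(x)$ converge pointwise to $t^{\alpha_{1}}$, so the associated Stieltjes measures converge vaguely on $(0,\infty)$ to $\alpha_{1}t^{\alpha_{1}-1}\,\mathrm{d}t$. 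On a bounded interval $[0,T]$ one passes to the limit using these two facts; for the tail, Potter's inequalities bound the first factor by $C(1+t)^{-\beta_{1}/\gamma+\epsilon}$ and yield $u_{1}(xt)/u_{1}(x)\leq Ct^{\alpha_{1}+\epsilon}$ for $t\geq 1$ and $x$ large, whence a further integration by parts shows the $\int_{T}^{\infty}$ part to be $O(T^{\alpha_{1}-\beta_{1}/\gamma+2\epsilon})$ uniformly in large $x$, which is $o(1)$ as $T\to\infty$ since $\alpha_{1}<\beta_{1}/\gamma$. Therefore the limit equals $\alpha_{1}\int_{0}^{\infty}(1+t)^{-\beta_{1}/\gamma}t^{\alpha_{1}-1}\,\mathrm{d}t$, which the substitution $t\mapsto t/(1+t)$ identifies with $\alpha_{1}\mathrm{B}(\alpha_{1},\beta_{1}/\gamma-\alpha_{1})=\frac{\beta_{1}}{\gamma}\mathrm{B}(\beta_{1}/\gamma-\alpha_{1},\alpha_{1}+1)=\Delta_{0}$.

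For part~(ii), I would use that $h_{2}\circ F$ is a distribution function on $\mathbb{R}$, so $\mu(\mathrm{d}y):=\mathrm{d}h_{2}(F(y))$ is a probability measure and
\[
\frac{\mathrm{H}_{u_{2},h_{2}}((X-x)_{-})}{u_{2}(x)}=\int_{(-\infty,x]}\frac{u_{2}(x-y)}{u_{2}(x)}\,\mu(\mathrm{d}y).
\]
For fixed $y$, the uniform convergence theorem applied to $u_{2}\in\mathrm{RV}_{\alpha_{2}}$ gives $u_{2}(x-y)/u_{2}(x)\to 1$ as $x\to\infty$. For the domination: when $0<y\leq x$, monotonicity of $u_{2}$ gives $u_{2}(x-y)/u_{2}(x)\leq 1$; when $y\leq 0$, Potter's inequality gives $u_{2}(x-y)/u_{2}(x)=u_{2}(x+|y|)/u_{2}(x)\leq C(1+|y|)^{\alpha_{2}+\delta}$ for $x$ large. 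The integrand is thus dominated by the $\mu$-integrable function $y\mapsto C(1+|y|^{\alpha_{2}+\delta})$, and dominated convergence gives the limit $\int_{\mathbb{R}}1\,\mathrm{d}\mu=1$.

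The main obstacle is, in part~(i), the justification of the exchange of limit and integral: both the joint convergence on compact sets (vague convergence of the rescaled Stieltjes measures together with local uniform convergence of the integrand) and, above all, the uniform-in-$x$ negligibility of the tail of the integral, which is precisely where the hypothesis $\beta_{1}/\gamma>\alpha_{1}$ enters via Potter's bounds. Part~(ii) is a simpler instance of the same scheme, with the moment condition $\int_{-\infty}^{\infty}|z|^{\alpha_{2}+\delta}\,\mathrm{d}h_{2}(F(z))<\infty$ playing the role that $\beta_{1}/\gamma>\alpha_{1}$ plays there.
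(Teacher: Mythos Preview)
Your argument is correct and follows essentially the same strategy as the paper's proof: rescale, identify pointwise limits via regular variation, and control the passage to the limit with Potter bounds. The one notable structural difference in part~(i) is that you integrate by parts \emph{before} rescaling, so that the survival-type quantity $\dfrac{1-h_1(F(x(1+t)))}{1-h_1(F(x))}$ becomes the integrand and $d_t\bigl(u_1(xt)/u_1(x)\bigr)$ the measure; this makes the integrand bounded near $t=0$, so a single split $[0,T]\cup[T,\infty)$ suffices, whereas the paper keeps $u_1(x(y-1))/u_1(x)$ as the integrand and must introduce a cut-off $[1,1+\varepsilon_0]$ to handle its small-argument behavior separately (via Proposition~B.1.9.6 of de Haan--Ferreira). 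Your ordering is marginally cleaner; the paper's has the advantage of reusing the same decomposition verbatim in the second-order analysis (Lemma~\ref{left2}). For part~(ii) your single dominated-convergence step, using monotonicity on $\{y>0\}$ and Potter on $\{y\le 0\}$, is a slight streamlining of the paper's $(-\infty,x/2]\cup(x/2,x]$ split but otherwise identical in spirit.
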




We are now in position to state the first-order asymptotic expansion of the generalized shortfall risk measure.

\begin{theorem}
	\label{theo-first} Assume that $u_{1}\in \mathrm{RV}_{\alpha_{1}}$ with $\alpha_{1}>0$,
$1-h_{1}( 1-1/\cdot ) \in \mathrm{RV}_{-\beta_{1}}$ with
$\beta_{1}>0$, $u_{2}\in \mathrm{RV}_{\alpha_{2}}$ with $\alpha_{2}>0$,
$\overline{F}\in \mathrm{RV}_{-1/\gamma}$ with $\gamma>0$. Define a function $\varphi$ as
\begin{equation*}
\varphi(x)= \frac{u_{2}(x)}{u_{1}(x)(1-h_{1}(F(x)))}. 
\end{equation*}
Then $\varphi\in \mathrm{RV}_s$, with $s=\alpha_2-\alpha_1+\beta
_{1}/\gamma$. Assume henceforth that $s>0$. Then $\varphi(x)$ diverges to $+\infty$ as $x\to+\infty$ and its generalized inverse function 
\begin{equation}
\varphi^{\leftarrow}:q\mapsto \inf\{x:\varphi(x)\geq q\},\quad q\in
(0,1)\label{phi-inv}%
\end{equation}
is well-defined. Further assume that
$\beta_{1}/\gamma>\alpha_{1}$ and $\int_{-\infty}^{\infty}|z|^{\alpha
_{2}+\delta}\mathrm{\,d}h_{2}(F(z))<\infty$ for some $\delta>0$. Then the
first-order expansion of the shortfall risk measure  {is
\[
{x_{\tau}}= \left[ \frac{\beta_{1}}{\gamma} \mathrm{B}(\beta_{1}/\gamma-\alpha_{1},\alpha_{1}+1) \right]^{1/s} \varphi^{\leftarrow}((1-\tau)^{-1}) (1+o(1)) = \Delta_{1}\varphi^{\leftarrow}((1-\tau)^{-1}) (1+o(1)).
\]
[In other words,  $\Delta_{1}=\Delta_{0}^{1/s}$, with $\Delta_{0}$ defined in Lemma~\ref{lm-sides}(i).]}
\end{theorem}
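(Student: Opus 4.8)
The plan is to combine the two one-sided asymptotics in Lemma~\ref{lm-sides} with the defining equation~\eqref{eq-eqREDU-2} and then invert the resulting relation. First I would rewrite~\eqref{eq-eqREDU-2} in ratio form as
\[
\frac{1-\tau}{\tau} = \frac{\mathrm{H}_{u_{1},\,h_{1}}((X-x_{\tau})_{+})}{\mathrm{H}_{u_{2},\,h_{2}}((X-x_{\tau})_{-})},
\]
which is legitimate since both sides of~\eqref{eq-eqREDU-2} are finite and positive for $x_\tau\in(x_\star,x^\star)$ under the stated conditions, and by Proposition~\ref{prop-xtau}(iii) we know $x_{\tau}\to+\infty$ as $\tau\uparrow 1$, so the asymptotic regime $x\to\infty$ in Lemma~\ref{lm-sides} is exactly the relevant one. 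Applying Lemma~\ref{lm-sides}(i) to the numerator and Lemma~\ref{lm-sides}(ii) to the denominator gives
\[
\frac{\mathrm{H}_{u_{1},\,h_{1}}((X-x_{\tau})_{+})}{\mathrm{H}_{u_{2},\,h_{2}}((X-x_{\tau})_{-})} = \Delta_{0}\,\frac{u_{1}(x_{\tau})(1-h_{1}(F(x_{\tau})))}{u_{2}(x_{\tau})}\,(1+o(1)) = \frac{\Delta_{0}}{\varphi(x_{\tau})}\,(1+o(1)),
\]
so that $\varphi(x_{\tau}) = \Delta_{0}\,\dfrac{\tau}{1-\tau}\,(1+o(1))$ as $\tau\uparrow 1$.

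Next I would establish the regular variation claim for $\varphi$: since $u_2\in\mathrm{RV}_{\alpha_2}$, $u_1\in\mathrm{RV}_{\alpha_1}$, and $1-h_1(1-1/\cdot)\in\mathrm{RV}_{-\beta_1}$, the composition $x\mapsto 1-h_1(F(x)) = 1-h_1(1-\overline F(x))$ is in $\mathrm{RV}_{-\beta_1/\gamma}$ (composing with $\overline F\in\mathrm{RV}_{-1/\gamma}$, after rewriting $1-h_1(F(x))$ as $(1-h_1(1-1/\cdot))\circ(1/\overline F(x))$), hence $\varphi\in\mathrm{RV}_{\alpha_2-\alpha_1+\beta_1/\gamma}=\mathrm{RV}_s$; with $s>0$ this forces $\varphi(x)\to+\infty$, so $\varphi^{\leftarrow}$ is well-defined for large arguments. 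Now I need to invert the relation $\varphi(x_{\tau})\sim \Delta_0\tau/(1-\tau)$. The clean way is to observe that $\varphi^{\leftarrow}\in\mathrm{RV}_{1/s}$ (the generalized inverse of an $\mathrm{RV}_s$ function with $s>0$ is $\mathrm{RV}_{1/s}$), and to use the asymptotic equivalence $\varphi^{\leftarrow}(\varphi(x))\sim x$ as $x\to\infty$ together with uniform convergence. Concretely, writing $q_\tau := \varphi(x_\tau)$, we have $x_\tau \sim \varphi^{\leftarrow}(q_\tau)$, and then
\[
\varphi^{\leftarrow}(q_\tau) = \varphi^{\leftarrow}\!\left(\Delta_0\,\frac{\tau}{1-\tau}(1+o(1))\right) \sim \Delta_0^{1/s}\,\varphi^{\leftarrow}\!\left(\frac{\tau}{1-\tau}\right) \sim \Delta_0^{1/s}\,\varphi^{\leftarrow}\!\left((1-\tau)^{-1}\right),
\]
where the middle step uses $\varphi^{\leftarrow}\in\mathrm{RV}_{1/s}$ (pulling out the constant $\Delta_0^{1/s}$ and absorbing the $1+o(1)$ factor via the uniform convergence theorem for regularly varying functions), and the last step uses $\tau/(1-\tau)\sim 1/(1-\tau)$ as $\tau\uparrow 1$ together again with $\mathrm{RV}_{1/s}$. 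Setting $\Delta_1=\Delta_0^{1/s}$ and recalling $\Delta_0=\frac{\beta_1}{\gamma}\mathrm{B}(\beta_1/\gamma-\alpha_1,\alpha_1+1)$ yields the stated expansion.

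The main obstacle is the inversion step: Lemma~\ref{lm-sides} controls $\varphi(x_\tau)$ only up to a multiplicative $(1+o(1))$, and one must be careful that applying $\varphi^{\leftarrow}$ does not inflate this error. The rigorous justification requires (a) the standard fact that the generalized inverse of a function in $\mathrm{RV}_s$ ($s>0$) lies in $\mathrm{RV}_{1/s}$, and (b) the uniform convergence theorem (cited in the excerpt right after Definition~\ref{2rvdef}), applied to $\varphi^{\leftarrow}$ so that $\varphi^{\leftarrow}(\lambda t)/\varphi^{\leftarrow}(t)\to\lambda^{1/s}$ uniformly for $\lambda$ in compact subsets of $(0,\infty)$ — this is what lets us replace the argument $\Delta_0\tau/(1-\tau)(1+o(1))$ by $(1-\tau)^{-1}$ at the cost of the constant factor $\Delta_0^{1/s}$ only. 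A secondary technical point worth spelling out is the asymptotic equivalence $\varphi^{\leftarrow}(\varphi(x))\sim x$ (equivalently $x_\tau\sim\varphi^{\leftarrow}(q_\tau)$), which holds because $\varphi$ is regularly varying with positive index hence asymptotically strictly increasing; this can be invoked from de Haan and Ferreira (Proposition B.1.9.9 / the discussion of inverses of regularly varying functions).
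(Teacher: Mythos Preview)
Your proposal is correct and follows essentially the same route as the paper: combine Lemma~\ref{lm-sides}(i)--(ii) with the defining equation to obtain $\varphi(x_{\tau})\sim\Delta_0(1-\tau)^{-1}$, then invert using $\varphi^{\leftarrow}\in\mathrm{RV}_{1/s}$ and $\varphi^{\leftarrow}(\varphi(x))\sim x$ (the paper cites Proposition~B.1.9.9 of de Haan--Ferreira for both facts simultaneously). The only cosmetic difference is that the paper goes directly to $\varphi(x_{\tau})\sim\Delta_0(1-\tau)^{-1}$ rather than first writing $\Delta_0\tau/(1-\tau)$ and then absorbing $\tau\to 1$, but this is immaterial.
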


\begin{remark}
	\label{rmk-theo-first} {When $u_1=u_2$ and $h_1(x)=x$, the function $\varphi^{\leftarrow}$ is nothing but the tail quantile function $U$. In this case Theorem~\ref{theo-first} directly connects $x_{\tau}$ to extreme quantiles of $X$. This is reminiscent of the kind of asymptotic proportionality relationships obtained for $L^p-$quantiles, see {\it e.g.}~\cite{daouia2019Lpestimation}.}
\end{remark}

%
%

\section{Second-order expansions of  generalized shortfall risk measures}\label{second-order}
In this section, we study the second-order asymptotics of generalized shortfall risk measures. Again, all the proofs are relegated to Section \ref{proof}.
We first prepare a few  {assumptions and} lemmas. The first lemma is regarding a set of uniform
inequalities for 2RV functions. It plays a key role in the later proofs. Moreover, it
is also an interesting result on its own as it is a complement to the usual inequalities 
on 2RV by  {providing uniform inequalities in a neighborhood of $0$.}
\begin{lemma}
	\label{bd-2RV}Assume that $g\in\mathrm{2RV}_{\gamma,\rho}$,  {with $\gamma>0$, $\rho<0$ and auxiliary function $B$, is such that $t^{-\gamma}g(t)$ is bounded on intervals of the form $(0,t_0]$, with $t_0>0$.} There exists $\widetilde{B}\sim B$ such that for any $\epsilon>0$ and
		$\delta>0$, there exists $c>0$ and $t_{0}$ such that for all $t\geq t_{0}$ and
		$0<v<\delta$,
		\[
		\left\vert \frac{\frac{g({vt})}{g(t)}-v^{\gamma}}{\widetilde{B}(t)}\right\vert
		\leq-\frac{v^{\gamma}}{\rho}(1+cv^{\rho-\epsilon}).
		\]
     {[A fixed choice of $c>0$ is possible for $\delta \in (0,1)$.]}
%
%
%
	%
\end{lemma}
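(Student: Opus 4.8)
The plan is to split the problem into two pieces: the ``large $t$'' regime where $v$ stays in a fixed compact set bounded away from $0$, which is handled by the classical second-order uniform inequality (Drees-type / Theorem~B.3.x in \cite{de2006extreme}), and the genuinely new ``small $v$'' regime, where we must exploit the boundedness of $t^{-\gamma}g(t)$ near $0$ to obtain a usable bound. The target inequality rewrites, after multiplying through by $g(t) = t^{\gamma}(t^{-\gamma}g(t))$, as a two-sided control of $g(vt)/g(t) - v^{\gamma}$ by $-\frac{v^{\gamma}}{\rho}\widetilde{B}(t)(1+cv^{\rho-\epsilon})$; note $-1/\rho>0$ since $\rho<0$, and $\widetilde B$ can be taken of one sign for $t$ large, so the bound is meaningful. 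I would first pick $\widetilde B$: since $g\in\mathrm{2RV}_{\gamma,\rho}$, the function $t\mapsto t^{-\gamma}g(t)$ is in $\mathrm{ERV}_\rho$ with auxiliary function $a(t)=t^{-\gamma}g(t)B(t)$, and the standard representation of extended-regularly-varying functions lets me replace $B$ by an asymptotically equivalent $\widetilde B$ that is itself nicely behaved (e.g. $\widetilde B\in\mathrm{RV}_\rho$, continuous, or given by a clean integral/Karamata representation); this $\widetilde B$ is the one in the statement.

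Next, for the compact regime $v\in[\delta_1,\delta)$ with $\delta<1$ (or more generally $v$ in any fixed $[\delta_1,M]$), I invoke the uniform second-order inequality: for every $\epsilon'>0$ there exist $c',t_1$ so that for $t\ge t_1$ and $v$ in the compact set,
\[
\left|\frac{g(vt)/g(t)-v^{\gamma}}{\widetilde B(t)} - v^{\gamma}\frac{v^{\rho}-1}{\rho}\right| \le \epsilon' v^{\gamma+\rho\pm\epsilon'}.
\]
Since on a compact set bounded away from $0$ the function $v\mapsto -v^\gamma(v^\rho-1)/\rho$ is continuous and $v^{\rho-\epsilon}$ is bounded below by a positive constant there, the quantity $-\frac{v^\gamma}{\rho}(1+cv^{\rho-\epsilon})$ dominates $|v^\gamma(v^\rho-1)/\rho| + \epsilon' v^{\gamma\pm\epsilon'}$ provided $c$ is chosen large enough — this gives the inequality on the compact part. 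The delicate part is $v\in(0,\delta_1]$: here $vt$ need not go to infinity (if $t$ is only ``eventually large'' but $v$ tiny), so I cannot use the asymptotic relation at the point $vt$. Instead, write
\[
\frac{g(vt)}{g(t)} = v^{\gamma}\cdot\frac{(vt)^{-\gamma}g(vt)}{t^{-\gamma}g(t)},
\]
and use that $t^{-\gamma}g(t)\to c_0$ for some $c_0>0$ (a consequence of $g\in\mathrm{RV}_\gamma$ plus the $\mathrm{2RV}$ refinement, cf.\ Lemma~\ref{lm-2RV-rep-inv}(i)) together with the \emph{assumed} boundedness of $s\mapsto s^{-\gamma}g(s)$ on $(0,t_0]$. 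So for $t$ large enough $t^{-\gamma}g(t)$ is bounded below, and $(vt)^{-\gamma}g(vt)$ is bounded above either by the near-$0$ boundedness hypothesis (when $vt\le t_0$) or by the fact that it converges (when $vt>t_0$, which forces $v$ bounded below, overlapping the previous regime). Hence $g(vt)/g(t) = O(v^\gamma)$ uniformly, and therefore $|g(vt)/g(t)-v^\gamma| \le K v^\gamma$ for a constant $K$ and all $0<v\le\delta_1$, $t\ge t_2$. Dividing by $\widetilde B(t)$: since $\widetilde B(t)\to 0$, for $t$ large $|\widetilde B(t)|$ is small, but we need the bound to be $\le -\frac{v^\gamma}{\rho}(1+cv^{\rho-\epsilon})$, and on $(0,\delta_1]$ the term $v^{\rho-\epsilon}$ \emph{blows up} (since $\rho-\epsilon<0$), so $v^{\gamma}\cdot v^{\rho-\epsilon}\ge v^\gamma \cdot \delta_1^{\rho-\epsilon}$ is bounded below away from $0$ only if $\gamma+\rho-\epsilon\le 0$; more carefully, $v^{\gamma+\rho-\epsilon}\to\infty$ as $v\downarrow 0$ precisely when $\gamma+\rho-\epsilon<0$.

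So the real mechanism on the small-$v$ part is: we must choose $\epsilon$ small — wait, $\epsilon$ is given, not chosen. Let me reconsider: $\epsilon>0$ is arbitrary but fixed; if $\gamma+\rho-\epsilon\ge 0$ the factor $v^{\rho-\epsilon}$ is still $\ge v^{-\gamma}\cdot$(something)? No. The point is simply that $-\tfrac{v^\gamma}{\rho}(1+cv^{\rho-\epsilon}) \ge -\tfrac{c}{\rho}v^{\gamma+\rho-\epsilon}$, and we need $K|{\widetilde B(t)}|^{-1}\,?$ — no, the left side already has $\widetilde B(t)$ in the denominator. We have LHS $\le K v^\gamma/|\widetilde B(t)|$. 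For this to be $\le -\tfrac{v^\gamma}{\rho}(1+cv^{\rho-\epsilon})$ we'd need $K/|\widetilde B(t)| \le -\tfrac1\rho(1+cv^{\rho-\epsilon})$, i.e.\ $|\widetilde B(t)|\ge -\rho K/(1+cv^{\rho-\epsilon})$, which \emph{fails} for $t$ large. This tells me the crude bound $|g(vt)/g(t)-v^\gamma|\le Kv^\gamma$ is too lossy; I instead need the \emph{difference} $(vt)^{-\gamma}g(vt) - t^{-\gamma}g(t)$, divided by the $\mathrm{ERV}$ auxiliary function $a(t)$, to be controlled — and the $\mathrm{ERV}$ inequality of Drees, applied to $t^{-\gamma}g(t)$ with its auxiliary function, \emph{does} hold down near $v=0$ after absorbing the near-zero boundedness into the constant; this is exactly the ``complement to the usual 2RV inequalities near $0$'' the remark before the lemma advertises. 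So the correct route is: apply the $\mathrm{ERV}$-type uniform bound (Theorem~B.2.x / Lemma in \cite{de2006extreme} combined with Potter-type estimates) to $\tilde g(t):=t^{-\gamma}g(t)\in\mathrm{ERV}_\rho$, getting $|\tilde g(vt)-\tilde g(t) - a(t)\tfrac{v^\rho-1}{\rho}| \le \epsilon a(t) v^{\rho\pm\epsilon}$ type control uniformly on $v\in(0,\delta]$ — using boundedness of $\tilde g$ near $0$ to start the induction/representation argument there — then multiply back by $v^\gamma$, use $a(t)=\tilde g(t)B(t)$, divide by $g(t)\widetilde B(t)=t^\gamma\tilde g(t)\widetilde B(t)$, and collect terms. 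The main obstacle, and where I expect to spend the most care, is making the $\mathrm{ERV}$ uniform inequality valid all the way down to $v$ near $0$: the textbook statements give uniformity on compacts of $(0,\infty)$, and extending to a punctured neighborhood of $0$ is precisely what the boundedness hypothesis on $t^{-\gamma}g(t)$ over $(0,t_0]$ buys us, via a Karamata/representation argument for $\tilde g$ that controls $\tilde g(vt)/a(t)$ for small $v$ by splitting according to whether $vt$ is below or above a fixed threshold. Finally I will note that for $\delta\in(0,1)$ a single $c$ works because then $v^{\rho-\epsilon}\ge 1$ uniformly (as $\rho-\epsilon<0$ and $v<1$), so the $(1+cv^{\rho-\epsilon})$ factor is genuinely dominated by its second term and a fixed large $c$ swallows the compact-set estimate too.
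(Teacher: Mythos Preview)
Your proposal correctly identifies the relevant structure (that $\tilde g(t)=t^{-\gamma}g(t)\in\mathrm{ERV}_\rho$) and correctly diagnoses that the small-$v$ regime is the only real difficulty. However, the argument you sketch for that regime is not a proof: you end by saying you will ``apply the ERV-type uniform bound to $\tilde g$ \dots\ using boundedness of $\tilde g$ near $0$ to start the induction/representation argument'' and ``split according to whether $vt$ is below or above a fixed threshold,'' but you never carry this out, and it is not clear it would close as stated. Your earlier attempt via the crude $O(v^{\gamma})$ bound showed exactly why naive control fails once you divide by $\widetilde B(t)\to 0$, and your revised plan still needs a quantitative estimate for $(\tilde g(vt)-\tilde g(t))/a(t)$ uniform in $v\in(0,\delta]$; the standard Drees-type inequalities (Theorem~B.2.18, B.3.10 in \cite{de2006extreme}) require $vt\geq t_0$ and do not directly give this.

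The paper's proof avoids this extension entirely by a clean algebraic reduction. Since $\rho<0$, the limit $g_0=\lim_{t\to\infty}\tilde g(t)$ exists and $h(t):=g_0-\tilde g(t)\in\mathrm{RV}_\rho$ (Theorem~B.2.2 in \cite{de2006extreme}). One then \emph{chooses} $\widetilde B$ by the exact relation $\tilde g(t)\widetilde B(t)=-\rho h(t)$ (Theorem~B.2.18), which yields the \emph{identity}
\[
v^{-\gamma}\,\frac{g(vt)/g(t)-v^{\gamma}}{\widetilde B(t)} \;=\; \frac{\tilde g(vt)-\tilde g(t)}{\tilde g(t)\widetilde B(t)} \;=\; \frac{h(t)-h(vt)}{-\rho h(t)} \;=\; -\frac{1}{\rho}\Bigl(1-\frac{h(vt)}{h(t)}\Bigr).
\]
No approximation is needed, and the bound now follows immediately from the Potter-type inequality for the single regularly varying function $h$ (Proposition~B.1.9.7 in \cite{de2006extreme}), which \emph{does} hold down to arbitrarily small $v$ once $h$ is bounded near $0$ --- and this boundedness is exactly what the hypothesis on $t^{-\gamma}g(t)$ guarantees. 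This is the missing idea in your proposal: rather than extending a two-variable ERV inequality to a neighborhood of $0$, one rewrites the quantity exactly in terms of a single $\mathrm{RV}_\rho$ ratio and applies the already-available near-$0$ Potter bound.
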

 {We next present the second-order conditions about $U$, $u_1$, $u_2$ and $h_1$ that we require to obtain the second-order asymptotics of $x_{\tau}$.}
\begin{assumption}
	\label{A1} $U\in2\mathrm{RV}_{\gamma,\rho}$ for $\gamma>0$ and $\rho<0$
	with auxiliary function $A(t)$.
\end{assumption}
\begin{assumption}
	\label{A2} For $i=1,2$, $u_{i}\in2\mathrm{RV}_{\alpha_{i},\eta_{i}}$
	for $\alpha_{i}>0$ and $\eta_{i}<0$ with auxiliary function $B_{i}(t)$,  {and $t^{-\alpha_i} u_i(t)$ is bounded on intervals of the form $(0,t_0]$, for $t_0>0$.}
\end{assumption}
\begin{assumption}
	\label{A3} $1-h_{1}(1-1/\cdot)\in2\mathrm{RV}_{-\beta_{1}%
		,\varsigma}$ for $\beta_{1}>0$ and $\varsigma<0$ with auxiliary function
	$C(t)$, and  {$1-h_{2}(1-1/\cdot)\in \mathrm{RV}_{-\beta_{2}}$ for $\beta_{2}>0$.}
\end{assumption}
Under Assumptions \ref{A1}, \ref{A2}, and \ref{A3}, by Lemma~\ref{lm-2RV-rep-inv}, and Propositions 2.6 and 2.9 in \cite{lv2012properties}, we
immediately obtain the following useful results.

\begin{lemma}\label{notation}
	Under Assumptions \ref{A1}, \ref{A2}, and \ref{A3},
	
	\begin{enumerate}[label=(\roman*)] 
		\item $U$ has the representation, as $x\rightarrow\infty$,%
		\[
		U(x)=cx^{\gamma}\left[  1+\frac{1}{\rho}A(x)+o(A(x))\right],  {\mbox{ where } c>0.}
		\]
		Consequently,  {$\overline{F}(\cdot) \in2\mathrm{RV}_{-1/\gamma,\rho/\gamma}$ with auxiliary
		function $A_{F}(t)=\gamma^{-2}A(1/\overline{F}(t))$, and $\overline{F}(\cdot)$ has the 
		representation,} as $x\rightarrow\infty$,
		\[
		\overline{F}(x)=c^{1/\gamma}x^{-1/\gamma}\left[  1+\frac{1}{\gamma\rho
		}A(1/\overline{F}(x))+o(A(1/\overline{F}(x)))\right]  .
		\]
%
%

		\item For $i=1,2$, $u_{i}$ has the representation, as $x\rightarrow\infty$,
		\[
		u_{i}(x)=a_{i}x^{\alpha_{i}}\left[  1+\frac{1}{\eta_{i}}B_{i}(x)+o(B_{i}%
		(x))\right],  {\mbox{ where } a_i>0.}
		\]

		\item $1-h_{1}(1-1/\cdot)$ has the representation, as $x\rightarrow\infty$,%
		\[
		1-h_{1}\left(  1-\frac{1}{x}\right)  =bx^{-\beta_{1}}\left[  1+\frac
		{1}{\varsigma}C(x)+o(C(x))\right],  {\mbox{ where } b>0.}
		\]

		\item  {$1-h_{1}(  F(\cdot))$ has the representation, as $x\rightarrow\infty$,
		\[
		1-h_{1}(  F(x)) = bc^{\beta_{1}/\gamma}x^{-\beta_{1}/\gamma}\left[  1+\frac{\beta_{1}%
        }{\gamma \rho}A(1/\overline{F}(x))(  1+o(1)) +\frac{1}{\varsigma}C(1/\overline{F}(x))(1+o(1))\right].
		\]
		In particular, if $C(x)/A(x)\to \kappa\in [-\infty,+\infty]$ as $x\to \infty$ with $\kappa\neq -\beta_1/\gamma$, then $A_{h}(\cdot)=\gamma^{-1}((\beta_{1}/\gamma)A(1/\overline{F}(\cdot))+C(1/\overline{F}(\cdot)))$ is nonzero and has constant sign in a neighborhood of infinity, $|A_{h}(\cdot)|$ is regularly varying with index $\rho_{h}=\max\left\{  \rho,\varsigma\right\}  /\gamma$, and $1-h_{1}(  F(\cdot))  \in2\mathrm{RV}_{-\beta_{1}/\gamma
			,\rho_{h}}$ for $\rho_{h}=\max\left\{  \rho,\varsigma\right\}  /\gamma$ with
	    auxiliary function $A_h$.}
     \item  {$1-h_{2}(F(\cdot))\in \mathrm{RV}_{-\beta_{2}/\gamma}$.}
	%
	\end{enumerate}
\end{lemma}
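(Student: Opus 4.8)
The plan is to reduce each item to an explicit Karamata-type representation by applying Lemma~\ref{lm-2RV-rep-inv} together with the standard composition and summation rules for regularly varying functions (here Propositions 2.6 and 2.9 of \cite{lv2012properties}); the only genuine work is tracking the multiplicative constants and, in (iv), identifying the dominant second-order term. For (i), I would apply Lemma~\ref{lm-2RV-rep-inv}(i) to $h=U$ (legitimate since $\gamma\in\mathbb{R}$ and $\rho<0$) to obtain $U(x)=cx^{\gamma}[1+\rho^{-1}A(x)+o(A(x))]$ for some $c>0$; since $1/\overline{F}=U^{\leftarrow}$, Lemma~\ref{lm-2RV-rep-inv}(ii) (legitimate since $\gamma>0$) gives the representation of $1/\overline{F}$, and taking reciprocals and expanding $(1+u)^{-1}=1-u+o(u)$ yields the stated representation of $\overline{F}$. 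Matching it against the template of Lemma~\ref{lm-2RV-rep-inv}(i) with indices $-1/\gamma$ and $\rho/\gamma$ then identifies the auxiliary function as $A_{F}(t)=\gamma^{-2}A(1/\overline{F}(t))$, consistently with the equivalence recalled after Definition~\ref{2rvdef}. Parts (ii) and (iii) are immediate: apply Lemma~\ref{lm-2RV-rep-inv}(i) directly to $u_{1},u_{2}$ (indices $\alpha_{i}>0$, $\eta_{i}<0$) and to $x\mapsto 1-h_{1}(1-1/x)$ (indices $-\beta_{1}$, $\varsigma<0$), reading off the constants $a_{1},a_{2},b>0$.

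For (iv), I would write $1-h_{1}(F(x))=(1-h_{1}(1-1/\cdot))(1/\overline{F}(x))$, substitute the representation of $1-h_{1}(1-1/\cdot)$ from (iii) evaluated at $t=1/\overline{F}(x)$, insert the representation of $\overline{F}(x)$ from (i) into the factor $(1/\overline{F}(x))^{-\beta_{1}}$ via $(1+u)^{-\beta_{1}}=1-\beta_{1}u+o(u)$, multiply the two bracketed factors, and discard the product of the two correction terms, which is of smaller order. For the ``in particular'' clause, observe that $A(1/\overline{F}(\cdot))\in\mathrm{RV}_{\rho/\gamma}$ and $C(1/\overline{F}(\cdot))\in\mathrm{RV}_{\varsigma/\gamma}$ by composition of regularly varying functions; when $C/A\to\kappa$ with $\kappa\neq-\beta_{1}/\gamma$, Propositions 2.6 and 2.9 of \cite{lv2012properties} show that $A_{h}(\cdot)=\gamma^{-1}((\beta_{1}/\gamma)A(1/\overline{F}(\cdot))+C(1/\overline{F}(\cdot)))$ is eventually of constant sign, tends to $0$, and $|A_{h}|\in\mathrm{RV}_{\rho_{h}}$ with $\rho_{h}=\max\{\rho,\varsigma\}/\gamma$; a short case analysis ($\rho>\varsigma$, $\rho<\varsigma$, $\rho=\varsigma$, and $\kappa=\pm\infty$) then confirms that the bracket equals $1+\rho_{h}^{-1}A_{h}(x)+o(A_{h}(x))$, the hypothesis $\kappa\neq-\beta_{1}/\gamma$ being exactly what rules out cancellation of the two leading second-order terms in the case $\rho=\varsigma$. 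Lemma~\ref{lm-2RV-rep-inv}(i) read in the reverse direction then yields $1-h_{1}(F(\cdot))\in2\mathrm{RV}_{-\beta_{1}/\gamma,\rho_{h}}$ with auxiliary function $A_{h}$. Part (v) is the same composition argument at first order: $1-h_{2}(F(x))=(1-h_{2}(1-1/\cdot))(1/\overline{F}(x))$ is the composition of $1-h_{2}(1-1/\cdot)\in\mathrm{RV}_{-\beta_{2}}$ (Assumption~\ref{A3}) with $1/\overline{F}\in\mathrm{RV}_{1/\gamma}$, hence lies in $\mathrm{RV}_{-\beta_{2}/\gamma}$.

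The main obstacle is the bookkeeping in (iv): combining two 2RV inputs with possibly distinct second-order indices into a single auxiliary function with the correct index $\rho_{h}$ and the correct leading constant, and especially verifying that the resulting $A_{h}$ does not change sign in a neighborhood of infinity — which is precisely where the cited results of \cite{lv2012properties} and the assumption $\kappa\neq-\beta_{1}/\gamma$ enter. Everything else is a mechanical application of Lemma~\ref{lm-2RV-rep-inv} and elementary first-order Taylor expansions.
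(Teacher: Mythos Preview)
Your proposal is correct and follows essentially the same approach as the paper, which simply cites Lemma~\ref{lm-2RV-rep-inv} together with Propositions 2.6 and 2.9 of \cite{lv2012properties} and states that the results are immediate. You have supplied the detailed bookkeeping the paper omits, in particular the composition argument in (iv) and the identification of $A_{h}$, but the underlying strategy is identical.
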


\begin{remark}
	\label{rmk-notation}  {Condition $C(x)/A(x)\to \kappa\in [-\infty,+\infty]$ as $x\to\infty$ with $\kappa\neq -\beta_1/\gamma$ in Lemma~\ref{notation}(iv) is very mild. It is in particular satisfied as soon as $\rho\neq \varsigma$, corresponding to the case when either $A$ or $C$ dominates in $A_h$. When $\rho=\varsigma$, in typical second-order regularly varying models $A$ and $C$ will be proportional to the same negative power function $t\mapsto t^{\rho}$, and the condition simply says that the proportionality constants should not cancel in the calculation of $A_h$. If this condition is not satisfied, then $1-h_{1}(  F(\cdot))$ would typically still be second-order regularly varying, but the second-order parameter and auxiliary function would depend on the {\it third-order} regular variation properties of $\overline{F}$ and $1-h_{1}(1-1/\cdot)$.}
\end{remark}

 {The next lemma analyzes the second-order regular variation properties of the left-continuous inverse function $\varphi^{\leftarrow}$ defined in (\ref{phi-inv}) and its connection with extreme quantiles of the distribution function $F$. It will be used in the proof of the main result of this section.}
\begin{lemma}
\label{lm-39-1} Under Assumptions \ref{A1}, \ref{A2}, and \ref{A3},  {and if there is a regularly varying function $D$ such that $A(1/\overline{F}(x))/D(x)\to a\in \mathbb{R}$, $B_i(x)/D(x)\to b_i\in \mathbb{R}$ and $C(x)/D(x)\to \kappa\in \mathbb{R}$ as $x\to\infty$, with $b_2/\eta_2-b_1/\eta_1-(a\beta_1/\gamma+\kappa)/(\gamma\rho_h)\neq 0$, we have, as $\tau \rightarrow 1$,
\[
\varphi^{\leftarrow}(\left(  1-\tau \right)  ^{-1})=c^{\ast}(1-\tau
)^{-1/s}\left(  1-\frac{1}{s}A^{\ast}(\varphi^{\leftarrow}(\left(
1-\tau \right)  ^{-1}))(1+o(1))\right)
\]
where $s=\alpha_{2}-\alpha_{1}+\beta_{1}/\gamma$ as in Theorem~\ref{theo-first}, and $c^{\ast}=\left(
{\frac{a_{2}}{bc^{\beta_{1}/\gamma}a_{1}}}\right)  ^{-1/s}$ and $A^{\ast
}(t)=\frac{1}{\eta_{2}}B_{2}(t)-\frac{1}{\eta_{1}}B_{1}(t)-\frac{1}{\rho_{h}%
}A_{h}(t)$ is regularly varying with index $\eta^{\ast
}=\max \{ \eta_{1},\rho_{h},\eta_{2}\}$, with the notation of Lemma~\ref{notation}. In particular, $\varphi^{\leftarrow}\in \mathrm{2RV}_{1/s,\eta^{\ast}/s}$ and
\[
\frac{\varphi^{\leftarrow}(\left(  1-\tau \right)  ^{-1})}{\left(
F^{\leftarrow}(\tau)\right)  ^{1/(\gamma s)}}=c_{0}\left(  1- \frac
{c_{0}^{\eta^*}}{s}A^{\ast}(( F^{\leftarrow}(\tau) )^{1/(\gamma
s)})(1+o(1))-\frac{1}{\gamma s\rho}A( (1-\tau)^{-1} ) (1+o(1))
\right)
\]
where $c_{0}=c^{\ast}c^{-1/(\gamma s)}$. In the specific setting when $u_1=u_2$, the condition linking $A$, the $B_i$, $C$ and $a$, $b_1$, $b_2$ and $\kappa$ can be replaced by supposing that $C(x)/A(x)\to \kappa\in [-\infty,+\infty]$ as $x\to \infty$ with $\kappa\neq -\beta_1/\gamma$, in which case $A^*=-\frac{1}{\rho_{h}%
}A_{h}$.}
\end{lemma}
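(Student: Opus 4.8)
The plan is to start from the explicit representations of $U$, $u_1$, $u_2$ and $1-h_1(F(\cdot))$ collected in Lemma~\ref{notation}, multiply them together to get a representation of $\varphi$, and then invert. Concretely, writing $\varphi(x)=u_2(x)/(u_1(x)(1-h_1(F(x))))$ and substituting parts (ii) and (iv) of Lemma~\ref{notation}, one obtains
\[
\varphi(x)=\frac{a_2}{a_1 bc^{\beta_1/\gamma}}\,x^{\alpha_2-\alpha_1+\beta_1/\gamma}\Bigl[1+\tfrac{1}{\eta_2}B_2(x)-\tfrac{1}{\eta_1}B_1(x)-\tfrac{1}{\rho_h}A_h(x)+o(D(x))\Bigr],
\]
after using the hypothesis that $A(1/\overline F(x))/D(x)$, $B_i(x)/D(x)$ and $C(x)/D(x)$ all converge, which forces $A_h$ and the $B_i$ to be comparable to $D$ and lets us collect all the error terms into a single $o(D(x))$. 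The nonvanishing condition $b_2/\eta_2-b_1/\eta_1-(a\beta_1/\gamma+\kappa)/(\gamma\rho_h)\neq 0$ is exactly what guarantees that the bracketed second-order term $A^*(x):=\tfrac{1}{\eta_2}B_2(x)-\tfrac{1}{\eta_1}B_1(x)-\tfrac{1}{\rho_h}A_h(x)$ does not cancel to leading order, so that $A^*$ is genuinely of order $D$, has eventually constant sign, and is regularly varying with index $\eta^*=\max\{\eta_1,\rho_h,\eta_2\}$ (the index of $D$). This puts $\varphi$ in $2\mathrm{RV}_{s,\eta^*}$ with $s=\alpha_2-\alpha_1+\beta_1/\gamma$ and auxiliary function of the form $\mathrm{const}\cdot A^*$.

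Next I would invert. Applying Lemma~\ref{lm-2RV-rep-inv}(ii) to $h=\varphi$ (with $\gamma$ there being $s$, $\rho$ there being $\eta^*$, and $c$ there being $a_2/(a_1bc^{\beta_1/\gamma})$), and recalling that $\varphi^{\leftarrow}(q)$ was shown in Theorem~\ref{theo-first} to diverge as $q\to\infty$, one gets
\[
\varphi^{\leftarrow}(q)=c^*\,q^{1/s}\Bigl[1-\tfrac{1}{s}A^*(\varphi^{\leftarrow}(q))\bigl(1+o(1)\bigr)\Bigr],\qquad q\to\infty,
\]
with $c^*=(a_2/(a_1bc^{\beta_1/\gamma}))^{-1/s}$; plugging $q=(1-\tau)^{-1}$ gives the first displayed formula, and the ``$\varphi^{\leftarrow}\in 2\mathrm{RV}_{1/s,\eta^*/s}$'' assertion follows from the last sentence of Lemma~\ref{lm-2RV-rep-inv}(ii). (I should double-check the bookkeeping: $A^*$ absorbs the $-1/\rho$ factors that Lemma~\ref{lm-2RV-rep-inv}(i) attaches to each auxiliary function, so the constants $1/\eta_i$ and $1/\rho_h$ appearing in $A^*$ are precisely what comes out of parts (ii) and (iv) of Lemma~\ref{notation}; the statement of the lemma is consistent with this.)

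For the second display, the link with extreme quantiles, the idea is to substitute $q=(1-\tau)^{-1}=1/\overline F(F^{\leftarrow}(\tau))$ and use the representation of $\overline F$ from Lemma~\ref{notation}(i), namely $\overline F(x)=c^{1/\gamma}x^{-1/\gamma}[1+\tfrac{1}{\gamma\rho}A(1/\overline F(x))+o(A(1/\overline F(x)))]$, which upon inversion (again via Lemma~\ref{lm-2RV-rep-inv}) yields $(1-\tau)^{-1/s}=c^{-1/(\gamma s)}(F^{\leftarrow}(\tau))^{1/(\gamma s)}[1-\tfrac{1}{\gamma s\rho}A((1-\tau)^{-1})(1+o(1))]$. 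Substituting $x=F^{\leftarrow}(\tau)$ into $U$ and rearranging is the cleanest route; one then combines this with the first display, replaces $\varphi^{\leftarrow}((1-\tau)^{-1})$ inside $A^*$ by its first-order equivalent $c^*(1-\tau)^{-1/s}\sim c^*c^{-1/(\gamma s)}(F^{\leftarrow}(\tau))^{1/(\gamma s)}=c_0(F^{\leftarrow}(\tau))^{1/(\gamma s)}$ (legitimate since $A^*$ is regularly varying, hence slowly varying perturbations inside its argument only affect lower-order terms), picks up the factor $c_0^{\eta^*}$ from $A^*(c_0\,\cdot)\sim c_0^{\eta^*}A^*(\cdot)$, and collects the two second-order contributions. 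The main obstacle, and the place where care is genuinely needed, is precisely this last merging step: one has two sources of second-order error — one from inverting $\varphi$ (the $A^*$ term, living at scale $(F^{\leftarrow}(\tau))^{\eta^*/(\gamma s)}$) and one from the quantile transform (the $A((1-\tau)^{-1})$ term, at scale $(1-\tau)^{\rho/s}\sim(F^{\leftarrow}(\tau))^{\rho/(\gamma s)}$) — and one must verify these do not interact at leading order and that no cross term of comparable size has been dropped; this is a routine but error-prone asymptotic bookkeeping exercise that the uniform bounds (Lemma~\ref{bd-2RV} and the uniform convergence theorems cited after Definition~\ref{2rvdef}) are there to justify rigorously. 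Finally, the special case $u_1=u_2$ is immediate: then $B_1=B_2$ and $\eta_1=\eta_2$, the terms $\tfrac{1}{\eta_2}B_2-\tfrac{1}{\eta_1}B_1$ cancel identically, so $A^*=-\tfrac{1}{\rho_h}A_h$, no hypothesis relating the $B_i$ to $D$ is needed, and the only surviving requirement is that $A_h\not\equiv 0$ near infinity, which by Lemma~\ref{notation}(iv) is ensured by $C(x)/A(x)\to\kappa\neq-\beta_1/\gamma$.
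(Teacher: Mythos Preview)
Your proposal is correct and follows essentially the same approach as the paper: expand $\varphi(x)$ via Lemma~\ref{notation}(ii) and (iv), use the nondegeneracy condition to merge the $o(1)$ terms into a single $A^*(x)(1+o(1))$, invert via Lemma~\ref{lm-2RV-rep-inv}(ii), and then for the second display divide by $(F^{\leftarrow}(\tau))^{1/(\gamma s)}$ using the representation $F^{\leftarrow}(\tau)=U((1-\tau)^{-1})=c(1-\tau)^{-\gamma}[1+\tfrac{1}{\rho}A((1-\tau)^{-1})(1+o(1))]$ from Lemma~\ref{notation}(i). The paper's proof is slightly terser on the ``merging'' step and on the replacement of $\varphi^{\leftarrow}((1-\tau)^{-1})$ by $c_0(F^{\leftarrow}(\tau))^{1/(\gamma s)}$ inside $A^*$, but your extra care there is warranted and the argument is the same.
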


 {To} derive the second-order asymptotic expansions for the generalized shortfall risk
measure, we proceed by analyzing the two sides of (\ref{eq-eqREDU-2}) separately.

\begin{lemma}
\label{left2}
Under Assumptions \ref{A1}, \ref{A2}, and \ref{A3},  {further assume that $C(x)/A(x)\to \kappa\in [-\infty,+\infty]$ as $x\to \infty$ with $\kappa\neq -\beta_1/\gamma$, as well as $\beta_{1}/\gamma>\alpha_{1}$ and $\alpha_{1}+\eta_{1}>0$}. Then
as $x\rightarrow \infty$,
\[
\frac{\mathrm{H}_{u_{1},\,h_{1}}((X-x)_{+})}{\left(  1-h_{1}(F(x))\right)
u_{1}(x)}=\Delta_{0}+\Gamma_{1}B_{1}(x)(1+o(1))+\Gamma_{2}A_{h}(x)(1+o(1))
\]
with
\[
 {\Gamma_{1}=\frac{\beta_1}{\gamma} \times \frac{1}{\eta_1} ( \mathrm{B}(\beta_1/\gamma-\alpha_1-\eta_1,\alpha_1+\eta_1+1) - \mathrm{B}(\beta_1/\gamma-\alpha_1,\alpha_1+1) )} 
\]
and
\[
 {\Gamma_{2}= \frac{1}{\rho_h} \left( \left( \frac{\beta_1}{\gamma} - \rho_h \right) \mathrm{B}(\beta_1/\gamma-\alpha_1-\rho_h,\alpha_1+1) - \frac{\beta_1}{\gamma} \mathrm{B}(\beta_1/\gamma-\alpha_1,\alpha_1+1) \right)}. 
\]

\end{lemma}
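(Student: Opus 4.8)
The plan is to sharpen the proof of Lemma~\ref{lm-sides}(i) by keeping track of the second-order remainders. Set $\Gbar(x)=1-h_1(F(x))$, which is right-continuous and nonincreasing because $h_1$ and $F$ are right-continuous and increasing. The substitution $y=x+z$ turns the numerator into $\mathrm{H}_{u_1,h_1}((X-x)_+)=-\int_0^\infty u_1(z)\,\mathrm{d}_z\Gbar(x+z)$; an integration by parts, whose boundary terms vanish because $u_1(0)=0$ and $u_1(z)\Gbar(x+z)\to0$ as $z\to\infty$ (here $\beta_1/\gamma>\alpha_1$ is used, via the representations of Lemma~\ref{notation}(ii),(iv)), rewrites it as $\int_0^\infty\Gbar(x+z)\,\mathrm{d}u_1(z)$. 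After the change of variables $z=xw$ and division by $\Gbar(x)u_1(x)$ we obtain
\[
\frac{\mathrm{H}_{u_1,h_1}((X-x)_+)}{(1-h_1(F(x)))\,u_1(x)}=\int_0^\infty\frac{\Gbar(x(1+w))}{\Gbar(x)}\,\mathrm{d}_w\!\left(\frac{u_1(xw)}{u_1(x)}\right),
\]
so the task reduces to producing a two-term expansion of the right-hand side as $x\to\infty$.

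The second step is to substitute the second-order representations of Lemma~\ref{notation}. Since $C(x)/A(x)\to\kappa\neq-\beta_1/\gamma$, Lemma~\ref{notation}(iv) gives $\Gbar\in2\mathrm{RV}_{-\beta_1/\gamma,\rho_h}$ with auxiliary function $A_h$, whence $\Gbar(x(1+w))/\Gbar(x)=(1+w)^{-\beta_1/\gamma}+A_h(x)(1+w)^{-\beta_1/\gamma}\frac{(1+w)^{\rho_h}-1}{\rho_h}(1+o(1))$ pointwise in $w$; likewise, Lemma~\ref{notation}(ii) gives $u_1(xw)/u_1(x)=w^{\alpha_1}+B_1(x)w^{\alpha_1}\frac{w^{\eta_1}-1}{\eta_1}(1+o(1))$. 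Plugging these in and discarding the products $B_1A_h$, $B_1^2$, $A_h^2$ (all $o(|B_1|)$ and $o(|A_h|)$), the integral becomes $\Delta_0+\Gamma_1 B_1(x)(1+o(1))+\Gamma_2 A_h(x)(1+o(1))$, with
\[
\Delta_0=\int_0^\infty(1+w)^{-\beta_1/\gamma}\,\mathrm{d}(w^{\alpha_1}),\quad\Gamma_1=\int_0^\infty(1+w)^{-\beta_1/\gamma}\,\mathrm{d}_w\!\left(w^{\alpha_1}\frac{w^{\eta_1}-1}{\eta_1}\right),\quad\Gamma_2=\frac{1}{\rho_h}\int_0^\infty\big((1+w)^{\rho_h}-1\big)(1+w)^{-\beta_1/\gamma}\,\mathrm{d}(w^{\alpha_1}).
\]
Here $\Delta_0=\alpha_1\mathrm{B}(\alpha_1,\beta_1/\gamma-\alpha_1)$ recovers the constant of Lemma~\ref{lm-sides}(i). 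Each of these is a finite linear combination of Beta integrals $\int_0^\infty w^{a-1}(1+w)^{-a-b}\,\mathrm{d}w=\mathrm{B}(a,b)$ — this is precisely where $\beta_1/\gamma>\alpha_1$, $\alpha_1+\eta_1>0$ and $\rho_h<0$ are needed to guarantee convergence (in $\Gamma_1$, the term $\mathrm{B}(\alpha_1+\eta_1,\beta_1/\gamma-\alpha_1-\eta_1)$ requires $\alpha_1+\eta_1>0$) — and simplifying with the identity $a\mathrm{B}(a,b)=(a+b)\mathrm{B}(a+1,b)$ produces exactly the stated closed forms of $\Gamma_1$ and $\Gamma_2$.

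The main obstacle is to make the previous step rigorous, i.e.\ to justify exchanging the pointwise expansions with the integral; the delicate regime is $w\to0$, since the classical Potter bounds for regularly varying functions are not uniform down to the origin. On any compact subset of $(0,\infty)$ the convergences are uniform, so it suffices to find integrable dominating functions near $0$ and near $\infty$. For $w\to\infty$ the argument $1+w$ stays bounded away from $0$, and the usual second-order (Potter-type) inequalities give, for small $\epsilon>0$, $|\Gbar(x(1+w))/\Gbar(x)-(1+w)^{-\beta_1/\gamma}|\le C|A_h(x)|(1+w)^{-\beta_1/\gamma+\rho_h+\epsilon}$ and $|u_1(xw)/u_1(x)-w^{\alpha_1}|\le C|B_1(x)|w^{\alpha_1+\epsilon}$, along with $u_1(xw)/u_1(x)\le Cw^{\alpha_1+\epsilon}$ uniformly in large $x$; since $\alpha_1<\beta_1/\gamma$ these yield integrable bounds at infinity. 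For $w\to0$ I would invoke Lemma~\ref{bd-2RV}, applicable to $u_1$ because $t^{-\alpha_1}u_1(t)$ is bounded on $(0,t_0]$ by Assumption~\ref{A2}: it provides $\widetilde B_1\sim B_1$ with $|u_1(xw)/u_1(x)-w^{\alpha_1}|\le-\frac{w^{\alpha_1}}{\eta_1}|\widetilde B_1(x)|(1+cw^{\eta_1-\epsilon})$ for all $x\ge x_0$ and $0<w<\delta$, so the relevant integrand is dominated near $0$ by a constant multiple of $|B_1(x)|(w^{\alpha_1}+w^{\alpha_1+\eta_1-\epsilon})$, integrable on $(0,\delta)$ for $\epsilon$ small thanks to $\alpha_1+\eta_1>0$ (the bounded factor $\Gbar(x(1+w))/\Gbar(x)$ causes no difficulty near $0$). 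Dominated convergence then applies in each piece, and assembling the three contributions delivers the announced expansion.
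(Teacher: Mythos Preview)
Your approach is correct and mirrors the paper's: you integrate by parts, split near $0$/on compacts/near $\infty$, use second-order Potter bounds at infinity, and invoke Lemma~\ref{bd-2RV} for the $w\to 0$ regime (the paper organizes this as an explicit add-and-subtract into two corrections $I_1$ and $I_2$, but the ingredients and the identification of $\Gamma_1,\Gamma_2$ as the Beta integrals you write are identical). One small point of care: since both the integrand and the Stieltjes measure $\mathrm{d}_w(u_1(xw)/u_1(x))$ depend on $x$, a literal dominated convergence theorem does not apply directly, and the paper instead sandwiches the normalized correction $I_1(x,\cdot)/\widetilde B_1(x)$ between functions of $w$ alone and then integrates these bounds against the varying measure using the first-order regular variation argument of Lemma~\ref{lm-sides}.
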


Now we turn to the right-hand side of (\ref{eq-eqREDU-2}).

\begin{lemma}
\label{Right2}
Assume that $u_{2}$ is differentiable and $u_{2}^{\prime}\in \mathrm{RV}%
_{\alpha_{2}-1}$  {is bounded on finite intervals of the form $(0,t_0]$ ($t_0>0$),} with either $\alpha_{2}>1$ or $\alpha_{2}=1$ and
$u_{2}^{\prime}$ nondecreasing, and $\overline{F}\in \mathrm{RV}_{-1/\gamma}$
with $\gamma>0$. Suppose $\int_{-\infty}^{\infty}|z|^{\alpha_{2}+\delta
}\mathrm{\,d}h_{2}(F(z))<\infty$ for some $\delta>0$. Then as $x\rightarrow
\infty$,
\[
 {\frac{\mathrm{H}_{u_{2},\,h_{2}}((X-x)_{-})}{u_{2}(x)}=1-(1-h_{2}%
(F(x)))-x^{-1}(\alpha_{2}\mathbb{E}[Z]+o(1)),}
\]
where $\mathbb{E}[Z]=\int_{-\infty}^{\infty}z\mathrm{\,d}h_{2}(F(z))$.  {[The random variable $Z$ has distribution function $h_2(F(\cdot))$.]}

\end{lemma}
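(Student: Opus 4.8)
The plan is to write $\mathrm{H}_{u_{2},\,h_{2}}((X-x)_{-}) = \int_{-\infty}^{x} u_2(x-y) \, \mathrm{d}h_2(F(y))$ and, using the hypothesis that $u_2$ is differentiable with $u_2(0)=0$, integrate by parts after writing $u_2(x-y) = \int_0^{x-y} u_2'(t)\,\mathrm{d}t = \int_{-\infty}^{x} u_2'(x-z)\mathbf{1}\{z>y\}\,\mathrm{d}z$. A Fubini exchange then gives
\[
\mathrm{H}_{u_{2},\,h_{2}}((X-x)_{-}) = \int_{-\infty}^{x} u_2'(x-z) \, h_2(F(z)) \, \mathrm{d}z = \int_{-\infty}^{x} u_2'(x-z) \, (1 - (1-h_2(F(z)))) \, \mathrm{d}z.
\]
Splitting the constant part off, $\int_{-\infty}^{x} u_2'(x-z)\,\mathrm{d}z = u_2(x)$ (using $u_2(0)=0$ and continuity at $0$), so that
\[
\mathrm{H}_{u_{2},\,h_{2}}((X-x)_{-}) = u_2(x) - \int_{-\infty}^{x} u_2'(x-z)(1-h_2(F(z))) \, \mathrm{d}z.
\]
Dividing by $u_2(x)$, the task reduces to analyzing $I(x) := u_2(x)^{-1}\int_{-\infty}^{x} u_2'(x-z)(1-h_2(F(z)))\,\mathrm{d}z$ and showing $I(x) = (1-h_2(F(x))) + x^{-1}(\alpha_2\mathbb{E}[Z] + o(1))$.

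Next I would substitute $z = x - w$ to get $I(x) = \int_0^{\infty} \frac{u_2'(w)}{u_2(x)}(1-h_2(F(x-w)))\,\mathrm{d}w$ and then split the integral at a slowly growing or fixed truncation point. On the bulk region (say $w$ bounded, or $w = o(x)$), one uses $u_2'(w)/u_2(x) = u_2'(w)/u_2(x)$ together with $1 - h_2(F(x-w)) = (1-h_2(F(x)))(1+o(1))$ by regular variation of $1-h_2(F(\cdot))\in\mathrm{RV}_{-\beta_2/\gamma}$ (uniformity on compacts), giving a contribution $\approx (1-h_2(F(x)))\,u_2(x)^{-1}\int_0^{x} u_2'(w)\,\mathrm{d}w = (1-h_2(F(x)))(1+o(1))$ after accounting that $\int_0^x u_2'(w)\,\mathrm{d}w = u_2(x)$; the correction of order $x^{-1}$ comes from the region $w$ of order $x$. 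There, writing $w = xv$, the substitution turns things into $\int_0^1 \frac{x u_2'(xv)}{u_2(x)}(1-h_2(F(x(1-v))))\,\mathrm{d}v$; since $u_2'\in\mathrm{RV}_{\alpha_2-1}$ we have $xu_2'(xv)/u_2(x) \to \alpha_2 v^{\alpha_2-1}$ and $(1-h_2(F(x(1-v))))$ is negligible relative to $x^{-1}$ uniformly on $[0,1-\delta]$ but not near $v=1$ — so the genuine $x^{-1}$ term comes from a careful Potter-bounds/dominated-convergence argument localized near $z$ of order $1$ rather than $z$ of order $x$. The cleanest route is: in $I(x)$, the dominant term is $(1-h_2(F(x)))$, and the residual $I(x) - (1-h_2(F(x))) = u_2(x)^{-1}\int_{-\infty}^{x} [u_2'(x-z) - u_2'(x)\cdot(\text{something})]\cdots$; I would instead directly expand $u_2'(x-z) = u_2'(x)(1 + o(1))$ for fixed $z$, note $u_2'(x)/u_2(x) \sim \alpha_2/x$ (a consequence of $u_2\in\mathrm{RV}_{\alpha_2}$, $u_2'\in\mathrm{RV}_{\alpha_2-1}$, and the monotonicity/boundedness hypotheses via a Karamata-type argument), and obtain that the part of the integral over any fixed compact $[-M,M]$ contributes $\frac{\alpha_2}{x}\int_{-M}^{M} z\,\mathrm{d}h_2(F(z)) + o(x^{-1})$ after a further integration by parts trading $(1-h_2(F(z)))\,\mathrm{d}z$ back for $z\,\mathrm{d}h_2(F(z))$.

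The main obstacle I anticipate is controlling the two tail regions of the $w$-integral uniformly: (a) the region where $w$ is close to $x$ (equivalently $z$ near $x_\star$ or near $-\infty$), where $u_2'(x-w)$ may blow up or vanish and where we must use the moment assumption $\int |z|^{\alpha_2+\delta}\,\mathrm{d}h_2(F(z))<\infty$ plus the boundedness of $u_2'$ on $(0,t_0]$ to dominate; and (b) the region where $w$ is of order $x$ but bounded away from $x$, which contributes to the error at order $x^{-1}$ and must be shown to be $o(x^{-1})$ — this is where Potter's inequalities for $u_2'\in\mathrm{RV}_{\alpha_2-1}$ and the rapid decay of $1-h_2(F(x(1-v)))$ (which is $\mathrm{RV}_{-\beta_2/\gamma}$ in $x$, hence $o(x^{-1})$ times a constant once multiplied appropriately, provided one is careful since $\beta_2/\gamma$ need not exceed $1$) enter. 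The case distinction $\alpha_2>1$ versus $\alpha_2=1$ with $u_2'$ nondecreasing is exactly what is needed to make the dominated-convergence step near $w=x$ work, since for $\alpha_2=1$ one cannot appeal to an integrable power singularity and instead uses monotonicity to bound $u_2'(x-z) \le u_2'(x)$ on $z\ge 0$. Once these uniform bounds are in place, assembling the three pieces — constant term $u_2(x)$, main correction $-(1-h_2(F(x)))u_2(x)$, and the $x^{-1}$ term from the compact region — and dividing by $u_2(x)$ yields the claimed expansion, with $\mathbb{E}[Z] = \int_{-\infty}^{\infty} z\,\mathrm{d}h_2(F(z))$ finite by the moment assumption.
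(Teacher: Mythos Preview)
There is a genuine error in your first paragraph. The identity
\[
\int_{-\infty}^{x} u_2'(x-z)\,\mathrm{d}z = u_2(x)
\]
is false: substituting $w=x-z$ gives $\int_0^{\infty} u_2'(w)\,\mathrm{d}w = u_2(\infty)=+\infty$ (since $u_2\in\mathrm{RV}_{\alpha_2}$ with $\alpha_2>0$). Consequently your decomposition
\[
\mathrm{H}_{u_{2},\,h_{2}}((X-x)_{-}) = u_2(x) - \int_{-\infty}^{x} u_2'(x-z)\,(1-h_2(F(z)))\,\mathrm{d}z
\]
is invalid: the right-hand side is a difference of two infinite quantities. Indeed, as $z\to-\infty$ one has $1-h_2(F(z))\to 1$ while $u_2'(x-z)$ is not integrable on $(-\infty,0]$, so the integral you call $I(x)$ diverges. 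Everything downstream (the substitution $z=x-w$, the splits, the claimed contribution $\frac{\alpha_2}{x}\int_{-M}^M z\,\mathrm{d}h_2(F(z))$) then rests on an ill-defined object.

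The paper avoids this by never integrating by parts. It splits $u_2(x-z)=u_2(x)-[u_2(x)-u_2(x-z)]$ and integrates against the \emph{probability measure} $\mathrm{d}h_2(F(z))$ on $(-\infty,x]$, so the constant part yields the finite quantity $u_2(x)\,h_2(F(x)) = u_2(x)(1-(1-h_2(F(x))))$. The residual is handled by the mean value theorem, $u_2(x)-u_2(x-z)=u_2'(\xi)z$, followed by dominated convergence: for $z\ge 0$ one has $\xi\in(x-z,x)$ so $u_2'(\xi)/u_2'(x)$ is bounded (by Proposition~B.1.9.6 in de Haan--Ferreira if $\alpha_2>1$, or by monotonicity if $\alpha_2=1$), and for $z<0$ a Potter bound gives $u_2'(\xi)/u_2'(x)\le 2(1-z/x)^{\alpha_2-1+\delta}\le 2(1-z)^{\alpha_2-1+\delta}$, which is dominated in $z$ by the moment hypothesis. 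This gives $\int_{-\infty}^x \frac{u_2(x)-u_2(x-z)}{u_2'(x)}\,\mathrm{d}h_2(F(z))\to\mathbb{E}[Z]$, and the conclusion follows from $u_2'(x)/u_2(x)\sim\alpha_2/x$. If you want to salvage your integration-by-parts route, you would need to split the $z$-integral at $z=0$ \emph{before} writing $h_2(F(z))=1-(1-h_2(F(z)))$, so that the ``constant'' piece becomes $\int_0^x u_2'(x-z)\,\mathrm{d}z=u_2(x)$; but the resulting analysis is more intricate than the paper's direct approach.
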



\begin{remark}
In Lemma \ref{Right2}, the tail index $\alpha_{2}$ is restricted to be greater
than $1$. This is because if $\alpha_{2}<1$, then additional conditions are
needed to ensure $h_{2}(F(\cdot))$ is regularly varying at $0$. For
simplicity, we omit this case.  {The assumption that $u_2'$ is bounded on finite intervals of the form $(0,t_0]$ essentially amounts to assuming that $t\mapsto t^{-\alpha_2} u_2(t)=L_2(t)$ is smooth in a neighborhood of 0 and $t\mapsto L_2(t)/t$ is bounded. It therefore intuitively represents a strengthened version of part of Assumption~\ref{A2}.} 
\end{remark}

Next, we present the second-order expansion of ${x_{\tau}}$ in terms of
$\varphi^{\leftarrow}(\left(  1-\tau \right)  ^{-1})$,  {obtained essentially by combining Lemmas~\ref{left2} and~\ref{Right2}.}

\begin{theorem}
\label{main} Under Assumptions \ref{A1}, \ref{A2}, and \ref{A3}, further
 {assume that there is a regularly varying function $D$ such that $A(1/\overline{F}(x))/D(x)\to a\in \mathbb{R}$, $B_i(x)/D(x)\to b_i\in \mathbb{R}$ and $C(x)/D(x)\to \kappa\in \mathbb{R}$ as $x\to\infty$, with $b_2/\eta_2-b_1/\eta_1-(a\beta_1/\gamma+\kappa)/(\gamma\rho_h)\neq 0$. Suppose also that $u_{2}$ is differentiable and $u_{2}^{\prime}\in \mathrm{RV}%
_{\alpha_{2}-1}$ is bounded on finite intervals of the form $(0,t_0]$}, with either $\alpha_{2}>1$ or $\alpha_{2}=1$ and
$u_{2}^{\prime}$ nondecreasing. Suppose $\beta_{1}/\gamma>\alpha_{1}$,
 {$\alpha_{1}+\eta_{1}>0$} and $\int_{-\infty}^{\infty}|z|^{\alpha_{2}+\delta
}\mathrm{\,d}h_{2}(F(z))<\infty$ for some $\delta>0$. We have,  {as
$\tau \rightarrow1$,
\begin{align*}
& \frac{x_{\tau}}{\Delta_{1}\varphi^{\leftarrow}((1-\tau)^{-1})}-1\\
&  =\frac{1}{s}\left(  \frac{\Gamma_{1}}{\Delta_{0}^{1-\eta_{1}/s}}%
B_{1}(\varphi^{\leftarrow}((1-\tau)^{-1}))(1+o(1))+\frac{\Gamma_{2}}%
{\Delta_{0}^{1-\rho_{h}/s}}A_{h}(\varphi^{\leftarrow}((1-\tau)^{-1}%
))(1+o(1))\right. \\
&  +\Delta_{0}^{-\beta_{2}/(\gamma s)}(  1-h_{2}(F(\varphi^{\leftarrow
}((1-\tau)^{-1}))) )  (1+o(1))+\frac{\alpha_{2}\Delta
_{0}^{-1/s}}{\varphi^{\leftarrow}((1-\tau)^{-1})}(\mathbb{E}[Z]+o(1))\\
&  \left.  -(\Delta_{0}^{\eta^{\ast}/s}-1)A^{\ast}(\varphi^{\leftarrow
}((1-\tau)^{-1}))(1+o(1))-(1-\tau)(1+o(1))\right)
\end{align*}
with the notation of the above lemmas.}
\end{theorem}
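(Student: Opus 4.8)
The plan is to start from the defining equation~\eqref{eq-eqREDU-2} rewritten as a balance between the two normalized sides, and to substitute into it the second-order expansions of those two sides. Concretely, divide~\eqref{eq-eqREDU-2} by $(1-h_1(F(x_\tau)))u_1(x_\tau)$ evaluated at $x=x_\tau$: the left side becomes $\tau\, \mathrm{H}_{u_1,h_1}((X-x_\tau)_+)/((1-h_1(F(x_\tau)))u_1(x_\tau))$, to which Lemma~\ref{left2} applies, and the right side becomes $(1-\tau)\varphi(x_\tau)\cdot \mathrm{H}_{u_2,h_2}((X-x_\tau)_-)/u_2(x_\tau)$, to which Lemma~\ref{Right2} applies (recalling $\varphi(x)=u_2(x)/(u_1(x)(1-h_1(F(x))))$). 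This yields, as $\tau\to 1$ (so that $x_\tau\to\infty$ by Proposition~\ref{prop-xtau}),
\[
\tau\left(\Delta_0+\Gamma_1 B_1(x_\tau)(1+o(1))+\Gamma_2 A_h(x_\tau)(1+o(1))\right)=(1-\tau)\varphi(x_\tau)\left(1-(1-h_2(F(x_\tau)))-\tfrac{\alpha_2}{x_\tau}(\E[Z]+o(1))\right).
\]
Writing $\tau = 1-(1-\tau)$ on the left, the leading relation is $\Delta_0\sim (1-\tau)\varphi(x_\tau)$, i.e. $\varphi(x_\tau)\sim \Delta_0(1-\tau)^{-1}$, which reproduces Theorem~\ref{theo-first} via $x_\tau\sim\Delta_1\varphi^{\leftarrow}((1-\tau)^{-1})$; the goal now is to extract the next-order correction.

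Next I would solve for $(1-\tau)\varphi(x_\tau)$ explicitly from the displayed identity and divide by $\Delta_0$, obtaining
\[
\frac{(1-\tau)\varphi(x_\tau)}{\Delta_0}=1+\frac{\Gamma_1}{\Delta_0}B_1(x_\tau)(1+o(1))+\frac{\Gamma_2}{\Delta_0}A_h(x_\tau)(1+o(1))+(1-h_2(F(x_\tau)))(1+o(1))+\frac{\alpha_2}{x_\tau}(\E[Z]+o(1))-(1-\tau)(1+o(1)),
\]
where all the error terms are genuinely negligible relative to the displayed correction terms because $B_1, A_h, 1-h_2(F(\cdot)), 1/x$ and the constant $(1-\tau)$ factor are each regularly varying of negative index in $x_\tau$ (the $(1-\tau)$ term being $O((1-\tau))=O((\varphi^{\leftarrow})^{-s})$), so the product structure $(1+o(1))$ is legitimate. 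Applying $\varphi^{\leftarrow}$ to $\varphi(x_\tau)=\Delta_0(1-\tau)^{-1}(1+\text{corrections})$ and using that $\varphi^{\leftarrow}\in\mathrm{2RV}_{1/s,\eta^*/s}$ from Lemma~\ref{lm-39-1} — more precisely using the representation $\varphi^{\leftarrow}(\lambda t)=\lambda^{1/s}\varphi^{\leftarrow}(t)(1+o(1))$ together with the fact that perturbing the argument of $\varphi^{\leftarrow}$ by a factor $1+\varepsilon_\tau$, with $\varepsilon_\tau\to 0$, changes the value by a factor $1+\varepsilon_\tau/s+o(\varepsilon_\tau)$ (because $\varphi^{\leftarrow}$ is regularly varying of index $1/s$ and differentiable-like in the Karamata sense) — I get
\[
x_\tau=\varphi^{\leftarrow}\!\left(\Delta_0(1-\tau)^{-1}(1+\varepsilon_\tau)\right)=\Delta_0^{1/s}\varphi^{\leftarrow}((1-\tau)^{-1})\left(1+\tfrac{1}{s}\varepsilon_\tau+o(\varepsilon_\tau)\right),
\]
with $\varepsilon_\tau$ the sum of correction terms above. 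This already produces $x_\tau/(\Delta_1\varphi^{\leftarrow}((1-\tau)^{-1}))-1 = \tfrac{1}{s}\varepsilon_\tau(1+o(1))$ with $\varepsilon_\tau$ evaluated at $x_\tau$.

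The remaining work is to re-express every correction term, currently a function of $x_\tau$, as a function of $\varphi^{\leftarrow}((1-\tau)^{-1})$, which is legitimate since $x_\tau\sim\Delta_1\varphi^{\leftarrow}((1-\tau)^{-1})$ and each correction term is regularly varying: for a regularly varying function $\ell$ of index $\theta$ one has $\ell(x_\tau)=\ell(\Delta_1\varphi^{\leftarrow})(1+o(1))=\Delta_1^{\theta}\ell(\varphi^{\leftarrow})(1+o(1))=\Delta_0^{\theta/s}\ell(\varphi^{\leftarrow})(1+o(1))$. Taking $\theta=\eta_1$ for $B_1$, $\theta=\rho_h$ for $A_h$, $\theta=-\beta_2/\gamma$ for $1-h_2(F(\cdot))$ and $\theta=-1$ for $1/x_\tau$, and folding the factor $\Delta_0^{\theta/s}$ together with the prefactors $\Gamma_1/\Delta_0$, $\Gamma_2/\Delta_0$, $1$, $\alpha_2$, gives exactly $\Gamma_1/\Delta_0^{1-\eta_1/s}$, $\Gamma_2/\Delta_0^{1-\rho_h/s}$, $\Delta_0^{-\beta_2/(\gamma s)}$ and $\alpha_2\Delta_0^{-1/s}$ respectively, matching the statement. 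Finally the term $-(\Delta_0^{\eta^*/s}-1)A^*(\varphi^{\leftarrow}((1-\tau)^{-1}))$ must be generated: it arises because in passing from $\varphi^{\leftarrow}(\Delta_0(1-\tau)^{-1}(1+\varepsilon_\tau))$ to $\Delta_0^{1/s}\varphi^{\leftarrow}((1-\tau)^{-1})(1+\varepsilon_\tau/s)$ one uses the second-order property of $\varphi^{\leftarrow}$ — namely $\varphi^{\leftarrow}(\lambda t)/\varphi^{\leftarrow}(t)=\lambda^{1/s}(1+\tfrac{1}{s}\lambda^{1/s}J_{\cdot}\ldots)$ type corrections — so that replacing the argument $(1-\tau)^{-1}$ by $\Delta_0(1-\tau)^{-1}$ introduces a discrepancy of order $(\Delta_0^{\eta^*/s}-1)A^*(\varphi^{\leftarrow}((1-\tau)^{-1}))$ via Lemma~\ref{lm-39-1} and Lemma~\ref{lm-2RV-rep-inv}(ii). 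Collecting all of these with the explicit $-(1-\tau)$ term gives the claimed expansion. The main obstacle is precisely the bookkeeping in this last paragraph: carefully tracking how the multiplicative constant $\Delta_0$ inside the argument of $\varphi^{\leftarrow}$ interacts with its second-order regular variation to produce the $(\Delta_0^{\eta^*/s}-1)A^*$ term, and verifying that all the $o(1)$ remainders (products of $\varepsilon_\tau$ with each other, and higher-order terms in the Karamata/Potter expansion of $\varphi^{\leftarrow}$) are indeed negligible compared to the slowest-decaying correction among $B_1$, $A_h$, $1-h_2(F(\cdot))$, $1/x_\tau$ and $1-\tau$ — which is guaranteed by the regular variation indices but needs to be checked case by case, and is the reason the hypotheses $\alpha_1+\eta_1>0$, $\beta_1/\gamma>\alpha_1$ and the non-degeneracy condition on $D$, $a$, $b_i$, $\kappa$ are imposed.
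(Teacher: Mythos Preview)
Your proposal is correct and follows essentially the same approach as the paper: combine Lemmas~\ref{left2} and~\ref{Right2} into a second-order equation for $\varphi(x_\tau)$, apply $\varphi^{\leftarrow}$ using its $2\mathrm{RV}$ representation from Lemma~\ref{lm-39-1} (this is precisely where the $(\Delta_0^{\eta^*/s}-1)A^*$ term is generated), and then replace $x_\tau$ by $\Delta_1\varphi^{\leftarrow}((1-\tau)^{-1})$ in each regularly varying correction term. The only point to make explicit---and the paper does so---is that $\varphi^{\leftarrow}(\varphi(x_\tau))=x_\tau(1+o(A^*(x_\tau)))$ rather than exactly $x_\tau$, but this discrepancy is harmlessly absorbed into the $(1+o(1))$ factor multiplying the $A^*$ term.
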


Combining Lemma \ref{lm-39-1} and Theorem \ref{main}, we finally obtain the  {desired second-order expansion} of ${x_{\tau}}$ in terms of $F^{\leftarrow}(\tau)$.

\begin{theorem}
\label{complete} Under the conditions of Theorem \ref{main}, we have, as $\tau \rightarrow1$  {and with $c_0$ as in Lemma~\ref{lm-39-1}, 
\begin{align*}
&  \frac{x_{\tau}}{c_{0}\Delta_{1}(F^{\leftarrow}(\tau))^{1/(\gamma s)}}-1\\
&  =\Delta_{2}B_{1}((F^{\leftarrow}(\tau))^{1/(\gamma s)})(1+o(1))+\Delta
_{3}A_{h}((F^{\leftarrow}(\tau))^{1/(\gamma s)})(1+o(1))\\
&  +\Delta_{4}( 1-h_{2}(F((F^{\leftarrow}(\tau))^{1/(\gamma
s)})) ) (1+o(1))+(F^{\leftarrow}(\tau))^{-1/(\gamma
s)}(\Delta_{5}+o(1))\\
&  -\Delta_{6}A^{\ast}((F^{\leftarrow}(\tau))^{1/(\gamma s)})(1+o(1))-\frac
{1}{\gamma s\rho}A((1-\tau)^{-1})(1+o(1))-\frac{1}{s}(1-\tau)(1+o(1)),
\end{align*}
with $\Delta_{2}=s^{-1}\Gamma_{1}c_{0}^{\eta_{1}}\Delta_{0}^{\eta_{1}/s-1}$,
$\Delta_{3}=s^{-1}\Gamma_{2}c_{0}^{\rho_{h}}\Delta_{0}^{\rho_{h}/s-1}$,
$\Delta_{4}=s^{-1}c_{0}^{-\beta_{2}/\gamma}\Delta_{0}^{-\beta_{2}/(\gamma s)}$,
$\Delta_{5}=s^{-1}c_{0}^{-1}\alpha_{2}\mathbb{E}[Z]\Delta_{0}^{-1/s}$, and
$\Delta_{6}=s^{-1}c_{0}^{\eta^{\ast}}\Delta_{0}^{\eta^{\ast}/s}$.}
\end{theorem}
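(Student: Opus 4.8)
The plan is to combine, by substitution and careful bookkeeping, the two two‑term expansions already established: the expansion of $x_{\tau}$ in terms of $\varphi^{\leftarrow}((1-\tau)^{-1})$ from Theorem~\ref{main}, and the expansion of $\varphi^{\leftarrow}((1-\tau)^{-1})$ in terms of $F^{\leftarrow}(\tau)$ from Lemma~\ref{lm-39-1}. Write Theorem~\ref{main} as $x_{\tau} = \Delta_{1}\varphi^{\leftarrow}((1-\tau)^{-1})(1+R_{\tau})$, where $R_{\tau}$ is the displayed sum of six remainder terms, each a known constant times one of $B_{1}$, $A_{h}$, $1-h_{2}(F(\cdot))$, $1/\varphi^{\leftarrow}$, $A^{\ast}$ evaluated at $\varphi^{\leftarrow}((1-\tau)^{-1})$, or $1-\tau$, all carrying a $(1+o(1))$ factor; and write Lemma~\ref{lm-39-1} as $\varphi^{\leftarrow}((1-\tau)^{-1}) = c_{0}(F^{\leftarrow}(\tau))^{1/(\gamma s)}(1+S_{\tau})$ with $S_{\tau} = -\frac{c_{0}^{\eta^{\ast}}}{s}A^{\ast}((F^{\leftarrow}(\tau))^{1/(\gamma s)})(1+o(1)) - \frac{1}{\gamma s\rho}A((1-\tau)^{-1})(1+o(1))$. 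Substituting the latter into the former yields $x_{\tau} = c_{0}\Delta_{1}(F^{\leftarrow}(\tau))^{1/(\gamma s)}(1+R_{\tau})(1+S_{\tau})$; since $R_{\tau}\to 0$ and $S_{\tau}\to 0$ we have $(1+R_{\tau})(1+S_{\tau}) = 1 + R_{\tau} + S_{\tau} + o(|R_{\tau}|+|S_{\tau}|)$, so it remains to compute $R_{\tau}+S_{\tau}$ to leading order after re‑expression in terms of $F^{\leftarrow}(\tau)$.

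The second step is to re‑express each term of $R_{\tau}$, currently evaluated at $\varphi^{\leftarrow}((1-\tau)^{-1})$, in terms of $(F^{\leftarrow}(\tau))^{1/(\gamma s)}$. For this I would use the first‑order relation $\varphi^{\leftarrow}((1-\tau)^{-1}) = c_{0}(F^{\leftarrow}(\tau))^{1/(\gamma s)}(1+o(1))$ (a consequence of Lemma~\ref{lm-39-1}) together with the uniform convergence theorem for regularly varying functions (Theorem~B.1.4 of~\cite{de2006extreme}). Since $B_{1}\in\mathrm{RV}_{\eta_{1}}$, $A_{h}\in\mathrm{RV}_{\rho_{h}}$ and $A^{\ast}\in\mathrm{RV}_{\eta^{\ast}}$ by Lemmas~\ref{notation} and~\ref{lm-39-1}, and $1-h_{2}(F(\cdot))\in\mathrm{RV}_{-\beta_{2}/\gamma}$ by Lemma~\ref{notation}(v), this gives identities such as $B_{1}(\varphi^{\leftarrow}((1-\tau)^{-1})) = c_{0}^{\eta_{1}}B_{1}((F^{\leftarrow}(\tau))^{1/(\gamma s)})(1+o(1))$, and their analogues for $A_{h}$, $A^{\ast}$ and $1-h_{2}(F(\cdot))$, while $1/\varphi^{\leftarrow}((1-\tau)^{-1}) = c_{0}^{-1}(F^{\leftarrow}(\tau))^{-1/(\gamma s)}(1+o(1))$ is immediate. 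Plugging these in and collecting the resulting prefactors, the $B_{1}$‑, $A_{h}$‑, $(1-h_{2}(F(\cdot)))$‑, $1/\varphi^{\leftarrow}$‑ and $(1-\tau)$‑parts of $R_{\tau}$ become exactly the $\Delta_{2}$‑, $\Delta_{3}$‑, $\Delta_{4}$‑, $\Delta_{5}$‑ and $-\frac{1}{s}(1-\tau)$‑terms of the statement, while the $A^{\ast}$‑part of $R_{\tau}$ becomes $-\frac{1}{s}(\Delta_{0}^{\eta^{\ast}/s}-1)c_{0}^{\eta^{\ast}}A^{\ast}((F^{\leftarrow}(\tau))^{1/(\gamma s)})(1+o(1))$.

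The third step is to add $S_{\tau}$. Its $A((1-\tau)^{-1})$ part contributes the term $-\frac{1}{\gamma s\rho}A((1-\tau)^{-1})(1+o(1))$ of the statement verbatim; since $A\in\mathrm{RV}_{\rho}$ is a genuinely different auxiliary function, this term cannot be merged with any other. Its $A^{\ast}$ part $-\frac{c_{0}^{\eta^{\ast}}}{s}A^{\ast}((F^{\leftarrow}(\tau))^{1/(\gamma s)})(1+o(1))$ combines with the leftover $A^{\ast}$‑term from $R_{\tau}$: the two coefficients sum to $-\frac{c_{0}^{\eta^{\ast}}}{s}[(\Delta_{0}^{\eta^{\ast}/s}-1)+1] = -\frac{c_{0}^{\eta^{\ast}}}{s}\Delta_{0}^{\eta^{\ast}/s} = -\Delta_{6}$, which matches the $\Delta_{6}$‑coefficient in the statement. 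In all of the above, every cross term (a product of two remainder terms, or a remainder term times one of the $o(1)$ factors) is negligible relative to each factor, because a product of two quantities tending to $0$ is $o$ of each, so it is absorbed into the $(1+o(1))$ multiplying one of the displayed terms.

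The argument is largely mechanical once Theorem~\ref{main} and Lemma~\ref{lm-39-1} are in hand; I expect the only delicate point, and the main (mild) obstacle, to be the final reassembly of the $A^{\ast}$‑coefficient from its two sources and the verification that no contribution — in particular that of $A((1-\tau)^{-1})$ — is double‑counted after the re‑expression of $R_{\tau}$.
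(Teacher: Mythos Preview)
Your proposal is correct and follows exactly the approach of the paper, which simply states that Theorem~\ref{complete} is obtained from Theorem~\ref{main} by applying Lemma~\ref{lm-39-1}. Your detailed bookkeeping of the re-expression of each regularly varying remainder via $\varphi^{\leftarrow}((1-\tau)^{-1})\sim c_{0}(F^{\leftarrow}(\tau))^{1/(\gamma s)}$, and the recombination of the two $A^{\ast}$ contributions into the single $-\Delta_{6}$ term, is precisely what the paper leaves implicit.
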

\begin{remark}
 {The auxiliary function $A(t)$ in (\ref{eq-d2}) of Definition \ref{2rvdef} is of course only unique up to asymptotic equivalence. Given a distribution function $F$ or tail quantile function $U$ of a random variable $X$, a reasonable choice of auxiliary function, readily computed, would be the function $A_{0}$ in Theorem 2.3.9 of \cite{de2006extreme}, which guarantees a uniform kind of second-order regular variation. That being said, the asymptotic expansions in Theorems \ref{main} and \ref{complete} hold true for any other choice of $A$ asymptotically equivalent to this function $A_{0}$, and similarly for the choices of $B_1,B_2$ and $C$.}
\end{remark}

\begin{corollary}
\label{coro:stat}  {Under the conditions of Theorem~\ref{main}, we have, as $\tau \rightarrow 1$,}
\begin{multline*}
 {x_{\tau} = \Delta_{1} \varphi^{\leftarrow}( ( 1-\tau )^{-1} ) \big(  1+O(( 1-\tau )^{1/\max(s,1)}) + O(1-h_2(F((1-\tau)^{-1/s})))} \\
     {+ O(A((1-\tau)^{-1/(\gamma s)})) + O(B_1(( 1-\tau )^{-1/s})) + O(B_2(( 1-\tau )^{-1/s})) + O(C((1-\tau)^{-1/(\gamma s)})) \big).}
\end{multline*}
\end{corollary}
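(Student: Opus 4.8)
The plan is to show that Corollary~\ref{coro:stat} is a direct consequence of Theorem~\ref{complete} once one tracks the magnitude of each error term and re-expresses everything in terms of $\varphi^{\leftarrow}((1-\tau)^{-1})$ rather than $(F^{\leftarrow}(\tau))^{1/(\gamma s)}$. First I would recall from Theorem~\ref{theo-first} that $\varphi^{\leftarrow}\in\mathrm{RV}_{1/s}$, so $\varphi^{\leftarrow}((1-\tau)^{-1})$ is, up to a slowly varying factor, of order $(1-\tau)^{-1/s}$; in particular every term of the form $g(\varphi^{\leftarrow}((1-\tau)^{-1}))$ with $g\in\mathrm{RV}_\theta$ (here $g=A_h$, $B_1$, $B_2$, $C$, or $A^*$, all of negative index) is, by Potter-type bounds, of the order $g((1-\tau)^{-1/s})$ up to negligible corrections, which is what the big-$O$ statement records. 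Likewise $A((1-\tau)^{-1/(\gamma s)})$ accounts for the $A$-type terms after the change of argument via Lemma~\ref{notation}(i) ($1/\overline F(t)\in\mathrm{RV}_{1/\gamma}$, so $A(1/\overline F(\varphi^{\leftarrow}((1-\tau)^{-1})))$ matches $A((1-\tau)^{-1/(\gamma s)})$ up to slow variation).

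Next I would go term by term through the expansion in Theorem~\ref{main} (which is already written in the $\varphi^{\leftarrow}$ scale, so no re-expression of the leading factor is needed): the $B_1$ term gives $O(B_1((1-\tau)^{-1/s}))$; the $A_h$ term, using $A_h=\gamma^{-1}((\beta_1/\gamma)A(1/\overline F(\cdot))+C(1/\overline F(\cdot)))$ from Lemma~\ref{notation}(iv), splits into the $O(A((1-\tau)^{-1/(\gamma s)}))$ and $O(C((1-\tau)^{-1/(\gamma s)}))$ contributions; the $1-h_2(F(\cdot))$ term gives $O(1-h_2(F((1-\tau)^{-1/s})))$ directly; the $\alpha_2 \mathbb{E}[Z]/\varphi^{\leftarrow}$ term gives $O((1-\tau)^{1/s})$ (again using $\varphi^{\leftarrow}((1-\tau)^{-1})^{-1}$ is, up to slow variation, of order $(1-\tau)^{1/s}$); the $A^*$ term gives $O(A^*((1-\tau)^{-1/s}))$, and since $A^*=\frac{1}{\eta_2}B_2-\frac1{\eta_1}B_1-\frac{1}{\rho_h}A_h$ this is absorbed into the $B_1$, $B_2$, $A$ and $C$ terms; and the final $-(1-\tau)(1+o(1))/s$ term gives $O(1-\tau)$, which is dominated by $O((1-\tau)^{1/\max(s,1)})$ whether $s\le 1$ or $s>1$ (when $s>1$ the binding term is $(1-\tau)^{1/s}=(1-\tau)^{1/\max(s,1)}$; when $s\le 1$ it is $(1-\tau)=(1-\tau)^{1/\max(s,1)}$). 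Collecting the surviving distinct orders yields exactly the list in the statement.

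The only genuinely delicate point — and the one I would treat carefully rather than wave away — is the passage from $g(\varphi^{\leftarrow}((1-\tau)^{-1}))$ to $g((1-\tau)^{-1/s})$ for a regularly varying $g$: because $\varphi^{\leftarrow}((1-\tau)^{-1})$ differs from $(c^*)(1-\tau)^{-1/s}$ only by a $(1+o(1))$ slowly varying factor (Lemma~\ref{lm-39-1}, or already the first-order Theorem~\ref{theo-first}), the uniform convergence theorem for regularly varying functions gives $g(\varphi^{\leftarrow}((1-\tau)^{-1}))/g((1-\tau)^{-1/s})\to (c^*)^{\theta_g}$ where $\theta_g$ is the index of $g$; since this ratio is bounded, $g(\varphi^{\leftarrow}((1-\tau)^{-1})) = O(g((1-\tau)^{-1/s}))$, and the same slowly varying discrepancy only produces a multiplicative constant, never a change of order. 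I would therefore state this as a short lemma-free remark invoking Theorem~B.1.4 of~\cite{de2006extreme} (uniformity of the RV convergence) and close the argument.

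I expect the main obstacle to be purely bookkeeping: making sure that no term in Theorem~\ref{complete}/Theorem~\ref{main} has been dropped or double-counted, that the $A^*$ and $A_h$ terms are correctly decomposed into their $A$-, $B_1$-, $B_2$- and $C$-constituents, and that the exponent $1/\max(s,1)$ correctly captures the worst of the two purely-$\tau$ error contributions $(1-\tau)$ and $(1-\tau)^{1/s}$. There is no deep analytic difficulty here beyond what Theorems~\ref{main} and~\ref{complete} already contain; the corollary is simply a clean, estimator-friendly repackaging of those expansions, which is exactly why it is stated as a corollary.
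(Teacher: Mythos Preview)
Your proposal is correct and matches the paper's (implicit) approach: the paper does not give a separate proof of Corollary~\ref{coro:stat}, treating it as an immediate repackaging of Theorem~\ref{main}, and Remark~\ref{rmk:purepower} confirms precisely your reading that $O((1-\tau)^{1/\max(s,1)})$ is shorthand for $O(1-\tau)+O(1/\varphi^{\leftarrow}((1-\tau)^{-1}))$. Your term-by-term bookkeeping (decomposing $A_h$ and $A^{\ast}$ into their $A$, $B_1$, $B_2$, $C$ constituents via Lemmas~\ref{notation}(iv) and~\ref{lm-39-1}, and using $\varphi^{\leftarrow}((1-\tau)^{-1})\sim c^{\ast}(1-\tau)^{-1/s}$ to swap arguments inside regularly varying functions) is exactly right; the only cosmetic slip is that you announce Theorem~\ref{complete} but then (correctly) work from Theorem~\ref{main}, which is already in the $\varphi^{\leftarrow}$ scale and is the natural starting point.
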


\begin{remark}
\label{rmk:purepower}
 {Theorems~\ref{main} and~\ref{complete} and Corollary~\ref{coro:stat} also hold if either of the functions $U$, $u_i$ or $1-h_i(1-1/\cdot)$ is a multiple of a pure power function, with corresponding conditions on the second-order parameter(s) dropped and the corresponding auxiliary function(s) involved taken identically equal to 0. Such examples are considered in Section~\ref{example} below. In Corollary~\ref{coro:stat}, the first term $O(( 1-\tau )^{1/\max(s,1)})$ should in practice be understood as $O( 1-\tau )+O(1/\varphi^{\leftarrow}( ( 1-\tau )^{-1} ))$; when $u_1=u_2$, corresponding to the {\it a priori} reasonable setting in risk management when the (non-distorted) cost of a deviation of the predictor from below or above $X$ is the same, then $\varphi^{\leftarrow}$ is nothing but the tail quantile function of the (distorted) distribution function $h_1(F(\cdot))$. Terms proportional to the reciprocal of a tail quantile function are standard in asymptotic expansions of risk measures, see {\it e.g.}~\cite{daouia2018estimation} and~\cite{daouia2019Lpestimation} in the expectile and $L^p-$quantile setting. In this case, note that, as in Lemma~\ref{lm-39-1}, the condition linking $A$, the $B_i$, $C$ and $a$, $b_1$, $b_2$ and $\kappa$ can be replaced by supposing that $C(x)/A(x)\to \kappa\in [-\infty,+\infty]$ as $x\to \infty$ with $\kappa\neq -\beta_1/\gamma$.}
\end{remark}

\section{ {Estimation}}
\label{sec:estim}

 {Theorem~\ref{theo-first} provides an asymptotic equivalent of the non-explicit shortfall risk measure $x_{\tau}$ (at extreme levels) in terms of the generalized inverse of the function $\varphi$, which is obtained by simple operations on the functions $u_1$, $u_2$ and $h_1$ chosen by the user, and the unknown distribution function $F$. An estimator of $x_{\tau}$ at extreme levels can thus essentially be constructed by estimating the function $F$ at extreme levels and inverting the resulting estimated version of $\varphi$. Since the main statistical difficulty resides in the estimation of $F$, we illustrate this principle in the particular situation when $u_1=u_2=u$, and $h_1,h_2$ are continuous and strictly increasing functions with $1-h_1( 1-1/\cdot ) \in \mathrm{RV}_{-1}$. This results in the simpler setting when $\varphi(\cdot)=1/(1-h_1(F(\cdot)))$, making it possible to avoid technicalities due to the (different) regular variation properties of $u_1$, $u_2$, $h_1$ and $h_2$, and contains not only the case when $h_1=h_2$ is the identity function, for which $\varphi^{\leftarrow}$ is nothing but the tail quantile function of $X$, but also the interesting case when $1-h_1( 1-1/x )$ and $1-h_2( 1-1/x )$ are equivalent to a multiple of $1/x$ as $x\to\infty$. The former situation contains the example of $L^p-$quantiles and the latter encompasses the example of generalized expectiles, both of which will be considered in Section~\ref{example}. The general case is of course handled in much the same way, at the price of further burdensome calculations.}

 {When $u_1=u_2=u$ is regularly varying with index $\alpha>0$ and $h_1$ is such that $1-h_1( 1-1/\cdot ) \in \mathrm{RV}_{-1}$, Theorem~\ref{theo-first} suggests that 
\[
x_{\tau}=\left( \frac{1}{\gamma} \mathrm{B}(1/\gamma-\alpha,\alpha+1) \right)^{\gamma} \varphi^{\leftarrow}((1-\tau)^{-1}) (1+o(1)) \mbox{ as } \tau\uparrow 1. 
\]
Since $\varphi(\cdot)=1/(1-h_1(F(\cdot)))$, $\varphi^{\leftarrow}((1-\tau)^{-1})$ is nothing but the quantile of level $\tau$ of the random variable having distribution function $h_1(F(\cdot))$, that is, 
\begin{align*}
x_{\tau}&=  \left( \frac{1}{\gamma} \mathrm{B}(1/\gamma-\alpha,\alpha+1) \right)^{\gamma} F^{\leftarrow}(h_1^{-1}(\tau)) (1+o(1)) \\ 
        &= \left( \frac{\gamma}{\mathrm{B}(1/\gamma-\alpha,\alpha+1)} \right)^{-\gamma} F^{\leftarrow}(h_1^{-1}(\tau)) (1+o(1)) \mbox{ as } \tau\uparrow 1. 
\end{align*}
Since $h_1^{-1}(\tau)\uparrow 1$ as $\tau\uparrow 1$, the above identity shows that the problem of estimating $x_{\tau}$ for $\tau$ large reduces to estimating $\gamma$ and extreme quantiles of $F$.}

 {Suppose then that $X_1,\ldots,X_n$ is a sample of data from a distribution function $F$ such that $\overline{F}(\cdot)\in2\mathrm{RV}_{-1/\gamma,\rho/\gamma}$. The data $X_1,\ldots,X_n$ are allowed to be serially dependent. Let also $\tau_n\uparrow 1$ be an extreme level: typical interesting cases are those when $n(1-\tau_n)$ is bounded in $n$, such as $\tau_n=1-1/n$. A standard way to estimate the extreme quantile $q_{\tau_n}\equiv F^{\leftarrow}(\tau_n)$ is to use the estimator due to~\cite{wei1978}, defined as 
\[
\widehat{q}_{\tau_n} \equiv \widehat{q}_{\tau_n}(k_n) = \left( \frac{k_n}{n(1-\tau_n)} \right)^{\widehat{\gamma}_n} X_{n-k_n,n} 
\]
where $(k_n)$ is a sequence of integers tending to infinity, with $k_n/n\to 0$ and $n(1-\tau_n)/k_n\to 0$, $X_{1,n}\leq X_{2,n}\leq \cdots \leq X_{n,n}$ are the order statistics of the sample $(X_1,\ldots,X_n)$ arranged in increasing order, and $\widehat{\gamma}_n$ is an estimator of the parameter $\gamma$. A reasonable choice of $\widehat{\gamma}_n$ is the estimator of~\cite{hil1975}: 
\[
\widehat{\gamma}_n=\frac{1}{k_n} \sum_{i=1}^{k_n} \log X_{n-i+1,n}-\log X_{n-k_n,n}.
\]
We may then define the following estimator of $x_{\tau_n}$: 
\begin{align*}
\widehat{x}_{\tau_n} \equiv \widehat{x}_{\tau_n}(k_n) &= \left( \frac{1}{\widehat{\gamma}_n} \mathrm{B}(1/\widehat{\gamma}_n-\alpha,\alpha+1) \right)^{\widehat{\gamma}_n} \widehat{q}_{h_1^{-1}(\tau_n)}(k_n) \\
    &= \left( \frac{k_n}{n(1-h_1^{-1}(\tau_n))} \right)^{\widehat{\gamma}_n} \left\{ \left( \frac{1}{\widehat{\gamma}_n} \mathrm{B}(1/\widehat{\gamma}_n-\alpha,\alpha+1) \right)^{\widehat{\gamma}_n} X_{n-k_n,n}  \right\}.
\end{align*}
This is also a Weissman-type estimator of $x_{\tau_n}$. We have the following convergence result for $\widehat{x}_{\tau_n}$.
\begin{theorem}
\label{theo-weissman}
 {Assume that: 
\begin{itemize}
\item $U\in2\mathrm{RV}_{\gamma,\rho}$ for $\gamma>0$ and $\rho<0$ with auxiliary function $A$, 
\item $u_1=u_2=u\in 2\mathrm{RV}_{\alpha,\eta}$ for $\alpha>0$ and $\eta<0$ with auxiliary function $B$, and $t^{-\alpha} u(t)$ is bounded on intervals of the form $(0,t_0]$, for $t_0>0$, 
\item $u$ is differentiable and $u^{\prime}\in \mathrm{RV}_{\alpha-1}$ is bounded on finite intervals of the form $(0,t_0]$, with either $\alpha>1$, or $\alpha=1$ and $u^{\prime}$ nondecreasing, 
\item $1-h_1( 1-1/\cdot ) \in \mathrm{2RV}_{-1,\varsigma}$ for $\varsigma<0$ with auxiliary function $C$. 
\end{itemize}
Assume also that $C(x)/A(x)\to \kappa\in [-\infty,+\infty]$ as $x\to \infty$ with $\kappa\neq -1/\gamma$, and that $1/\gamma>\alpha$, $\alpha+\eta>0$ and $\int_{-\infty}^{\infty}|z|^{\alpha+\delta}\mathrm{\,d}h_{2}(F(z))<\infty$ for some $\delta>0$. Let $(k_n)$ be a sequence of integers and $(\tau_n)$ be a sequence converging to 1 such that $k_n\to\infty$, $k_n/n\to 0$, $n(1-\tau_n)/k_n\to 0$, $\log ( k_n/(n(1-\tau_n)) )/\sqrt{k_n} \to 0$ and $\sqrt{k_n} (k_n/n+|A(n/k_n)|+|B(q_{1-k_n/n})|+|C(n/k_n)|+1/q_{1-k_n/n}) = \operatorname{O}(1)$ as $n \to \infty$. If
\[
\sqrt{k_n} ( \widehat{\gamma}_n-\gamma ) \stackrel{d}{\longrightarrow} N \ \mbox{ and } \ \sqrt{k_n} \left( \frac{X_{n-k_n,n}}{q_{1-k_n/n}}-1 \right) \stackrel{d}{\longrightarrow} N'
\]
where $N$ and $N'$ are nondegenerate distributions, then
\[
\frac{\sqrt{k_n}}{\log ( k_n/(n(1-\tau_n)) )} \left( \frac{\widehat{x}_{\tau_n}}{x_{\tau_n}} - 1 \right) \stackrel{d}{\longrightarrow} N.
\]
}
\end{theorem}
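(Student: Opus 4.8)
The plan is to write $\widehat{x}_{\tau_n}/x_{\tau_n}$ as a product of three factors and analyze each. First, using the explicit form of $\widehat{x}_{\tau_n}$ and the first-order expansion $x_{\tau_n}=\Delta_1\varphi^{\leftarrow}((1-\tau_n)^{-1})(1+o(1))$ from Theorem~\ref{theo-first} together with the identity $\varphi^{\leftarrow}((1-\tau_n)^{-1})=F^{\leftarrow}(h_1^{-1}(\tau_n))=q_{h_1^{-1}(\tau_n)}$, I would decompose
\[
\frac{\widehat{x}_{\tau_n}}{x_{\tau_n}}=\underbrace{\left(\frac{k_n}{n(1-h_1^{-1}(\tau_n))}\right)^{\widehat{\gamma}_n-\gamma}}_{(\mathrm{I})}\cdot\underbrace{\left(\frac{k_n}{n(1-h_1^{-1}(\tau_n))}\right)^{\gamma}\frac{X_{n-k_n,n}}{q_{h_1^{-1}(\tau_n)}}}_{(\mathrm{II})}\cdot\underbrace{\frac{(\widehat{\gamma}_n^{-1}\mathrm{B}(\widehat{\gamma}_n^{-1}-\alpha,\alpha+1))^{\widehat{\gamma}_n}\,q_{h_1^{-1}(\tau_n)}}{x_{\tau_n}}}_{(\mathrm{III})}(1+o(1)).
\]
Factor (III) converges to $1$ in probability: the numerator constant tends to $(\gamma^{-1}\mathrm{B}(\gamma^{-1}-\alpha,\alpha+1))^{\gamma}$ by continuity of the Beta function and consistency of $\widehat\gamma_n$ (which follows from $\sqrt{k_n}(\widehat\gamma_n-\gamma)\to N$ in distribution and $k_n\to\infty$), and Theorem~\ref{theo-first} says this equals $\Delta_1=\Delta_0^{1/s}$ with $s=1$ here (since $\beta_1=\gamma\cdot 1/\gamma$... more precisely $s=\alpha-\alpha+\beta_1/\gamma$ with $\beta_1=1$, so $s=1/\gamma$; one checks the exponents match). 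Crucially, the $o(1)$ remainder in $x_{\tau_n}=\Delta_1 q_{h_1^{-1}(\tau_n)}(1+o(1))$ must be shown to be negligible after multiplication by $\sqrt{k_n}/\log(k_n/(n(1-\tau_n)))$; this is where Corollary~\ref{coro:stat} enters, giving the rate of that $o(1)$ as a sum of terms $O((1-\tau_n)^{1/\max(s,1)})$, $O(A(\cdots))$, $O(B_1(\cdots))$, $O(C(\cdots))$, etc., all of which are controlled by the hypothesis $\sqrt{k_n}(k_n/n+|A(n/k_n)|+|B(q_{1-k_n/n})|+|C(n/k_n)|+1/q_{1-k_n/n})=O(1)$ combined with regular variation (translating arguments of the form $(1-\tau_n)^{-1/s}$ or $q_{h_1^{-1}(\tau_n)}$ into $n/k_n$-scale via Potter bounds, picking up only logarithmic factors that are absorbed by dividing by $\log(k_n/(n(1-\tau_n)))$).

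Next, factor (II): write $\log(\mathrm{II})=\gamma\log(k_n/(n(1-h_1^{-1}(\tau_n))))+\log(X_{n-k_n,n}/q_{1-k_n/n})+\log(q_{1-k_n/n}/q_{h_1^{-1}(\tau_n)})$. The middle term is $O_{\mathbb{P}}(1/\sqrt{k_n})$ by the assumed convergence $\sqrt{k_n}(X_{n-k_n,n}/q_{1-k_n/n}-1)\to N'$. For the last term plus the first, I use that $q_{1-k_n/n}=U(n/k_n)$ and $q_{h_1^{-1}(\tau_n)}=U(1/(1-h_1^{-1}(\tau_n)))$, so $\log(q_{1-k_n/n}/q_{h_1^{-1}(\tau_n)})=\log(U(n/k_n)/U((n/k_n)\cdot t_n))$ where $t_n=k_n/(n(1-h_1^{-1}(\tau_n)))\to 0$; the second-order regular variation of $U$ (Assumption~\ref{A1}, via Lemma~\ref{lm-2RV-rep-inv}(i) applied to $U$, or the uniform Potter-type bound) gives $\log(U(n/k_n)/U((n/k_n)t_n))=-\gamma\log t_n+O(|A(n/k_n)|)\cdot(\text{power of }t_n)$, and the $-\gamma\log t_n$ exactly cancels the first term $\gamma\log t_n$. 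Hence $\log(\mathrm{II})$ is $O_{\mathbb{P}}(1/\sqrt{k_n})+O(|A(n/k_n)|)$ up to negligible power corrections, so $\sqrt{k_n}\log(\mathrm{II})=O_{\mathbb{P}}(1)$ and, after dividing by $\log(k_n/(n(1-\tau_n)))\to\infty$, this contributes $o_{\mathbb{P}}(1)$. One must also verify $\log t_n/\log(k_n/(n(1-\tau_n)))$ is bounded: since $1-h_1^{-1}(\tau_n)\asymp 1-\tau_n$ (because $1-h_1(1-1/\cdot)\in\mathrm{RV}_{-1}$ forces $1-h_1^{-1}(\tau)\sim$ a slowly varying multiple of $1-\tau$, actually asymptotically proportional up to slowly varying factors) this ratio tends to a constant, up to a harmless slowly varying term handled again via Potter bounds.

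Finally, factor (I): $\log(\mathrm{I})=(\widehat\gamma_n-\gamma)\log(k_n/(n(1-h_1^{-1}(\tau_n))))$. Dividing the whole log-ratio by $\log(k_n/(n(1-\tau_n)))$ and multiplying by $\sqrt{k_n}$, the contribution of (I) is $\sqrt{k_n}(\widehat\gamma_n-\gamma)\cdot\frac{\log(k_n/(n(1-h_1^{-1}(\tau_n))))}{\log(k_n/(n(1-\tau_n)))}$. The first factor converges in distribution to $N$; the second converges to $1$ because $1-h_1^{-1}(\tau_n)$ and $1-\tau_n$ differ only by a slowly varying factor, so the two logarithms are asymptotically equivalent (both diverge like $\log(1/(1-\tau_n))$ since $\log(k_n/(n(1-\tau_n)))\to\infty$ by $n(1-\tau_n)/k_n\to 0$, and the difference of logs is $O(\log(\text{slowly varying}))=o(\log(k_n/(n(1-\tau_n))))$). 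By Slutsky, $\frac{\sqrt{k_n}}{\log(k_n/(n(1-\tau_n)))}(\widehat{x}_{\tau_n}/x_{\tau_n}-1)$ has the same limit as $\sqrt{k_n}(\widehat\gamma_n-\gamma)$, namely $N$, once one notes $\log(1+\varepsilon_n)=\varepsilon_n(1+o(1))$ for $\varepsilon_n\to 0$ to pass from the log-ratio back to the ratio.

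\textbf{Main obstacle.} The delicate part is not factor (I) — which is essentially Slutsky's lemma once the two log-levels are shown equivalent — but the bookkeeping in factors (II) and (III): one must show that every deterministic bias term, namely the $o(1)$ in the first-order expansion of $x_{\tau_n}$ (quantified by Corollary~\ref{coro:stat}), the second-order term from the Weissman construction (the $q_{1-k_n/n}\to U(n/k_n)$ approximation governed by $A$), and the discrepancy between $1-\tau_n$ and $1-h_1^{-1}(\tau_n)$, is of smaller order than $\log(k_n/(n(1-\tau_n)))/\sqrt{k_n}$. This requires carefully converting the regularly varying auxiliary functions evaluated at the large arguments $q_{h_1^{-1}(\tau_n)}$ or $(1-\tau_n)^{-1/s}$ back to the scale $n/k_n$ at which the hypothesis $\sqrt{k_n}(|A(n/k_n)|+|B(q_{1-k_n/n})|+|C(n/k_n)|+\cdots)=O(1)$ is stated, using Potter bounds and the condition $\log(k_n/(n(1-\tau_n)))/\sqrt{k_n}\to 0$ to absorb the resulting logarithmic blow-up factors. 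This is routine in the extreme-value estimation literature (it mirrors the treatment of Weissman-type estimators for expectiles and $L^p$-quantiles) but is the technically heaviest step.
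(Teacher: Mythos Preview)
Your argument is correct and follows the same underlying logic as the paper, but the paper organizes the decomposition differently in a way that shortens the work. The paper writes $\log(\widehat{x}_{\tau_n}/x_{\tau_n})$ as a sum of five terms, the first of which is exactly $\log\bigl(q_{\tau_n}^{-1}(k_n/(n(1-\tau_n)))^{\widehat\gamma_n}X_{n-k_n,n}\bigr)$, the log relative error of the \emph{standard} Weissman quantile estimator at level $\tau_n$; this term gives the limit $N$ directly by Theorem~4.3.8 of de~Haan--Ferreira, used as a black box. The remaining four terms are $\log(\Psi(\widehat\gamma_n)/\Psi(\gamma))$, $(\widehat\gamma_n-\gamma)\log((1-\tau_n)/(1-h_1^{-1}(\tau_n)))$, a $2\mathrm{RV}$ bias $\log\bigl(((1-\tau_n)/(1-h_1^{-1}(\tau_n)))^{\gamma}q_{\tau_n}/q_{h_1^{-1}(\tau_n)}\bigr)$, and the model bias $\log(\Psi(\gamma)\,q_{h_1^{-1}(\tau_n)}/x_{\tau_n})$, each shown to be $o(\log(k_n/(n(1-\tau_n)))/\sqrt{k_n})$. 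Your three-factor split achieves the same endpoint but re-proves the Weissman part from scratch inside factors~(I)--(II); the paper's route is slightly cleaner because the known theorem absorbs both the $\widehat\gamma_n-\gamma$ and the $X_{n-k_n,n}/q_{1-k_n/n}$ contributions at once.

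Two small corrections to your write-up. First, $t_n=k_n/(n(1-h_1^{-1}(\tau_n)))\to\infty$, not $0$. Second, when translating the bias terms in Corollary~\ref{coro:stat} from the $(1-\tau_n)^{-1}$ scale back to $n/k_n$, Potter bounds give \emph{power} factors in $t_n$, not logarithmic ones; since $\rho,\eta,\varsigma<0$ these powers decay, so in fact $|A((1-\tau_n)^{-1})|=o(|A(n/k_n)|)$ etc., and no logarithmic blow-up needs to be absorbed. The paper likewise uses Lemma~\ref{lm-2RV-rep-inv}(ii) to obtain $(1-\tau_n)/(1-h_1^{-1}(\tau_n))\to K\in(0,\infty)$ at the outset, which makes the ratio of logs in your factor~(I) converge to~$1$ immediately.
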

Note that, following Remark~\ref{rmk:purepower}, if either of the functions $U$, $u$ or $1-h_1(1-1/\cdot)$ is a multiple of a pure power function, then Theorem~\ref{theo-weissman} holds with corresponding conditions on the second-order parameter(s) dropped and the corresponding auxiliary function(s) involved taken identically equal to 0. For instance, if $u(x)$ is proportional to $x^{\alpha}$, then $B$ can be taken equal to 0 and condition $\alpha+\eta>0$ disappears.}

 {An important subcase in which Theorem~\ref{theo-weissman} applies is when the $X_i$ are independent. In this setting, it is known that when $\sqrt{k_n} A(n/k_n)\to\lambda\in \mathbb{R}$, 
\begin{equation}
\label{eqn:convhighlevel}
\sqrt{k_n} \left( \widehat{\gamma}_n-\gamma, \frac{X_{n-k_n,n}}{q_{1-k_n/n}}-1 \right) \stackrel{d}{\longrightarrow} \left( \frac{\lambda}{1-\rho}, 0 \right) + \gamma (\Theta,\Psi)
\end{equation}
where $\Theta$ and $\Psi$ are independent standard normal random variables, as can be seen by combining Lemma~3.2.3 and Theorem~3.2.5 in~\cite{de2006extreme}. Then, by Theorem~\ref{theo-weissman}, 
\[
\frac{\sqrt{k_n}}{\log(k_n/(n(1-\tau_n)))} \left( \frac{\widehat{x}_{\tau_n}}{x_{\tau_n}} - 1 \right) \stackrel{d}{\longrightarrow} \mathcal{N}\left( \frac{\lambda}{1-\rho}, \gamma^2 \right).
\]
Extensions of convergence~\eqref{eqn:convhighlevel} to the case when the $X_i$ are serially dependent, such as when the $X_i$ are strongly mixing (namely, $\alpha-$mixing) or absolutely regular (namely, $\beta-$mixing), thus covering standard linear time series or conditionally heteroskedastic random processes, are examined in {\it e.g.}~\cite{hsi1991} and~\cite{dre2003}. In such models, just like $\widehat{\gamma}_n$, the estimator $\widehat{x}_{\tau_n}$ will still be asymptotically Gaussian but with an enlarged variance, due to the loss of information entailed by the presence of serial dependence in the data.}

\section{Examples  {and numerical illustrations}}\label{example}
In this section, we discuss two interesting examples of  {generalized shortfall risk measures, and we briefly examine the finite-sample performance of the estimator presented in Section~\ref{sec:estim}.}

{
\begin{example}($L^p$-quantiles)\label{Lp quantile}
Let $u_{1}(x)=u_{2}(x)=px^{p-1}$, $p\geq 1$, and $h_{1}(x)=h_{2}(x)=x$. Then
$x_{\tau}$ is reduced to the $L^{p}$-quantile in \cite{daouia2019Lpestimation}, denoted by $x_{\tau}^{Lp}$. We examine the first- and second-order expansions of $x_{\tau
}^{Lp}$ arising from our results when $\overline{F} \in 2\mathrm{RV}_{-1/\gamma,\rho/\gamma}$ with $\gamma>0$ and $\rho<0$.

Clearly $u_{i}\in \mathrm{RV}_{p-1}$ and $1-h_{i}(1-1/\cdot)\in \mathrm{RV}_{-1}$ for $i=1,2$. Conditions $1/\gamma>p-1$ and $\int_{-\infty}^{\infty} |z|^{p-1+\delta} \mathrm{d}h_2(F(z))<\infty$ for some $\delta>0$ reduce to $\gamma<1/(p-1)$ and $\mathbb{E}(|\min(X,0)|^{p-1+\delta})<\infty$ (the latter can be replaced by $\mathbb{E}(|\min(X,0)|^{p-1})<\infty$, see Remark~\ref{rmk-existence}). The function $\varphi$ is nothing but $1/\overline{F}$, so $\varphi((1-\tau)^{-1})=F^{\leftarrow}(\tau)$. By Theorem~\ref{theo-first}, the first-order asymptotic expansion of $x_{\tau}^{Lp}$ is
\[
x_{\tau}^{Lp}=\Delta_{1} F^{\leftarrow}(\tau)(1+o(1)) = \left( \frac{1}{\gamma} \mathrm{B}(1/\gamma-p+1,p) \right)^{\gamma} F^{\leftarrow}(\tau)(1+o(1)) \ \mbox{ as } \tau\uparrow 1. 
\]
This recovers Corollary 1 of \cite{daouia2019Lpestimation}.

We now analyze the second-order expansion provided by Theorem~\ref{main} when $p\geq 2$, in which case $X$ has a finite moment of order 1 under our assumptions. Obviously the $u_{i}$ and $1-h_{i}( 1-1/\cdot )$ are multiples of pure power functions, and $1-h_{1}( F(\cdot) ) = 1-F(\cdot) \in 2\mathrm{RV}_{-1/\gamma,\rho/\gamma} $ with auxiliary function $\gamma^{-2}A(1/\overline{F}(\cdot))$. In other words, with the notation of Theorem~\ref{main}, $B_1=B_2\equiv 0$, $A_h(\cdot)=\gamma^{-2}A(1/\overline{F}(\cdot))$ is regularly varying with index $\rho_h=\rho/\gamma$, $\alpha_1=\alpha_2=p-1$, $\beta_1=\beta_2=1$, $\mathbb{E}[Z]=\mathbb{E}[X]$, $\eta^{\ast}=\rho_h=\rho/\gamma$, $A^{\ast}(\cdot)=-(\gamma\rho)^{-1} A(1/\overline{F}(\cdot))$, and 
\[
\Gamma_2 = \frac{1}{\rho} ( (1-\rho) \mathrm{B}((1-\rho)/\gamma-p+1,p) - \mathrm{B}(1/\gamma-p+1,p) ).
\]
It follows that the second-order expansion of $x_{\tau}^{Lp}$ is 
\begin{align*}
    \frac{x_{\tau}^{Lp}}{\Delta_{1}F^{\leftarrow}(\tau)} &= 1+\frac{\gamma(p-1)}{\Delta_{1} F^{\leftarrow}(\tau)} (\mathbb{E}[X]+o(1)) +\gamma\left( \left( \frac{1}{\gamma} \mathrm{B}(1/\gamma-p+1,p) \right)^{-1} - 1 \right) (1-\tau) (1+o(1)) \\
    & + \frac{1}{\rho}\left( ( (1-\rho) \mathrm{B}((1-\rho)/\gamma-p+1,p) - \mathrm{B}(1/\gamma-p+1,p) ) \times \frac{1}{\gamma}\left( \frac{1}{\gamma} \mathrm{B}(1/\gamma-p+1,p) \right)^{\rho-1} \right. \\
    & \left. \qquad \qquad + \left( \frac{1}{\gamma} \mathrm{B}(1/\gamma-p+1,p) \right)^{\rho}-1 +o(1) \right) A((1-\tau)^{-1}) 
\end{align*}
This matches expansion (A14) of~\cite{stuuss2022}, itself a corrected version of Proposition 3 of \cite{daouia2019Lpestimation}: note that, despite the fact that the term in $(1-\tau)$ is not reported as being the same in this Proposition 3, the proof of their Proposition~2 indeed shows that there are two contributions proportional to $(1-\tau)$, due to (with the notation therein) a term called $I_1(q;p)$ in their Equation~(A.10) and the $\overline{F}(q)$ term in their Equation (A.11). In the current setting $\gamma<1/(p-1)<1$, so any term in $(1-\tau)$ is negligible with respect to $1/F^{\leftarrow}(\tau)$. It follows that 
\begin{align*}
    & \frac{x_{\tau}^{Lp}}{\Delta_{1}F^{\leftarrow}(\tau)} \\
    &= 1+ \frac{1}{\rho}\left( ( (1-\rho) \mathrm{B}((1-\rho)/\gamma-p+1,p) - \mathrm{B}(1/\gamma-p+1,p) ) \times \frac{1}{\gamma}\left( \frac{1}{\gamma} \mathrm{B}(1/\gamma-p+1,p) \right)^{\rho-1} \right. \\
    & \left. \qquad \qquad + \left( \frac{1}{\gamma} \mathrm{B}(1/\gamma-p+1,p) \right)^{\rho}-1 +o(1) \right) A((1-\tau)^{-1}) + \frac{\gamma(p-1)}{\Delta_{1} F^{\leftarrow}(\tau)} (\mathbb{E}[X]+o(1)). 
\end{align*}
\end{example}
}

\begin{example}
\label{ex-RDEU1-3-29} (Generalized expectiles) Recall Example 3.4 of \cite{mao2018risk}  {in which} the coherent generalized expectile is defined as the unique solution to the equation
\begin{equation}
\tau \, \mathrm{TVaR}_{p}((X-x)_{+})+(1-\tau)\mathrm{TVaR}_{q}(-(X-x)_{-}%
)=0,\ x\in \mathbb{R},\label{TVaR-Dutch}%
\end{equation}
 {satisfying} $p\leq q$ and $\tau/(1-\tau)\geq (1-p)/(1-q)$. This is an example of the generalized Dutch type II risk measures introduced in~\cite{cai2020risk}. In fact, when $u_{i}(x)=2x$ for $x>0$, and $h_{1}(x)=(x-p)_{+}/(1-p)$ and $h_{2}(x)=(x-q)_{+}/(1-q)$ with $p,q\in\lbrack0,1)$, the generalized shortfall risk measure coincides with the coherent generalized expectile in (\ref{TVaR-Dutch}), denoted by $x_{\tau}^{e}$. Next, we make  {explicit the first- and second-order expansions of $x_{\tau}^{e}$ when $\overline{F} \in 2\mathrm{RV}_{-1/\gamma,\rho/\gamma}$ with $\gamma>0$ and $\rho<0$. For the sake of simplicity, we assume that {\it $F$ is continuous}.}

 {Obviously $u_{i}\in \mathrm{RV}_{1}$ and,} for $x$ large enough, $1-h_{1}( 1-1/x ) = x^{-1}/(1-p)$, so $1-h_{1}( 1-1/\cdot )  \in \mathrm{RV}_{-1}$  {and similarly $1-h_{2}( 1-1/\cdot )  \in \mathrm{RV}_{-1}$. Again the conditions of Theorem \ref{theo-first} reduce to $\gamma<1$ and $\mathbb{E}(|\min(X,0)|)<\infty$, and since $\varphi^{\leftarrow}(t)=U(t/(1-p))$ for large $t$, the first-order expansion of $x_{\tau}^{e}$ reads
\begin{align*}
{x_{\tau}^{e}}=\Delta_{1} ( 1-p )^{-\gamma} F^{\leftarrow}(\tau)(1+o(1)) &= (\gamma^{-1} \mathrm{B}(\gamma^{-1}-1,2))^{\gamma} ( 1-p )^{-\gamma} F^{\leftarrow}(\tau)(1+o(1)) \\
&= (\gamma^{-1}-1)^{-\gamma} ( 1-p )^{-\gamma} F^{\leftarrow}(\tau)(1+o(1)) \ \mbox{ as } \tau\uparrow 1. 
\end{align*}
This can be viewed as an extension of the standard asymptotic equivalent for expectiles in terms of their quantile counterparts.}

 {Then clearly $u_{i}$ are multiples of pure power functions, so $B_1=B_2\equiv 0$, and $1-h_{1}( F(x) ) = \overline{F}(x)/(1-p)$, $1-h_{2}( F(x) ) = \overline{F}(x)/(1-q)$ for $x$ large enough, so $1-h_{1}\left( F(\cdot) \right)
\in2\mathrm{RV}_{-1/\gamma,\rho/\gamma}$ with auxiliary function $\gamma^{-2}A(1/\overline{F}(\cdot))$. Recall also that $\varphi^{\leftarrow}(t)=U(t/(1-p))$ for large $t$ (and then $\varphi^{\leftarrow}((1-\tau)^{-1})=F^{\leftarrow}(p+(1-p)\tau)$). This means that with the notation of Theorem~\ref{main}, $B_1=B_2\equiv 0$, $A_h(\cdot)=\gamma^{-2}A(1/\overline{F}(\cdot))$ is regularly varying with index $\rho_h=\rho/\gamma$, $\alpha_1=\alpha_2=\beta_1=\beta_2=1$, $\mathbb{E}[Z]=\mathbb{E}[X \mathbbm{1}\{ X>F^{\leftarrow}(q) \}]/(1-q)=\mathbb{E}[X|X>F^{\leftarrow}(q)]$, $\eta^{\ast}=\rho_h=\rho/\gamma$, $A^{\ast}(\cdot)=-(\gamma\rho)^{-1} A(1/\overline{F}(\cdot))$, and 
\[
\Gamma_2 = \frac{1}{\rho} ( (1-\rho) \mathrm{B}((1-\rho)/\gamma-1,2) - \mathrm{B}(1/\gamma-1,2) ) = \frac{\gamma^2}{(1-\gamma)(1-\gamma-\rho)}.
\]
Then, after straightforward calculations
\begin{align*}
    & \frac{x_{\tau}^{e}}{\Delta_{1} F^{\leftarrow}(p+(1-p)\tau)} \\
    &= 1+ (1-p)^{\gamma}\frac{\gamma(\gamma^{-1} - 1)^{\gamma}}{F^{\leftarrow}(\tau)} (\mathbb{E}[X|X>F^{\leftarrow}(q)]+o(1)) + \left( (1-\gamma) \frac{1-p}{1-q} - 1\right)(1-\tau)(1+o(1)) \\
    &+ (1-p)^{-\rho}\left( \frac{(\gamma^{-1} - 1)^{-\rho}}{1-\gamma-\rho} + \frac{(\gamma^{-1} - 1)^{-\rho}-1}{\rho} + o(1) \right)  A((1-\tau)^{-1}). 
\end{align*}
Since 
\[
\frac{F^{\leftarrow}(p+(1-p)\tau)}{F^{\leftarrow}(\tau)} = \frac{U((1-\tau)^{-1}/(1-p))}{U((1-\tau)^{-1})} = (1-p)^{-\gamma} \left( 1 + \frac{(1-p)^{-\rho}-1}{\rho} A( (1-\tau)^{-1} )(1+o(1)) \right)
\]
and $\gamma<1$, one may finally conclude that 
\begin{align*}
    & \frac{x_{\tau}^{e}}{\Delta_{1} (1-p)^{-\gamma} F^{\leftarrow}(\tau)} \\
    &= 1+ (1-p)^{\gamma}\frac{\gamma(\gamma^{-1} - 1)^{\gamma}}{F^{\leftarrow}(\tau)} ( \mathbb{E}[X|X>F^{\leftarrow}(q)]+o(1) ) \\
    &+ \left( (1-p)^{-\rho}\left( \frac{(\gamma^{-1} - 1)^{-\rho}}{1-\gamma-\rho} + \frac{(\gamma^{-1} - 1)^{-\rho}-1}{\rho} \right) +\frac{(1-p)^{-\rho}-1}{\rho} + o(1) \right) A((1-\tau)^{-1}). 
\end{align*}
This coincides with the second-order asymptotic expansion of expectiles when $p=q=0$, see Proposition~1 in~\cite{daouia2020expectiles}.}
%
%
%
%
%

 {We examine the accuracy of this expansion when $F$ is the generalized Pareto distribution function $F(x)=1-( \theta/(x+\theta) )  ^{1/\gamma}$ for $x>0$, where $\gamma,\theta>0$. In this setting, $U(t)=\theta (  t^{\gamma}-1)$, and then $U\in2\mathrm{RV}_{\gamma,-\gamma}$ with auxiliary function $A(t)=\gamma t^{-\gamma}$. We take $\gamma=1/3,1/5$ and $\theta=1$, $p=q=0.95$.
In Figure \ref{p1}, by varying $\tau$ from 0.95 to 0.9999, we plot the values obtained through the use of the first- and second-order expansions of ${x_{\tau}^{e}}$. For comparison, the true values of ${x_{\tau}^{e}}$ are also plotted, which are calculated using the
\texttt{uniroot} function in \texttt{R}. From Figure \ref{p1}, it can be seen that the second-order expansion improves the first-order expansion significantly, especially in the lighter-tailed case. 
\begin{figure}[ptb]
	\centering
	\subfloat[$\gamma=1/3$]{\includegraphics[width=0.48\linewidth]{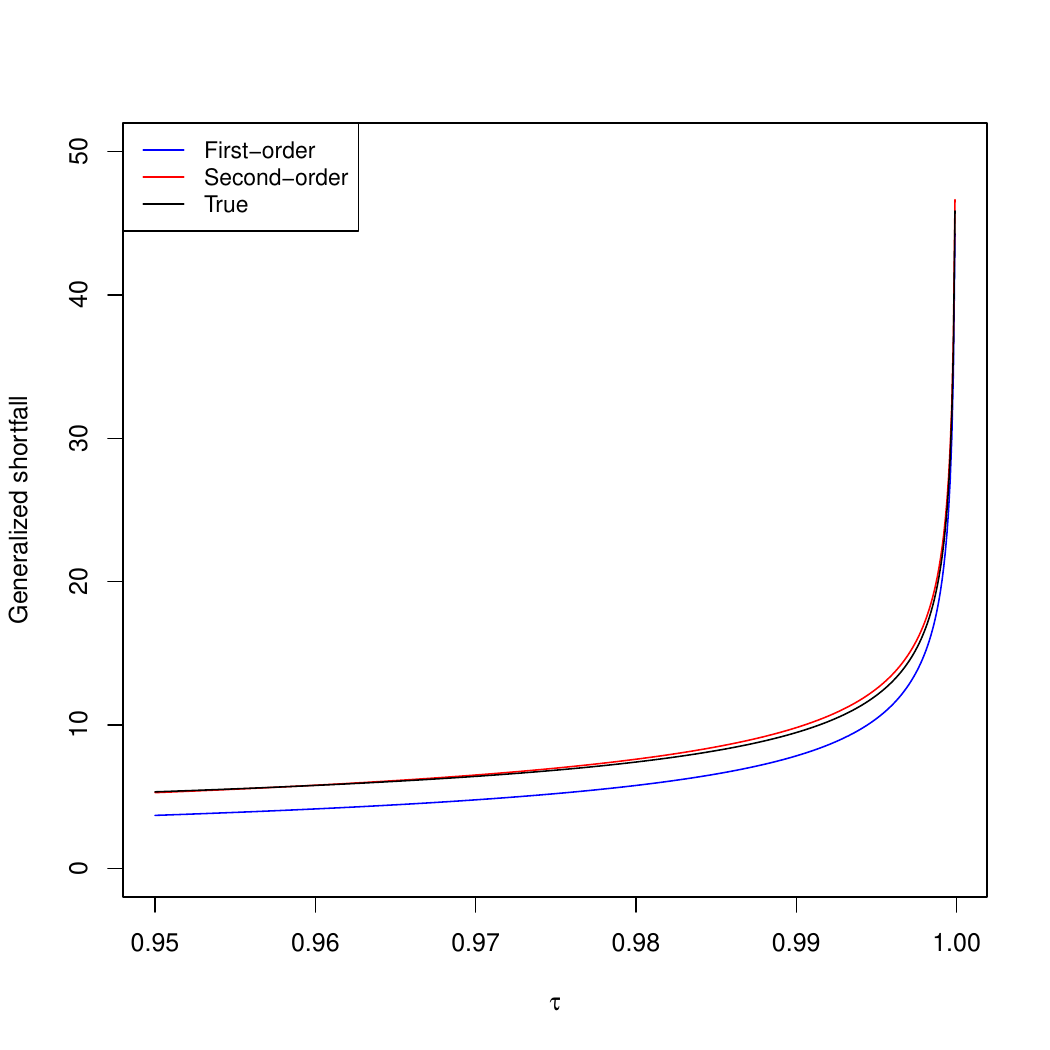} }
	\hfill
	\subfloat[$\gamma=1/5$]{\includegraphics[width=0.48\linewidth]{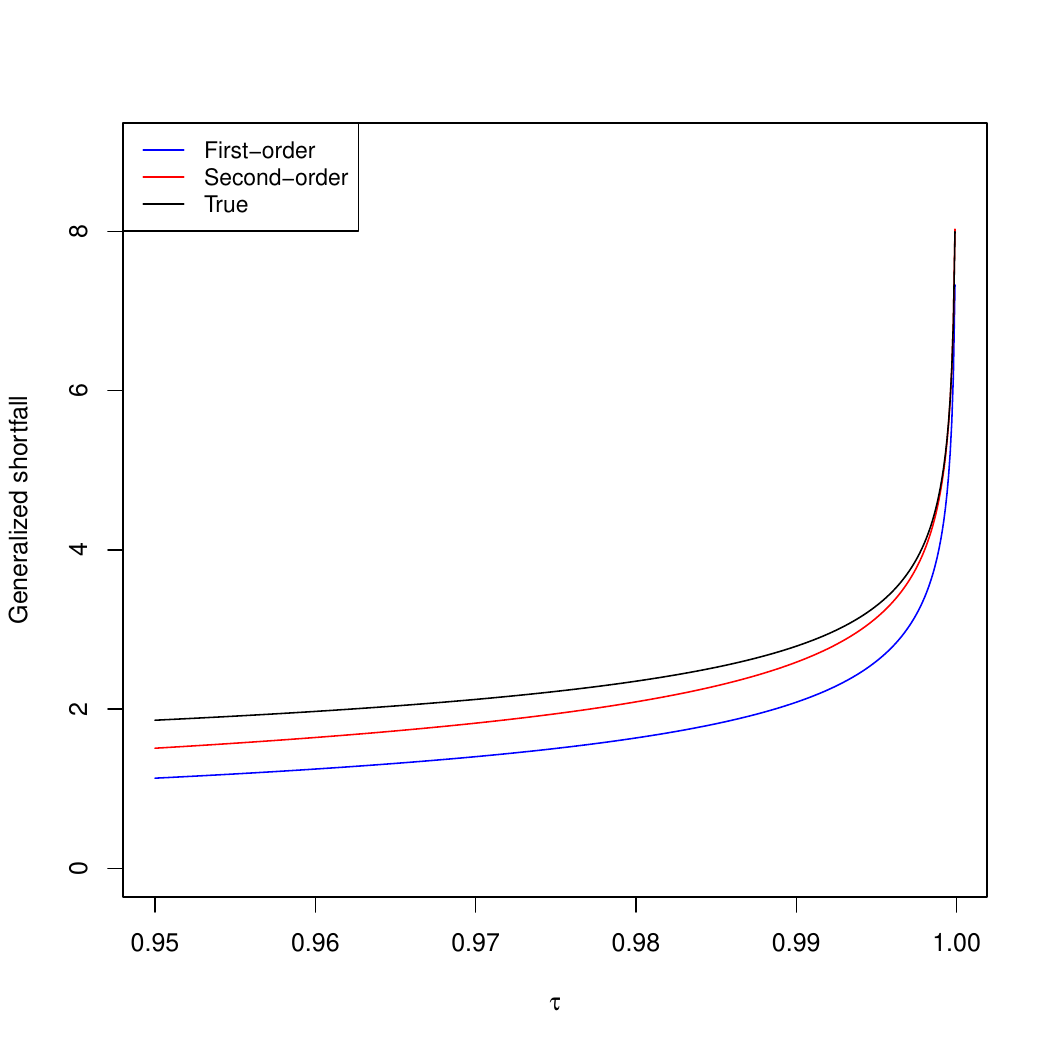}}
	\caption{Comparison of first- and second-order expansions with the true values of the generalized shortfall risk measure of a generalized Pareto distribution.}%
	\label{p1}%
\end{figure}
}
 
\end{example}

 {We now examine the finite-sample performance of the estimator $\widehat{x}_{\tau_n}$ of Section~\ref{sec:estim} in this last example. We consider the following distributions:
\begin{itemize}
\item The pure Pareto distribution with distribution function $F(x)=1-x^{-1/\gamma}$, $x>1$,
\item The Fr\'echet distribution with distribution function $F(x)=\exp(-x^{-1/\gamma})$, $x>0$,
\item The Burr distribution with distribution function $F(x)=1-(1+x^{-\rho/\gamma})^{1/\rho}$, $x>0$ (here $\rho$ is the negative second-order parameter of the distribution). 
\end{itemize}
For each of these three distributions we take $\gamma=1/5$ or $1/3$, and for the Burr distribution we use $\rho=-2$. In each case, we simulate $N=10{,}000$ replications of an independent sample of size $n\in \{500,1{,}000\}$, for which the true generalized expectile risk measure $x_{\tau_n}$ with $p=q=0.95$ and $\tau_n=1-1/n\in \{ 0.998, 0.999 \}$ has been calculated numerically. This was done using the R function {\tt uniroot} in order to find the solution of~\eqref{TVaR-Dutch}, where the function {\tt cubintegrate} from the R package {\tt cubature} has been used beforehand in order to calculate the distorted Tail-Value-at-Risk. In each replication we estimate this risk measure with the estimator introduced in Section~\ref{sec:estim}, where the intermediate level $k=k_n$ is allowed to vary between $n/50$ and $2n/3$ (corresponding respectively to $2\%$ and $66.7\%$ of the total sample size). This produces estimates $\widehat{x}_{\tau_n}^{(j)}(k)$, $j=1,2,\ldots,N$, which are used to calculate the Monte-Carlo approximation to the relative Mean Squared Error (relative MSE) of the estimator $\widehat{x}_{\tau_n}$, that is,
\[
\mathrm{rMSE}(k) = \frac{1}{N} \sum_{j=1}^N \left( \frac{\widehat{x}_{\tau_n}^{(j)}(k)}{x_{\tau_n}} - 1 \right)^2. 
\]
These errors are represented as a function of $k$ in Figures~\ref{fig:n500} and~\ref{fig:n1000} in the twelve situations considered. The MSE tends to be high when $k$ is low, due to the variance of the extreme value estimators dominating in that region, and it also tends to be high when $k$ is large because their bias then dominates, except in the Pareto example for which bias due to the extreme value procedures is exactly 0. Bias is lower in the Burr example than in the Fr\'echet example: this is due to the second-order parameter $\rho=-2$ being further away from 0 in the Burr example than it is in the Fr\'echet example, where it is equal to $-1$. Solving the bias-variance tradeoff produces a stability region for moderately large values of $k$ where MSE is comparatively lower, and this stability region tends to be larger as the second-order parameter gets away from 0. Since the Hill estimator is used in the extrapolation, the asymptotic variance of $\widehat{x}_{\tau_n}$ is asymptotically proportional to $\gamma^2$ by Theorem~\ref{theo-weissman}, so the higher the extreme value index, the higher the MSE should be, just as can be observed by comparing the top and bottom rows in each figure.}

\begin{figure}[h!]
    \centering
    \includegraphics[scale=0.55]{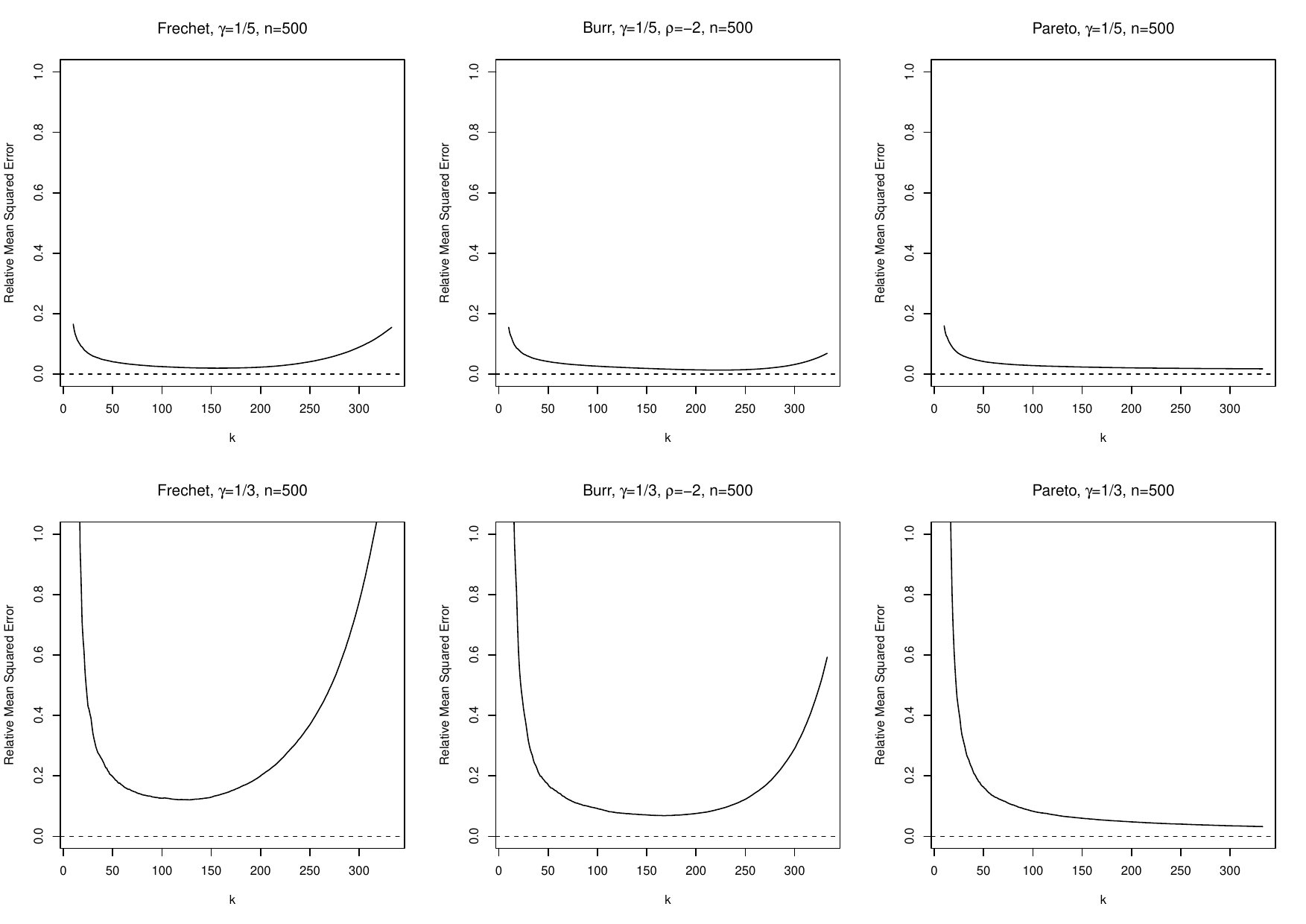}
    \caption{ {Relative Mean Squared Error of the estimator $\widehat{x}_{\tau_n}$, for $n=500$ and $\tau_n=1-1/n=0.998$. Left panels: Fr\'echet distribution, middle panels: Burr distribution with $\rho=-2$, right panels: Pareto distribution. Top panels: $\gamma=1/5$, bottom panels: $\gamma=1/3$.}}
    \label{fig:n500}
\end{figure}

\begin{figure}[h!]
    \centering
    \includegraphics[scale=0.55]{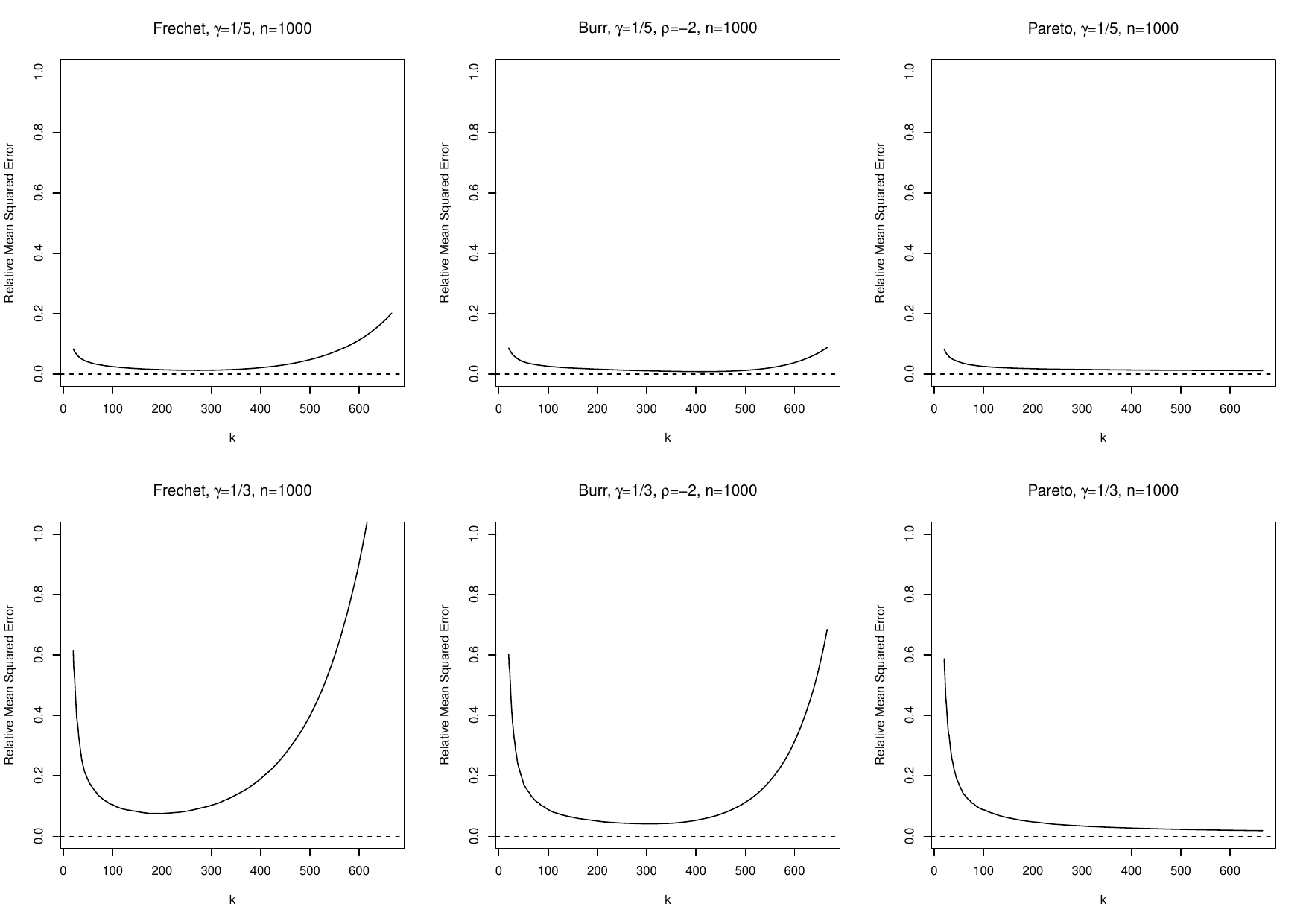}
    \caption{ {Relative Mean Squared Error of the estimator $\widehat{x}_{\tau_n}$, for $n=1000$ and $\tau_n=1-1/n=0.999$. Left panels: Fr\'echet distribution, middle panels: Burr distribution with $\rho=-2$, right panels: Pareto distribution. Top panels: $\gamma=1/5$, bottom panels: $\gamma=1/3$.}}
    \label{fig:n1000}
\end{figure}

\section{Proofs}\label{proof}

\begin{proof}[Proof of Proposition~\ref{prop-xtau}] {Note first that the quantities 
\[
\mathrm{H}_{u_{1},\,h_{1}}((X-x)_{+})=\int_{x}^{\infty}u_{1}(y-x)\mathrm{\,d} h_{1}(F(y)) \mbox{ and } \mathrm{H}_{u_{2},\,h_{2}}((X-x)_{-})=\int_{-\infty}^{x}u_{2}(x-y)\mathrm{\,d}h_{2}(F(y))
\]
are always well-defined, because $u_1$, $u_2$ are positive on $[0,\infty)$ and $h_1\circ F$, $h_2\circ F$ are distribution functions, so that $\mathrm{H}_{u_{1},\,h_{1}}((X-x)_{+})$ and $\mathrm{H}_{u_{2},\,h_{2}}((X-x)_{-})$ are integrals of a positive measurable function with respect to a probability measure. 
\vskip1ex
\noindent
(i) We note that $x\mapsto \mathrm{H}_{u_{1},\,h_{1}}((X-x)_{+})$ and $x\mapsto \mathrm{H}_{u_{2},\,h_{2}}((X-x)_{-})$ are (strictly) decreasing and increasing positive functions on $(x_{\star},x^{\star})$, respectively. Indeed, if $x\in (x_{\star},x^{\star})$ and $\varepsilon>0$ is such that $x+\varepsilon<x^{\star}$, 
\begin{align}
\nonumber
    & \mathrm{H}_{u_{1},\,h_{1}}((X-x)_{+}) - \mathrm{H}_{u_{1},\,h_{1}}((X-(x+\varepsilon))_{+}) \\
\label{eqn:continuityH1}
    &= \int_{x}^{x+\varepsilon} u_{1}(y-x)\mathrm{\,d} h_{1}(F(y)) + \int_{x+\varepsilon}^{\infty} (u_{1}(y-x) - u_{1}(y-x-\varepsilon)) \mathrm{\,d}h_{1}(F(y)) > 0
\end{align}
since $u_1$ is increasing, $u_1(0)=0$ (meaning that the first integral is nonnegative), and $x+\varepsilon<x^{\star}$ (meaning that the second integral is positive). The proof that $x\mapsto \mathrm{H}_{u_{2},\,h_{2}}((X-x)_{-})$ is increasing is similar; the above identity also shows that $x\mapsto \mathrm{H}_{u_{1},\,h_{1}}((X-x)_{+})$ and $x\mapsto \mathrm{H}_{u_{2},\,h_{2}}((X-x)_{-})$ define nonincreasing and nondecreasing functions on $\mathbb{R}$, respectively. Moreover, in the specific case when $x^{\star}<\infty$, one has, for any $x\geq x^{\star}$, 
\[
\mathrm{H}_{u_{1},\,h_{1}}((X-x)_{+})=\int_{x}^{\infty} u_{1}(y-x)\mathrm{\,d} h_{1}(F(y))=0
\]
due to the fact that $F$ is constant equal to 1 on $[x^{\star},\infty)$ and $h_1$ does not have a jump at 1. Similarly $\mathrm{H}_{u_{2},\,h_{2}}((X-x)_{-})=0$ for any $x\leq x_{\star}$ when $x^{\star}>-\infty$. 
 {Conclude} by the intermediate value theorem that, since $x\mapsto \mathrm{H}_{u_{1},\,h_{1}}((X-x)_{+})$ and $x\mapsto \mathrm{H}_{u_{2},\,h_{2}}((X-x)_{-})$ are continuous on $(x_{\star},x^{\star})$, the equation 
\[
\tau \, \mathrm{H}_{u_{1},\,h_{1}}((X-x)_{+})-(1-\tau)\, \mathrm{H}_{u_{2},\,h_{2}}((X-x)_{-})=0
\]
has a unique solution which necessarily lies in the interval $(x_{\star},x^{\star})$. 
\vskip1ex
\noindent
(ii) Recall that $x\mapsto \mathrm{H}_{u_{1},\,h_{1}}((X-x)_{+})$ and $x\mapsto \mathrm{H}_{u_{2},\,h_{2}}((X-x)_{-})$ are nonincreasing and nondecreasing, respectively, and that $x_{\tau}<x^{\star}$ for any $\tau<1$. Suppose now that there are $0<\tau<\tau'<1$ such that $x_{\tau}>x_{\tau'}$. Then 
\[
(1-\tau') \mathrm{H}_{u_{2},\,h_{2}}((X-x_{\tau'})_{-}) < (1-\tau) \mathrm{H}_{u_{2},\,h_{2}}((X-x_{\tau})_{-}) = \tau \mathrm{H}_{u_{1},\,h_{1}}((X-x_{\tau})_{+})<\tau' \mathrm{H}_{u_{1},\,h_{1}}((X-x_{\tau'})_{+}).
\]
This is a contradiction because the left- and right-most terms are equal. Hence $x_{\tau}\leq x_{\tau'}$ and $\tau\in (0,1)\mapsto x_{\tau}\in \mathbb{R}$ is nondecreasing, and in particular $x_1=\lim_{\tau\uparrow 1} x_{\tau}$ is well-defined. If $x_1=+\infty$ then obviously $x^{\star}$ is infinite too and $x_1=x^{\star}$; otherwise, we clearly have, for any $\tau\in (0,1)$, 
\[
\tau \mathrm{H}_{u_{1},\,h_{1}}((X-x_1)_{+})\leq \tau \mathrm{H}_{u_{1},\,h_{1}}((X-x_{\tau})_{+}) = (1-\tau) \mathrm{H}_{u_{2},\,h_{2}}((X-x_{\tau})_{-}) \leq (1-\tau) \mathrm{H}_{u_{2},\,h_{2}}((X-x_1)_{-}).
\]
Let $\tau\uparrow 1$ to find $\mathrm{H}_{u_{1},\,h_{1}}((X-x_1)_{+})\leq 0$ and therefore $\mathrm{H}_{u_{1},\,h_{1}}((X-x_1)_{+})=0$. This implies that $x_1\geq x^{\star}$ and then $x_1=x^{\star}$.} 
\vskip1ex
\noindent
{(iii) Clearly $x^{\star}=+\infty$ because $\overline{F}\in \mathrm{RV}_{-1/\gamma}$ with $\gamma>0$. Combine (i) and (ii) to find that it is enough to show the continuity and finiteness of $x\mapsto \mathrm{H}_{u_{1},\,h_{1}}((X-x)_{+})$ and $x\mapsto \mathrm{H}_{u_{2},\,h_{2}}((X-x)_{-})$ on $(x_{\star},+\infty)$. We start by finiteness. Fix $x\in (x_{\star},+\infty)$. Then 
\[
\mathrm{H}_{u_{1},\,h_{1}}((X-x)_{+})=\lim_{T\to +\infty} -\int_{z=0}^{T} u_{1}(z)\mathrm{\,d}(1-h_{1}(1-1/(1/\overline{F}(x+z)))).
\]
Recall that $u_{1}\in \mathrm{RV}_{\alpha_{1}}$ and $u_1$ is bounded on finite intervals of $[0,\infty)$. Then, for any arbitrary $\delta>0$ we have, if $z$ is chosen large enough, that $u_{1}(z)$ is bounded from above by a multiple of $z^{\alpha_1+\delta}$ using Potter bounds (see, e.g. Proposition B.1.9.5~of \cite{de2006extreme}). Moreover, $1-h_{1}(1-1/(1/\overline{F}(x+\cdot)))\in \mathrm{RV}_{-\beta_{1}/\gamma}$. Use the assumption $\beta_{1}/\gamma>\alpha_1$ and Theorem~1.6.5~of \cite{bingolteu1987} to find that $\mathrm{H}_{u_{1},\,h_{1}}((X-x)_{+})$ is indeed finite. The argument for $\mathrm{H}_{u_{2},\,h_{2}}((X-x)_{-})$ is slightly different: write 
\[
\mathrm{H}_{u_{2},\,h_{2}}((X-x)_{-})=\int_{-\infty}^{x/2} u_{2}(x-z) \mathrm{\,d}h_{2}(F(z)) + \int_{x/2}^{x} u_{2}(x-z) \mathrm{\,d}h_{2}(F(z)).
\]
The first term is shown to be finite using Potter bounds and the assumption $\int_{-\infty}^{\infty}|z|^{\alpha_{2}+\delta}\mathrm{\,d}h_{2}(F(z))<\infty$ for some $\delta>0$. The second term is obviously finite because it is bounded from above by $u_{2}(x/2)$. Hence the finiteness of $\mathrm{H}_{u_{2},\,h_{2}}((X-x)_{-})$. We turn to continuity. Recall~\eqref{eqn:continuityH1}: the first term therein clearly converges to 0 as $\varepsilon\downarrow 0$ because $h_1\circ F$ is a distribution function and is therefore right-continuous. The second term, meanwhile, converges to 0 by the dominated convergence theorem, because of the continuity of $u_1$ and $0\leq (u_{1}(y-x) - u_{1}(y-x-\varepsilon)) \mathbbm{1}\{ y\geq x+\varepsilon \} \leq u_{1}(y-x) \mathbbm{1}\{ y\geq x \}$ with the right-hand side being integrable with respect to $\mathrm{d} h_1(F(y))$. 
The continuity of $x\mapsto \mathrm{H}_{u_{2},\,h_{2}}((X-x)_{-})$ is shown similarly.}
\end{proof}

\begin{proof}[Proof of Lemma~\ref{lm-sides}] 
    (i) First note that for any $\varepsilon_{0}>0$,
\begin{align}
\frac{\mathrm{H}_{u_{1},\,h_{1}}((X-x)_{+})}{u_{1}(x)\left(  1-h_{1}%
(F(x))\right)  } &  =\frac{\int_{x}^{\infty}u_{1}(y-x)\mathrm{\,d}h_{1}%
(F(y))}{u_{1}(x)\left(  1-h_{1}(F(x))\right)  }\nonumber \\
&  =\int_{1}^{\infty}\frac{u_{1}(xy-x)}{u_{1}(x)}\mathrm{\,d}\frac
{h_{1}(F(xy))}{1-h_{1}(F(x))}\label{rewrite}\\
&  =\left(  \int_{1+\varepsilon_{0}}^{\infty}+\int_{1}^{1+\varepsilon_{0}%
}\right)  \frac{u_{1}(xy-x)}{u_{1}(x)}\mathrm{\,d}\frac{h_{1}(F(xy))}%
{1-h_{1}(F(x))}\nonumber \\
&  :=  {I_{1}(x)+I_{2}(x)}.\nonumber
\end{align}
Since $u_{1}\in \mathrm{RV}_{\alpha_{1}}$ with $\alpha_{1}>0$ and
$1-h_{1}(1-1/\cdot)\in \mathrm{RV}_{-\beta_{1}}$ with $\beta_{1}>0$, and
$\beta_{1}/\gamma>\alpha_{1}$, using Potter bounds (see, e.g. Proposition
B.1.9.5~of \cite{de2006extreme}), for any $\varepsilon_{1},\delta_{1}>0$, {one
has, for $x$ large enough,}%
\begin{align*}
 {I_{1}(x)} &  \leq \int_{1+\varepsilon_{0}}^{\infty}(1+\varepsilon_{1})\left(
y-1\right)  ^{\alpha_{1}\pm \delta_{1}}\mathrm{\,d}\frac{h_{1}(F(xy))}%
{1-h_{1}(F(x))}\\
&  =(1+\varepsilon_{1})\left(  \varepsilon_{0}^{\alpha_{1}\pm \delta_{1}}%
\frac{1-h_{1}(F(x(1+\varepsilon_{0}))}{1-h_{1}(F(x))}+\int_{1+\varepsilon_{0}%
}^{\infty}\frac{1-h_{1}(F(xy))}{1-h_{1}(F(x))}\mathrm{\,d}\left(  y-1\right)
^{\alpha_{1}\pm \delta_{1}}\right)  .
\end{align*}
 {Now $1-h_{1}(F(\cdot))=1-h_{1}(1-1/(1/\overline{F}(\cdot)))\in \mathrm{RV}_{-\beta_{1}/\gamma}$ and therefore, by Proposition~B.1.10 in~\cite{de2006extreme},
\[
\limsup_{x\to\infty} I_{1}(x)\leq(1+\varepsilon_{1})\left(  \varepsilon_{0}%
^{\alpha_{1}\pm \delta_{1}}\left(  1+\varepsilon_{0}\right)  ^{-\beta
_{1}/\gamma}+\int_{1+\varepsilon_{0}%
}^{\infty}y^{-\beta_{1}/\gamma}\mathrm{\,d}\left(  y-1\right)  ^{\alpha_{1}\pm \delta_{1}} \right)
\]
A similar lower bound applies with $\varepsilon_1$ replaced by $-\varepsilon_1$. Conclude, since $\varepsilon_1$ and $\delta_1$ are arbitrarily small, that}
\[
\lim_{x\rightarrow \infty} {I_{1}(x)}=\varepsilon_{0}^{\alpha_{1}}\left(
1+\varepsilon_{0}\right)  ^{-\beta_{1}/\gamma}+\int_{1+\varepsilon_{0}%
}^{\infty}y^{-\beta_{1}/\gamma}\mathrm{\,d}\left(  y-1\right)  ^{\alpha_{1}}.
\]
Now we turn to $I_{2}$. Since $u_{1}\in \mathrm{RV}_{\alpha_{1}}$ with
$\alpha_{1}>0$, by Proposition B.1.9.6~of \cite{de2006extreme}, there exists
$c>0$  {such that for large enough $x$,
\[
0\leq I_{2}(x)\leq \int_{1}^{1+\varepsilon_{0}}c\mathrm{\,d}\frac{h_{1}(F(xy))}%
{1-h_{1}(F(x))}=c\left(  1-\frac{1-h_{1}(F(x(1+\varepsilon_{0}))}%
{1-h_{1}(F(x))}\right).
\]
[The constant $c$ can be chosen sufficiently large so that it is universal for small values of $\varepsilon_0$, see the proof of Proposition B.1.9.6~of \cite{de2006extreme}.] Hence the bound 
\[
0\leq \liminf_{x\rightarrow \infty}I_{2}(x)\leq \limsup_{x\rightarrow \infty}I_{2}(x) \leq c(1-
(1+\varepsilon_{0})^{-\beta_{1}/\gamma}).
\]
Taking limits as $x\to\infty$ and letting} $\varepsilon_{0}\rightarrow0$, the desired result follows
as
\begin{align}
\lim_{x\rightarrow \infty}\frac{\mathrm{H}_{u_{1},\,h_{1}}((X-x)_{+})}%
{u_{1}(x)\left(  1-h_{1}(F(x))\right)  } &  =\int_{1}^{\infty}y^{-\beta
_{1}/\gamma}\mathrm{\,d}\left(  y-1\right)  ^{\alpha_{1}}\label{equate}\\
&  =\alpha_{1}\mathrm{B}(\beta_{1}/\gamma-\alpha_{1},\alpha_{1}) =  {\frac{\beta_1}{\gamma} \mathrm{B}(\beta_{1}/\gamma-\alpha_{1},\alpha_{1}+1).} \nonumber
\end{align}
 {[Recall the recurrence formula $(x+y)\mathrm{B}(x,y+1)=y \mathrm{B}(x,y)$ valid for any $x,y>0$.]} 

For (ii), let
\[
U_{h_{2}}(x)=\left(  \frac{1}{1-h_{2}(F)}\right)  ^{\leftarrow}(x),\ x>1.
\]
Then $U_{h_{2}}\in \mathrm{RV}_{\gamma/\beta_{2}}$. Again, if $W\sim
\mathrm{Uniform}[0,1]$ then $U_{h_{2}}(1/W)\overset{\mathrm{\,d}}{=}Z\sim
h_{2}(F)$.
Then we have
\[
\mathrm{H}_{u_{2},\,h_{2}}((X-x)_{-})=\int_{-\infty}^{x}u_{2}(x-z)\mathrm{\,d}%
h_{2}(F(y))=\mathbb{E}[u_{2}((Z-x)_{-})].
\]
{Consider the split
\[
\frac{\mathbb{E}[u_{2}((Z-x)_{-})]}{u_{2}%
(x)}=\int_{-\infty}^{x/2} \frac{u_{2}(x-z)}{u_{2}(x)}\mathrm{\,d}h_{2}(F(z)) + \int_{x/2}^{x} \frac{u_{2}(x-z)}{u_{2}(x)} \mathrm{\,d}h_{2}(F(z)):=I_{1}(x)%
+I_{2}(x).
\]
To control $I_{1}(x)$, write 
\[
I_1(x)=\int_{\mathbb{R}} \frac{u_{2}(x-z)}{u_{2}(x)} \one\{z\leq x/2\}\mathrm{\,d}h_{2}(F(z))
\]
where we extend the definition of $u_2$ on $\mathbb{R}$ by deciding that $u_2(y)=0$ for $y<0$. Clearly, since $u_2$ is regularly varying, $(u_{2}(x-z)/u_{2}(x)) \one\{z\leq x/2\}\to 1$ pointwise in $z$ as $x\to\infty$. Pick now $\delta>0$ with $\int_{-\infty}^{\infty
}|z|^{\alpha_{2}+\delta}\mathrm{\,d}h_{2}(F(z))<\infty$. Since $u_{2}\in \mathrm{RV}_{\alpha_{2}}$, Potter bounds yield, for $x$ large enough, 
\[
\frac{u_{2}(x-z)}{u_{2}(x)} \one\{z\leq x/2\}\leq(1+\delta)\left(
\frac{x-z}{x}\right)^{\alpha_{2}+\delta} \one\{z\leq x/2\} \leq C_{\delta} (1+|z|)^{\alpha_{2}+\delta}
\]
where $C_{\delta}$ is a positive constant. This is an integrable function with respect to the measure $\mathrm{\,d}h_{2}(F(z))$, so the dominated convergence theorem yields 
\[
\lim_{x\rightarrow \infty}I_{1}(x)=1.
\]
To control $I_{2}(x)$, note that $u_{2}(x)\geq u_{2}(x-z) \geq 0$ when
$x/2\leq z\leq x$ because $u_2$ is increasing. Therefore
\[
0\leq I_{2}(x) \leq \int_{x/2}^{x}\mathrm{\,d}h_{2}(F(z))=h_{2}(F(x))-h_{2}(F(x/2))
\rightarrow 0
\]
as $x\rightarrow \infty$. Thus,
\[
\lim_{x\rightarrow \infty}\frac{\mathbb{E}[u_{2}((Z-x)_{-})]}{u_{2}(x)}=1.
\]
The desired result follows.}
\end{proof}

\begin{proof}[Proof of Theorem~\ref{theo-first}] {The assertions on the regular variation property of $\varphi$ and the existence of the generalized inverse $\varphi^{\leftarrow}$ are immediate, see for example Definition~B.1.8~p.366 of~\cite{de2006extreme}. Combine Equation~\eqref{eq-eqREDU-2} and the first-order expansions in Lemma~\ref{lm-sides} to get
\begin{equation*}
\Delta_{0}(1-h_{1}(F(x_{\tau})))  u_{1}(x_{\tau})\sim (1-\tau)  u_{2}(x_{\tau}).
\end{equation*}
as $\tau \rightarrow 1$. This is readily seen to be equivalent to $\varphi(x_{\tau}) \sim \Delta_{0} (1-\tau)^{-1}$. When $s>0$, $\varphi^{\leftarrow}\in \mathrm{RV}_{1/s}$, see Proposition~B.1.9.9~p.367 of~\cite{de2006extreme}. It immediately follows, by this same proposition, that
\[
x_{\tau} \sim \varphi^{\leftarrow} ( \varphi(x_{\tau}) ) \sim \varphi^{\leftarrow}(\Delta_{0} (1-\tau)^{-1}) \sim \Delta_{0}^{1/s} \varphi^{\leftarrow}((1-\tau)^{-1})
\]
as $\tau \rightarrow 1$. This is the required result.}
\end{proof}


\begin{proof}[Proof of Lemma~\ref{bd-2RV}]
     {Note} that for any $v>0$,
	\[
	\lim_{t\rightarrow\infty}v^{-\gamma}\frac{\frac{g({vt})}{g(t)}-v^{\gamma}%
	}{B(t)}=	\lim_{t\rightarrow\infty}\frac{(tv)^{-\gamma}g({vt})-t^{-\gamma}g(t)}{t^{-\gamma}%
		g(t)B(t)}=\frac{v^{\rho}-1}{\rho}.
	\]
	This implies that $t^{-\gamma}g(t)\in\mathrm{ERV}_{\rho}$. Since $\rho<0$, by
	Theorem B.2.2 of \cite{de2006extreme}, $g_{0}=\lim_{t\rightarrow\infty}t^{-\gamma}g(t)$
	exists, and $h(t):=g_{0}-t^{-\gamma}g(t)\in\mathrm{RV}_{\rho}$.  {Besides, by} Theorem
	B.2.18 of \cite{de2006extreme}, there exists $\widetilde{B}(t)\sim B(t)$  {(which may be chosen bounded on intervals of the form $(0,t_0]$)} such that
	$t^{-\gamma}g(t)\widetilde{B}(t)=-\rho h(t)$. We have
	\[
	v^{-\gamma}\frac{\frac{g({vt})}{g(t)}-v^{\gamma}}{\widetilde{B}(t)}%
	=\frac{h(t)-h(tv)}{-\rho h(t)}=-\frac{1}{\rho}\left(  1-\frac{h(tv)}%
	{h(t)}\right)  .
	\]
	 {Conclude, by Proposition B.1.9.7 of \cite{de2006extreme}, that for any $\epsilon,\delta>0$, there
	exist $c>0$} and $t_{0}$ such that for all $t\geq t_{0}$  {and $0<v<\delta
	$,}
	\[
	\left\vert \frac{\frac{g({vt})}{g(t)}-v^{\gamma}}{\widetilde{B}(t)}\right\vert
	=-\frac{v^{\gamma}}{\rho}\left\vert \frac{h(tv)}{h(t)}-1\right\vert \leq -\frac{v^{\gamma}}{\rho}\left( 1+\left\vert \frac{h(tv)}{h(t)}\right\vert \right) \leq 
	-\frac{v^{\gamma}}{\rho}(1+c v^{\rho-\epsilon}).
	\]	
	This is the desired result. 
\end{proof}

\begin{proof}
[Proof of Lemma~\ref{lm-39-1}] 
 {Set $c_{h}=bc^{\beta_{1}/\gamma}$. By Lemma \ref{notation} (ii) and (iv), we
have
\begin{align*}
\varphi(x) &  =\frac{u_{2}(x)}{u_{1}(x)(  1-h_{1}(F(x)))  }\\
&  =\frac{a_{2}x^{\alpha_{2}}\left[  1+\frac{1}{\eta_{2}}B_{2}(x)+o(B_{2}%
(x))\right]  }{a_{1}x^{\alpha_{1}}\left[  1+\frac{1}{\eta_{1}}B_{1}%
(x)+o(B_{1}(x))\right] \, c_{h} x^{-\beta_{1}/\gamma}\left[  1+\frac{1}{\rho
_{h}}A_{h}(x)+o(A_{h}(x))\right]  }\\
&  ={\frac{a_{2}}{a_{1}c_{h}}}~x^{s}\left[  1+ \frac{1}{\eta_{2}}%
B_{2}(x)(1+o(1))-\frac{1}{\eta_{1}}B_{1}(x)(1+o(1))-\frac{1}{\rho_{h}}A_{h}(x)(1+o(1))\right]  \\
&  ={\frac{a_{2}}{a_{1}c_{h}}}~x^{s}\left[  1+\left(  \frac{1}{\eta_{2}}%
B_{2}(x)-\frac{1}{\eta_{1}}B_{1}(x)-\frac{1}{\rho_{h}}A_{h}(x)\right)
(1+o(1))\right]  \\
&  =:{\frac{a_{2}}{a_{1}c_{h}}}~x^{s}\left[  1+A^{\ast}(x)(1+o(1))\right]  .
\end{align*}
[In the penultimate line the condition linking $a$, the $b_i$ and $\kappa$ was used to ``merge'' the $o(1)$ terms.] By Lemma \ref{lm-2RV-rep-inv} (ii), we have
\[
\varphi^{\leftarrow}(x)=\left(  {\frac{a_{2}}{c_{h}a_{1}}}\right)
^{-1/s}x^{1/s}\left(  1-\frac{1}{s}A^{\ast}(\varphi^{\leftarrow}%
(x))(1+o(1))\right)  
\]
and $\varphi^{\leftarrow}\in \mathrm{2RV}_{1/s,\eta^{\ast}/s}$, where $\eta^{\ast}=\max \{ \eta_{1},\rho_{h},\eta_{2}\}$. The desired representation of $\varphi^{\leftarrow}(\left(  1-\tau \right)  ^{-1})$ follows.}

 {Then, from the representation of $F^{\leftarrow}(\tau)=U((1-\tau)^{-1})$ in Lemma~\ref{notation}, we have
\begin{align*}
\frac{\varphi^{\leftarrow}(\left(  1-\tau \right)  ^{-1})}{\left(
F^{\leftarrow}(\tau)\right)  ^{1/(\gamma s)}} &  =\frac{c^{\ast}%
(1-\tau)^{-1/s}\left(  1-\frac{1}{s}A^{\ast}(\varphi^{\leftarrow}(\left(
1-\tau \right)  ^{-1}))(1+o(1))\right)  }{\left(  c(1-\tau)^{-\gamma}\left[
1+\frac{1}{\rho}A\left(  \frac{1}{1-\tau}\right)  (1+o(1))\right]  \right)
^{1/(\gamma s)}}\\
&  =\frac{c^{\ast}}{c^{1/(\gamma s)}}\left(  1-\frac{1}{s}A^{\ast}%
(\varphi^{\leftarrow}(\left(  1-\tau \right)  ^{-1}))(1+o(1))-\frac{1}{\gamma s\rho}A\left(  \frac{1}{1-\tau}\right)  (1+o(1))\right).
\end{align*}
It follows that $\varphi^{\leftarrow}(\left(  1-\tau \right)  ^{-1})$ is asymptotically equivalent to $c_{0}(  F^{\leftarrow
}(\tau) )  ^{1/(\gamma s)}$ and then 
\[
\frac{\varphi^{\leftarrow}(\left(  1-\tau \right)  ^{-1})}{\left(
F^{\leftarrow}(\tau)\right)  ^{1/(\gamma s)}}  =c_{0}\left(  1-  \frac{1}{s}A^{\ast}(c_{0}\left(  F^{\leftarrow
}(\tau)\right)  ^{1/(\gamma s)}) (1+o(1)) -\frac{1}{\gamma s\rho}A\left(  \frac
{1}{1-\tau}\right)  (1+o(1))\right)  .
\]
The proof is complete.}
\end{proof}

\begin{proof}[Proof of Lemma~\ref{left2}]
 {Recall (\ref{rewrite}) and write, for $x>0$,}
\begin{align*}
 {\frac{\mathrm{H}_{u_{1},\,h_{1}}((X-x)_{+})}{u_{1}(x)\left(  1-h_{1}%
(F(x))\right)  } - \Delta_0} &  =-\left(  \int_{1}^{\infty}\frac{u_{1}(xy-x)}{u_{1}(x)}%
\mathrm{\,d}\frac{1-h_{1}(F(xy))}{1-h_{1}(F(x))}-\int_{1}^{\infty}\left(
y-1\right)  ^{\alpha_{1}}\mathrm{\,d}y^{-\beta_{1}/\gamma}\right)  \\
&  =-\int_{1}^{\infty} \left( \frac{u_{1}(xy-x)}{u_{1}(x)}-(y-1)
^{\alpha_{1}} \right)\mathrm{\,d}\frac{1-h_{1}(F(xy))}{1-h_{1}(F(x))}\\
&  -\left(  \int_{1}^{\infty}\left(  y-1\right)  ^{\alpha_{1}}\mathrm{\,d}%
\frac{1-h_{1}(F(xy))}{1-h_{1}(F(x))}-\int_{1}^{\infty}\left(  y-1\right)
^{\alpha_{1}}\mathrm{\,d}y^{-\beta_{1}/\gamma}\right)  \\
&  =-\int_{1}^{\infty} \left( \frac{u_{1}(xy-x)}{u_{1}(x)}-(y-1)
^{\alpha_{1}} \right) \mathrm{\,d}\frac{1-h_{1}(F(xy))}{1-h_{1}(F(x))}\\
&  +\int_{1}^{\infty} \left( \frac{1-h_{1}(F(xy))}{1-h_{1}(F(x))}-y^{-\beta_{1}%
/\gamma} \right) \mathrm{\,d}\left(  y-1\right)  ^{\alpha_{1}}\\
&  :=-\int_{1}^{\infty}  {I_{1}(x,y)} \mathrm{\,d}\frac{1-h_{1}(F(xy))}{1-h_{1}%
(F(x))}+\int_{1}^{\infty}  {I_{2}(x,y)} \mathrm{\,d}\left(  y-1\right)  ^{\alpha_{1}}.
\end{align*}
where in the third step we used integration by parts.

We first analyze  {$I_{1}(x,y)$}. Since $u_{1}\in2\mathrm{RV}_{\alpha_{1},\eta_{1}}$
with auxiliary function $B_{1}$,  {there is $\tilde{B}_{1}\sim B_{1}$ such that 
for any (henceforth fixed) $\varepsilon,\delta>0$, there is $x_{0}>0$} such that the following
inequality holds for all $x>x_{0}$ and $xy>x_{0}$,
\begin{equation*}
\left \vert \frac{\frac{u_{1}(x(y-1))}{u_{1}(x)}-(y-1)^{\alpha_{1}}}{\tilde
{B}_{1}(x)}-J_{\alpha_{1},\eta_{1}}(y-1)\right \vert \leq 
\varepsilon(y-1)^{\alpha_{1}+\eta_{1}\pm \delta},
\end{equation*}
where $y^{\alpha \pm \delta}=y^{\alpha}\max(y^{\delta},y^{-\delta})$  {(and recall that $J_{\gamma,\rho}(x)=x^{\gamma}%
	\frac{x^{\rho}-1}{\rho}$). In particular, if $\varepsilon_0\in (0,1)$ is fixed, then for $x$ large enough, 
\[
\forall y\geq 1+\varepsilon_0, \ J_{\alpha_{1},\eta_{1}}(y-1)-\varepsilon
(y-1)^{\alpha_{1}+\eta_{1}\pm \delta}\leq \frac{I_{1}(x,y)}{\tilde{B}_{1}(x)}\leq J_{\alpha_{1},\eta_{1}}(y-1)+\varepsilon
(y-1)^{\alpha_{1}+\eta_{1}\pm \delta}.
\]
By} integration by parts and since $1-h_{1}(F(\cdot))\in \mathrm{RV}_{-\beta
_{1}/\gamma}$,  {a $\limsup/\liminf$ argument similar to that used in Lemma~\ref{lm-sides} yields,}
\[
 {\lim_{x\rightarrow \infty}\frac{-\int_{1+\varepsilon_{0}}^{\infty}I_{1}(x,y)\mathrm{\,d}\frac{1-h_{1}(F(xy))}{1-h_{1}(F(x))}}{\tilde
{B}_{1}(x)}} = J_{\alpha_{1},\eta_{1}}(\varepsilon_{0})(1+\varepsilon
_{0})^{-\beta_1/\gamma}+\int_{1+\varepsilon_{0}}^{\infty}y^{-\beta_{1}/\gamma
}\mathrm{\,d}J_{\alpha_{1},\eta_{1}}(y-1).
\]
 {Besides, by Lemma \ref{bd-2RV}, there is a constant $C=C(\varepsilon)>0$ such that for $x$ large enough,}
\[
 {1<y<1+\varepsilon_{0} \Rightarrow \left \vert \frac{I_{1}(x,y)}{\tilde{B}_{1}(x)}\right \vert} \leq-\frac
{(y-1)^{\alpha_{1}}}{\eta_{1}}\left(  1+C(y-1)^{\eta_{1}-\varepsilon}\right)
.
\]
Then, as in the proof of Lemma~\ref{lm-sides}, one finds  
\begin{multline*}
     {\limsup_{x\rightarrow \infty}\left \vert \frac{\int_{1}^{1+\varepsilon_{0}}%
I_{1}(x,y)\mathrm{\,d}\frac{1-h_{1}(F(xy))}{1-h_{1}(F(x))}}{\tilde{B}_{1}%
(x)}\right \vert} \\
    \leq -\frac{\varepsilon_{0}^{\alpha_{1}}}{\eta_{1}}\left(
1+C\varepsilon_{0}^{\eta_{1}-\varepsilon}\right)  {(1+\varepsilon
_{0})^{-\beta_1/\gamma}}  -\int_{1}^{1+\varepsilon
_{0}}y^{-\beta_{1}/\gamma}\mathrm{\,d} \left\{ \frac{(y-1)^{\alpha_{1}}}{\eta_{1}%
}\left(  1+C(y-1)^{\eta_{1}-\varepsilon}\right) \right\}  .
\end{multline*}
 {Recall that $\alpha_1+\eta_1>0$, so that the right-hand side above is well-defined and finite for $\varepsilon>0$ small enough, and tends to 0 as $\varepsilon_0\to 0$. Adding up the contributions from $1$ to $1+\varepsilon_0$ and beyond $1+\varepsilon_0$, and letting $\varepsilon_{0}\rightarrow0$, we get
\begin{align*}
\lim_{x\rightarrow \infty}\frac{-\int_{1}^{\infty}I_{1}(x,y)\mathrm{\,d}%
\frac{1-h_{1}(F(xy))}{1-h_{1}(F(x))}}{\tilde{B}_{1}(x)}  &=\int_{1}^{\infty
}y^{-\beta_{1}/\gamma}\mathrm{\,d}J_{\alpha_{1},\eta_{1}}(y-1)\\ 
& = \frac{1}{\eta_1} \int_0^1 u^{\beta_1/\gamma} \left( (\alpha_1+\eta_1) (u^{-1}-1)^{\alpha_1+\eta_1-1} - \alpha_1 (u^{-1}-1)^{\alpha_1-1} \right) \frac{\mathrm{d}u}{u^2} \\ 
& = \frac{1}{\eta_1} \left( (\alpha_1+\eta_1) \mathrm{B}(\beta_1/\gamma-\alpha_1-\eta_1,\alpha_1+\eta_1) - \alpha_1 \mathrm{B}(\beta_1/\gamma-\alpha_1,\alpha_1) \right) \\
& = \frac{\beta_1}{\gamma} \times \frac{1}{\eta_1} ( \mathrm{B}(\beta_1/\gamma-\alpha_1-\eta_1,\alpha_1+\eta_1+1) - \mathrm{B}(\beta_1/\gamma-\alpha_1,\alpha_1+1) ).
\end{align*}
We turn to controlling $I_{2}(x,y)$. By Lemma \ref{notation} (iv), 
\[
\forall y>0, \ \lim_{x\rightarrow \infty}\frac{I_{2}(x,y)}{A_{h}(x)} = J_{-\beta_{1}/\gamma,\rho_{h}}(y)
\]
and for any $\delta>0$, there exist $\tilde{A}_{h}\sim A_{h}$ and $x_{0}>0$ such that for all $x>x_{0}$ and $y\geq 1$,
\[
\left \vert \frac{I_{2}(x,y)}{\tilde{A}_{h}(x)}\right \vert (y-1)^{\alpha_1-1} \leq (y-1)^{\alpha_1-1} (J_{-\beta_{1}/\gamma,\rho_{h}%
}(y) + y^{-\beta_{1}/\gamma+\rho_{h}+\delta}).
\]
If $\delta>0$ is chosen sufficiently small then the right-hand side defines an integrable function on $(1,\infty)$. The dominated convergence theorem then entails}
\begin{align*}
\lim_{x\rightarrow \infty}\frac{\int_{1}^{\infty}  {I_{2}(x,y)} \mathrm{\,d}\left(
y-1\right)  ^{\alpha_{1}}}{\tilde{A}_{h}(x)} &  =\int_{1}^{\infty}%
J_{-\beta_{1}/\gamma,\rho_{h}}(y)\mathrm{\,d}\left(  y-1\right)  ^{\alpha_{1}%
}\\
&  =\frac{\alpha_{1}}{\rho_{h}}\int_{0}^{1}\left(  v^{-\rho_{h}}-1\right)
(1-v)^{\alpha_1-1}v^{\beta_{1}/\gamma-\alpha_1-1}\mathrm{\,d}v \\
&=  {\frac{\alpha_{1}}{\rho_{h}} \left( \mathrm{B}(\beta_1/\gamma-\alpha_1-\rho_h,\alpha_1) - \mathrm{B}(\beta_1/\gamma-\alpha_1,\alpha_1) \right)} \\
&=  {\frac{1}{\rho_h} \left( \left( \frac{\beta_1}{\gamma} - \rho_h \right) \mathrm{B}(\beta_1/\gamma-\alpha_1-\rho_h,\alpha_1+1) - \frac{\beta_1}{\gamma} \mathrm{B}(\beta_1/\gamma-\alpha_1,\alpha_1+1) \right).}
\end{align*}	
 {The proof is complete.}
\end{proof}

\begin{proof}[Proof of Lemma~\ref{Right2}]  {Recall from the proof of Lemma~\ref{lm-sides} that if $Z\sim
h_{2}(F)$, 
\[
\mathrm{H}_{u_{2},\,h_{2}}((X-x)_{-})=\int_{-\infty}^{x}u_{2}(x-z)\mathrm{\,d}%
h_{2}(F(z))=\mathbb{E}[u_{2}((Z-x)_{-})].
\]
Note now} that for any $z<x$,
\begin{equation}
\frac{u_{2}(x)-u_{2}(x-z)  }{u_{2}^{\prime}(x)}=\frac
{u_{2}^{\prime}(\xi)}{u_{2}^{\prime}(x)}z,\label{eq-191005-2}%
\end{equation}
where $\xi \in(x-z,x)$ if $0\leq z<x$ and $\xi \in(x,x-z)$ if $z<0$.
Also, from \eqref{eq-191005-2},
\[
\lim_{x\rightarrow \infty}\frac{u_{2}(x)-u_{2}(x-z)  }%
{u_{2}^{\prime}(x)}=z
\]
holds for any $z\in \mathbb{R}$,  {because regular variation is locally uniform. Since 
\[
\mathrm{H}_{u_{2},\,h_{2}}((X-x)_{-}) = u_2(x) - u_2(x) (1-h_2(F(x)) - u_2^{\prime}(x)\int_{-\infty}^{x} \frac{u_2(x) - u_{2}(x-z)}{u_2^{\prime}(x)} \mathrm{\,d}%
h_{2}(F(z)), 
\]
we} are left to show that
\begin{equation}
 {\lim_{x\rightarrow \infty}\int_{-\infty}^x \frac{u_{2}(x)-u_{2}(  x-z)  }%
{u_{2}^{\prime}(x)}\mathrm{\,d}h_{2}%
(F(z))=\int_{-\infty}^{\infty}z\mathrm{\,d}h_{2}%
(F(z))},\label{dct}%
\end{equation}
that is, the integral and the limit are interchangeable.  {By Proposition B.1.9.6 of \cite{de2006extreme}, there exist $C>0$}, $x_{0}>0$ such that
for $x\geq x_{0}$, $0<\xi/x\leq1$,
\begin{equation}
\frac{u_{2}^{\prime}(\xi)}{u_{2}^{\prime}(x)}\leq  {C}.\label{ineq-Right2}%
\end{equation}
 {Note that (\ref{ineq-Right2}) holds for the case of $\alpha_{2}=1$ since in this case
by assumption $u_{2}^{\prime}$ is nondecreasing. Moreover, by Proposition B.1.9.5 of \cite{de2006extreme}, for any $\delta>0$, there is $x_1>0$ such that for $x\geq x_1$ and $\xi/x\geq 1$,
\[
\frac{u_{2}^{\prime}(\xi)}{u_{2}^{\prime}(x)} \leq 2\left(
\frac{\xi}{x}\right)^{\alpha_{2}-1+\delta}.
\]
Conclude that, with $\xi$ as in~\eqref{eq-191005-2}, that for $x$ large enough, 
\begin{align*}
\forall z<x, \ \left| \frac{u_{2}^{\prime}(\xi)}{u_{2}^{\prime}(x)} z \right| &\leq \left\{ C \mathbbm{1}\{ 0\leq z<x \} + 2 \left(  \frac{x-z}{x}\right)^{\alpha_{2}-1+\delta} \mathbbm{1}\{ z<0 \} \right\} |z| \\
    &\leq \left\{ C + 2 (1-z)^{\alpha_{2}-1+\delta} \mathbbm{1}\{ z<0 \}  \right\} |z|.
\end{align*}
By the assumption that $\int_{-\infty}^{\infty
}|z|^{\alpha_{2}+\delta}\mathrm{\,d}h_{2}(F(z))<\infty$ for some $\delta>0$ 
and the dominated convergence theorem,~(\ref{dct}) holds and therefore, as $x\rightarrow \infty$,
\[
\frac{\mathrm{H}_{u_{2},\,h_{2}}((X-x)_{-})}{u_{2}(x)}=1-(1-h_2(F(x)))-\frac{u_{2}^{\prime}(x)}{u_{2}(x)}(\mathbb{E}[Z]+o(1)). 
\]
The final identity is obtained by applying Theorem~B.1.5 in~\cite{de2006extreme}.}
\end{proof}

\begin{proof}
[Proof of Theorem~\ref{main} and Theorem~\ref{complete}]  {Combining Equation~\eqref{eq-eqREDU-2} with} Lemmas \ref{left2} and \ref{Right2},
the shortfall risk measure $x_{\tau}$ satisfies 
\begin{align*}
&  \left(  1-h_{1}(F(x_{\tau}))\right)  u_{1}(x_{\tau})\left(  \Delta
_{0}+\Gamma_{1}B_{1}(x_{\tau})(1+o(1))+\Gamma_{2}A_{h}(x_{\tau}%
)(1+o(1))\right) \\
&  =(1-\tau)u_{2}(x_{\tau})\left(  1- {(1-h_{2}(F(x_{\tau})))}-x_{\tau}%
^{-1}(\alpha_{2}\mathbb{E}[Z]+o(1))+(1-\tau)(1+o(1))\right)  ,
\end{align*}
where $Z$ is a random variable having the distribution $h_{2}(F)$. After some
rearrangements, taking $\varphi^{\leftarrow}$ on both sides above yields
\begin{equation}
\varphi^{\leftarrow}\left(  \varphi(x_{\tau})\frac{1- {(1-h_{2}(F(x_{\tau})))}-x_{\tau}^{-1}(\alpha_{2}\mathbb{E}[Z]+o(1))+(1-\tau)(1+o(1))}{\Delta
_{0}+\Gamma_{1}B_{1}(x_{\tau})(1+o(1))+\Gamma_{2}A_{h}(x_{\tau})(1+o(1))}%
\right)  =\varphi^{\leftarrow}(\left(  1-\tau \right)  ^{-1}). \label{lhs}%
\end{equation}
The left-hand side in Equation~\eqref{lhs} can be further rewritten as
\begin{multline*}
\varphi^{\leftarrow}\left(  \Delta_{0}^{-1}\varphi(x_{\tau})\left[  \left(
1-\frac{\Gamma_{1}}{\Delta_{0}}B_{1}(x_{\tau})(1+o(1))-\frac{\Gamma_{2}%
}{\Delta_{0}}A_{h}(x_{\tau})(1+o(1))\right.  \right.  \right. \\
\left.  \left.  \left.  - {(1-h_{2}(F(x_{\tau})))(1+o(1))}-\alpha_{2}\mathbb{E}%
[Z]\frac{1}{x_{\tau}}(1+o(1))+(1-\tau)(1+o(1))\right)  \right]  \right)  .
\end{multline*}
Then by Lemma \ref{lm-2RV-rep-inv} and Lemma \ref{lm-39-1}, with some
calculations we obtain
\begin{align*}
&  \frac{\varphi^{\leftarrow}((1-\tau)^{-1})}{x_{\tau}}=\frac{\varphi
^{\leftarrow}((1-\tau)^{-1})}{\varphi^{\leftarrow}(\varphi(x_{\tau
}))(1+o(A^{\ast}(x_{\tau})))}\\
&  =\Delta_{0}^{-1/s}\left(  1-\left(  \frac{\Gamma_{1}}{s\Delta_{0}}%
B_{1}(x_{\tau})(1+o(1))+\frac{\Gamma_{2}}{s\Delta_{0}}A_{h}(x_{\tau
})(1+o(1))+ {\frac{1}{s} (1-h_{2}(F(x_{\tau}))(1+o(1))} \right.  \right. \\
& \qquad \qquad \qquad \qquad \left.  \left. +\frac{\alpha_{2}\mathbb{E}[Z]}%
{s}\frac{1}{x_{\tau}}(1+o(1)) -\frac{1-{\Delta_{0}^{-\eta^{\ast
}/s}}}{s}A^{\ast}(x_{\tau})(1+o(1))-\frac{1}{s}(1-\tau)(1+o(1))\right)
\right)  .
\end{align*}
Applying Theorem \ref{theo-first}, we have $x_{\tau}\sim \Delta_{0}%
^{1/s}\varphi^{\leftarrow}((1-\tau)^{-1})$, and therefore $B_{1}(x_{\tau}%
)\sim \Delta_{0}^{\eta_{1}/s}B_{1}(\varphi^{\leftarrow}((1-\tau)^{-1})),$
$A_{h}(x_{\tau})\sim \Delta_{0}^{\rho_{h}/s}A_{h}(\varphi^{\leftarrow}%
((1-\tau)^{-1}))$,  {$1-h_{2}(F(x_{\tau}))\sim \Delta_{0}^{-\beta_{2}/(\gamma
s)}(  1-h_{2}(F(\varphi^{\leftarrow}((1-\tau)^{-1}))) )  $} and
$A^{\ast}(x_{\tau})\sim \Delta_{0}^{\eta^{\ast}/s}A^{\ast}(\varphi^{\leftarrow
}((1-\tau)^{-1})).$ The result of Theorem~\ref{main} follows.
Theorem~\ref{complete} is then obtained by applying Lemma~\ref{lm-39-1}.
\end{proof}

\begin{proof}[Proof of Theorem~\ref{theo-weissman}]  {By Lemma~\ref{lm-2RV-rep-inv}(ii), $1/(1-h_1^{-1}(1-1/\cdot))$, the inverse of $1/(h_1(1-1/\cdot))$, is $2\mathrm{RV}_{1,\varsigma}$, and therefore, by Lemma~\ref{lm-2RV-rep-inv}(i), 
\[
\frac{1-\tau_n}{1-h_1^{-1}(\tau_n)} = \frac{1-\tau_n}{1-h_1^{-1}(1-1/(1-\tau_n)^{-1})} \to K\in (0,\infty) \mbox{ as } n\to\infty. 
\]
We then break down $\log(\widehat{x}_{\tau_n}/x_{\tau_n})$ in the following fashion: 
\begin{multline*}
\log\frac{\widehat{x}_{\tau_n}}{x_{\tau_n}} = \log\left( \frac{1}{q_{\tau_n}} \left( \frac{k_n}{n(1-\tau_n)} \right)^{\widehat{\gamma}_n} X_{n-k_n,n} \right) + \log \frac{\Psi(\widehat{\gamma}_n)}{\Psi(\gamma)} + (\widehat{\gamma}_n - \gamma) \log\left( \frac{1-\tau_n}{1-h_1^{-1}(\tau_n)} \right) \\
    + \log\left( \left( \frac{1-\tau_n}{1-h_1^{-1}(\tau_n)}\right)^{\gamma} \frac{q_{\tau_n}}{q_{h_1^{-1}(\tau_n)}} \right) + \log\left( \left( \frac{1}{\gamma} \mathrm{B}(1/\gamma-\alpha,\alpha+1) \right)^{\gamma} \frac{q_{h_1^{-1}(\tau_n)}}{x_{\tau_n}} \right) 
\end{multline*}
with $\Psi(\gamma) = \left( \frac{1}{\gamma} \mathrm{B}(1/\gamma-\alpha,\alpha+1) \right)^{\gamma}$, a continuously differentiable function on the positive half-line. Now, 
\[
\frac{\sqrt{k_n}}{\log(k_n/(n(1-\tau_n)))} \log\left( \frac{1}{q_{\tau_n}} \left( \frac{k_n}{n(1-\tau_n)} \right)^{\widehat{\gamma}_n} X_{n-k_n,n} \right) \stackrel{d}{\longrightarrow} N
\]
by Theorem~4.3.8 p.138 of~\cite{de2006extreme}. It only remains to show that the four other terms in the above decomposition of $\log(\widehat{x}_{\tau_n}/x_{\tau_n})$ are asymptotically negligible. We start by writing  
\[
\frac{\sqrt{k_n}}{\log(k_n/(n(1-\tau_n)))} \log \frac{\Psi(\widehat{\gamma}_n)}{\Psi(\gamma)} = o_{\mathbb{P}}\left( \sqrt{k_n} \log \frac{\Psi(\widehat{\gamma}_n)}{\Psi(\gamma)} \right) = o_{\mathbb{P}}(1)
\]
by the delta-method. Likewise, 
\[
\frac{\sqrt{k_n}}{\log(k_n/(n(1-\tau_n)))} (\widehat{\gamma}_n - \gamma) \log\left( \frac{1-\tau_n}{1-h_1^{-1}(\tau_n)} \right) = O_{\mathbb{P}}\left( \frac{\sqrt{k_n}}{\log(k_n/(n(1-\tau_n)))} (\widehat{\gamma}_n - \gamma) \right) = o_{\mathbb{P}}(1).
\]
The final two terms in the decomposition of $\log(\widehat{x}_{\tau_n}/x_{\tau_n})$ are bias terms. First of all, using assumption $U\in 2\mathrm{RV}_{\gamma,\rho}$, 
\[
\log\left( \left( \frac{1-\tau_n}{1-h_1^{-1}(\tau_n)}\right)^{\gamma} \frac{q_{\tau_n}}{q_{h_1^{-1}(\tau_n)}} \right) = O( A((1-\tau_n)^{-1}) ) = o(A(n/k_n)) 
\]
so that 
\[
\frac{\sqrt{k_n}}{\log(k_n/(n(1-\tau_n)))} \log\left( \left( \frac{1-\tau_n}{1-h_1^{-1}(\tau_n)}\right)^{\gamma} \frac{q_{\tau_n}}{q_{h_1^{-1}(\tau_n)}} \right) = o(1).
\]
Finally, since $q_{h_1^{-1}(\tau_n)}=\varphi^{\leftarrow}((1-\tau_n)^{-1})$, 
\[
\frac{\sqrt{k_n}}{\log(k_n/(n(1-\tau_n)))} \log\left( \left( \frac{1}{\gamma} \mathrm{B}(\alpha,1/\gamma-\alpha+1) \right)^{\gamma} \frac{q_{h_1^{-1}(\tau_n)}}{x_{\tau_n}} \right) = o(1)
\]
by Corollary~\ref{coro:stat} and the assumption $\sqrt{k_n} (k_n/n+|A(n/k_n)|+|B(q_{1-k_n/n})|+|C(n/k_n)|+1/q_{1-k_n/n}) = O(1)$. The proof is complete.}
\end{proof}

\section*{Acknowledgments}

The authors gratefully acknowledge two anonymous referees for their helpful comments which resulted in a substantially improved version of this article, as well as Antoine Usseglio-Carleve for pointing us to the {\tt cubature} package in R for numerical integration tasks. T.~Mao gratefully acknowledges the financial support from Anhui Natural Science Foundation (No.~2208085MA07) and the National Natural Science Foundation of China (No.~71921001). G.~Stupfler gratefully acknowledges support from the French ANR (grant ANR-19-CE40-0013, the EUR CHESS under grant ANR-17-EURE-0010, and the Centre Henri Lebesgue under grant ANR-11-LABX-0020-01), from an AXA Research Fund Award on ``Mitigating risk in the wake of the COVID-19 pandemic'', and from the TSE‐HEC ACPR Chair. F.~Yang gratefully acknowledges financial support from the Natural Sciences and Engineering Research Council of Canada (Grant Number: 04242).

\bibliographystyle{apalike}
\bibliography{reference}

\end{document}